\documentclass{article}
\usepackage[letterpaper, margin=1in]{geometry}

\usepackage[T1]{fontenc}
\usepackage{lmodern}

\usepackage{adjustbox}
\usepackage{comment}
\usepackage{soul}
\usepackage{graphicx}
\usepackage{amsmath}
\usepackage{amsfonts}
\usepackage{amsthm}
\usepackage{amssymb}
\usepackage[inline]{enumitem}
\usepackage{mathtools}
\usepackage{xcolor}
\usepackage{multirow}

\usepackage{titlesec}
\titlespacing*{\paragraph}
{0pt}{1.5ex plus 1ex minus .2ex}{1em}

\usepackage{subcaption}
\usepackage{bbold}
\usepackage{bm}

\usepackage{tikz}
\usetikzlibrary{arrows}
\usetikzlibrary{backgrounds}
\usetikzlibrary{calc}
\usetikzlibrary{decorations.pathreplacing}
\usetikzlibrary{intersections}
\usetikzlibrary{fit}
\usetikzlibrary{patterns}
\usetikzlibrary{positioning}
\usetikzlibrary{shapes}
\usetikzlibrary{shadows}
\usetikzlibrary{shapes.geometric}
\usetikzlibrary{tikzmark}

\usepackage{hyperref}
\hypersetup{
    colorlinks,
    linkcolor={blue},
    citecolor={black},
    urlcolor={blue}}

\usepackage{doi}
\usepackage[backend=biber,style=alphabetic,maxbibnames=6,natbib]{biblatex}

\usepackage{zref-clever}
\NewDocumentCommand{\cref}{m}{\zcref{#1}}
\NewDocumentCommand{\Cref}{m}{\zcref[cap=true]{#1}}

\zcsetup{nameinlink=false}
\zcLanguageSetup{english}{cap}
\zcRefTypeSetup{equation}{
    Name-sg=Equation,
    name-sg=,
    Name-pl=Equations,
    name-pl=,
    Name-sg-ab=Eq.,
    name-sg-ab=,
    Name-pl-ab=Eqs.,
    name-pl-ab=,
    abbrev=true,
    nocap,
}

\newcommand*{\defeq}{\coloneq}

\usepackage{algorithm}
\usepackage[indLines=true,noEnd=true,commentColor=darkgray,endLComment={}]{algpseudocodex}
\algnewcommand{\IfThenElse}[3]{\algorithmicif\ #1\ \algorithmicthen\ #2\ \algorithmicelse\ #3}
\makeatletter\AddToHook{env/algorithmic/begin}{\zcsetup{currentcounter=ALG@line}}\makeatother
\zcsetup{countertype={ALG@line=line}}
\zcRefTypeSetup{line}{nocap}

\usepackage{keytheorems}
\newkeytheorem{theorem}[style=plain,numberwithin=section]
\newkeytheorem{lemma}[style=plain,sibling=theorem]
\newkeytheorem{proposition}[style=plain,sibling=theorem]
\newkeytheorem{corollary}[style=plain,sibling=theorem]
\newkeytheorem{definition}[style=definition,sibling=theorem]
\newkeytheorem{remark}[style=definition,sibling=theorem,numbered=false]
\newkeytheorem{notation}[style=definition,sibling=theorem]
\newkeytheorem{observation}[style=definition,sibling=theorem]

\makeatletter
\newcommand{\zcBookmarkNumber}[1]{%
  \zref@extractdefault{#1}{default}{??}%
}
\newcommand{\zcBookmarkTypeName}[1]{%
  \@ifundefined{zcBookmarkType@\zref@extractdefault{#1}{zc@type}{}}%
    {\zref@extractdefault{#1}{zc@type}{??}}%
    {\@nameuse{zcBookmarkType@\zref@extractdefault{#1}{zc@type}{}}}%
}
\newcommand{\zcBookmarkType@proposition}{Proposition}
\newcommand{\zcBookmarkType@theorem}{Theorem}
\newcommand{\zcBookmarkType@lemma}{Lemma}
\newcommand{\zcBookmarkType@corollary}{Corollary}
\newcommand{\zcBookmarkType@definition}{Definition}
\newcommand{\zcBookmarkType@notation}{Notation}
\newcommand{\zcBookmarkType@observation}{Observation}
\newcommand{\titlezcref}[1]{%
  \texorpdfstring
    {\zcref[S]{#1}}%
    {\zcBookmarkTypeName{#1}~\zcBookmarkNumber{#1}}%
}
\makeatother

\DeclareMathOperator{\poly}{poly}

\NewDocumentCommand{\F}{sO{F}m m}{\IfBooleanTF{#1}{\widetilde{#2}}{#2}\!\left(#3,#4\right)}
\NewDocumentCommand{\Fl}{O{F}m m}{\F[#1^{-}]{#2}{#3}}
\NewDocumentCommand{\Fu}{O{F}m m}{\F[#1^{+}]{#2}{#3}}
\NewDocumentCommand{\Fp}{O{F}m m}{\F[#1^{\pm}]{#2}{#3}}
\NewDocumentCommand{\CF}{O{F}m m}{{\mathcal C}_{#1}\!\left(#2,#3\right)}

\usepackage{etoolbox}
  \makeatletter
    \patchcmd{\maketitle}{\@fnsymbol}{\@arabic}{}{}
  \makeatother

\DeclarePairedDelimiter{\set}{\lbrace}{\rbrace}
\DeclarePairedDelimiter{\ceil}{\lceil}{\rceil}
\DeclarePairedDelimiter{\floor}{\lfloor}{\rfloor}
\DeclarePairedDelimiter{\abs}{\lvert}{\rvert}

\newcommand{\Nat}{\mathbb{N}}

\newcommand{\Real}{\mathbb{R}}
\newcommand{\Rat}{\mathbb{Q}}

\newcommand{\Uniform}{\mathrm{Unif}}

\DeclarePairedDelimiter{\brak}{[}{]}
\NewDocumentCommand{\expect}{O{}m}{\mathbb{E}\IfBlankF{#1}{_{#1}}\brak*{#2}}

\usepackage[most]{tcolorbox}
\makeatletter
\newtcolorbox{nicebox}[1][]{%
  colback=white,          
  colframe=black,         
  after app={\@endparenv},
  #1
}
\makeatother

\newcommand{\varlo}{X^-}
\newcommand{\varhi}{X^+}
\newcommand{\varmid}{X^{\text{mid}}}

\title{Online Random Sampling with Real Probabilities}

\begin{document}

\author{%
  Thomas L.~Draper\thanks{Carnegie Mellon University. Email: \textsf{\{\href{mailto:tdraper@cmu.edu}{tdraper},\href{mailto:fsaad@cmu.edu}{fsaad}\}@cmu.edu}} \and
  David G.~Harris\thanks{University of Maryland. Email: \textsf{\href{mailto:davidgharris29@gmail.com}{davidgharris29}@gmail.com}} \and
  Feras A.~Saad\footnotemark[1]}

\maketitle

\vspace{-.5cm}

\begin{abstract}
We develop an efficient online algorithm to sample a
sequence of discrete random variables
using an entropy source of i.i.d.~fair coin flips,
in a standard model of real computation where real-valued
probabilities are represented by rational approximations.
For any sequence $F_1, F_2, \dots$ of probability distributions, our algorithm
generates $n$ outputs $X_1 \sim F_1, \dots, X_n \sim F_n$ using at
most $\expect{H(F_1) +\dots + H(F_n)} + O(\log n)$ coin flips in
expectation while carrying $O(\log n)$ bits of persistent space,
where $H$ is the Shannon entropy.
Under standard assumptions, we prove that our algorithm achieves
this information-theoretically optimal entropy rate using
asymptotically optimal space.

The key idea is to replace the global arithmetic-decoding sampling scheme
of \citeauthor{han1997} (\citeyear{han1997}) with a local discrete uniform
state, yielding an exponential reduction in space for a given entropy loss.
Our approach applies to distributions with irrational probabilities and
countably infinite supports, generalizing recent randomness-recycling
methods beyond finite rational distributions with bounded denominator.
\end{abstract}

\pagenumbering{arabic}

\section{Introduction}

A fundamental problem in theoretical and applied computer science is
to convert a sequence of random symbols drawn from an ``input'' stream into
a sequence of random symbols that form the ``output'' stream.
In \textit{random sampling}, the input stream is a sequence $B_1, B_2,
\dots$ of i.i.d.~fair coin flips and the output stream is a sequence
$X_1, X_2, \dots$ of random variables~\citep{knuth1976}.
This operation is used in almost all computer
simulations, such as algorithms for stochastic optimization or numerical
integration that demand many random variables drawn from
a dynamic sequence of target distributions.

The most flexible workflow for applications is \textit{online} random sampling,
where each target distribution
$F_i$ for $X_i$ is revealed at stage $i$ of a sampling loop,
as shown in the following high-level pseudocode.
\begin{nicebox}[frame empty,before skip=0pt]
\begin{algorithmic}
\State \textbf{Input:} Sequence $B_1, B_2, \dots$ of i.i.d.~coin flips
\While{true}
\State \quad \textbf{Receive} target distribution $F$
  \Comment{may depend on previous outputs or exogenous randomness}
\State \quad \textbf{Emit} random variable $X \sim F$
\EndWhile
\end{algorithmic}
\end{nicebox}
At iteration $i$ of the sampling loop, the algorithm must produce exact samples
$X_i \sim F_i$, even if conditioned on previous outputs $X_1,\ldots,X_{i-1}$.
The total number of requests may not be known in advance and may be unbounded.
The algorithm may maintain some private internal state from round-to-round,
and the rule for selecting each $F_i$ must be independent of this private
state conditioned on $X_1, \ldots, X_{i-1}$.

Any algorithm that converts input coin tosses to other output variables
typically incurs a loss of information, which is a direct consequence of
the data processing inequality.
More specifically, the Shannon entropy contained in the output variables is
at most that of the consumed input coin flips.
For example, consider rolling a fair six-sided die by
first flipping three fair coins and then applying the mapping
\begin{equation}
\set*{000 \mapsto 1, \quad
001 \mapsto 2, \quad
010 \mapsto 3, \quad
011 \mapsto 4, \quad
100 \mapsto 5, \quad
101 \mapsto 6, \quad
110 \mapsto \mbox{R}, \quad
111 \mapsto \mbox{R}},
\end{equation}
where 110 and 111 invoke a ``reject and repeat'' loop.
It takes 4/3 trials to halt on average and each trial
uses 3 bits of entropy, for a total of 4 input bits.
The output roll carries $\lg 6$ bits of entropy, so the expected entropy
loss is $4 - \lg 6 \approx 1.4150$ bits.
A more efficient algorithm is to recycle an unused bit when rejecting:
\begin{equation}
\set*{000 \mapsto 1, \quad
001 \mapsto 2, \quad
010 \mapsto 3, \quad
011 \mapsto 4, \quad
100 \mapsto 5, \quad
101 \mapsto 6, \quad
110 \mapsto \mbox{R0}, \quad
111 \mapsto \mbox{R1}}.
\end{equation}

After rejecting with $110$ or $111$, the final bit can be reused,
so that the next loop iteration only needs to draw \textit{two} fresh bits.
Now only $11/3$ input flips are used on average, which saves 1/3 bit.
As millions or even billions of outputs are generated, small differences in
entropy consumption can accumulate into large differences in runtime.
This overhead is most prominent in applications where the entropy source is
expensive, such as cryptographic protocols requiring high-quality random
bits~\citep{rfc4086}.
We thus view the entropy loss as a fundamental quantity to be minimized, along
with space and runtime.

\paragraph{Entropy Cost.}
Formally, the $n$-step entropy loss of an online random sampling algorithm
is the difference between the expected
Shannon entropy of the input coin flips and the output symbols:
\begin{align}
L_n \defeq \expect{ T_n - \sum_{i=1}^n H(F_i) } \ge 0
&& (n \ge 1),
\label{eq:entropy-loss}
\end{align}
where $T_n$ is the random number of coin flips (each carrying 1 bit of
information) used to produce the $n$ outputs and
$H(F) \defeq -\sum_{i} f(i) \lg f(i)$ is the Shannon entropy
of a discrete cumulative distribution function (CDF)
$F$ with probability mass function (PMF)
$f(i) \defeq F(i) - F(i-1)$.
We aim to minimize $\expect{T_n}$, and thus $L_n$, while retaining
efficient space and runtime.
%

\paragraph{Real-Valued Probabilities.}
Many random sampling algorithms assume the Real RAM model.
For example, the standard treatment in the statistical literature---where
random sampling is often called ``random variate generation''---assumes
that arithmetic operations on real numbers take $O(1)$
time~\citep[\S1.1; Assumptions I--III]{devroye1986}.
In this setting, it is trivial to achieve $L_n < 3$ via arithmetic decoding
of the input bit stream~\citep{han1997},
nearly matching the optimal entropy loss $L_n < 2$ given in \citet{knuth1976}.

We aim to develop algorithms that handle distributions with arbitrary
(irrational) probabilities in a more realistic model of computation than
Real RAM.
Following the standard model~\citep[\S2]{ko1991},
we access each target distribution $F$
via an approximation oracle $\phi$ for its CDF,
such that $\abs{F(x) - \phi(x,k)} < 2^{-k}$ for any
$x,k \in \Nat \defeq \set{0, 1, \dots }$.
This assumption captures the usual situation for many practical distributions,
where probabilities are not necessarily rational but can be computed to
any prescribed precision.

In our setting, efficient sampling becomes significantly more challenging than
``dice rolling'' (i.e., sampling a finite-support distribution with
rational probabilities), because there are now two sources of uncertainty
\begin{enumerate*}[label=(\roman*)]
\item the infinitely precise uniform random real $U \in [0,1]$, whose bits $B_1, B_2, \dots$
  are presented as the i.i.d.~coin flips from the entropy source; and
\item the cumulative probabilities $F(x)$ in the target distribution, which are
  real numbers known up to finite precision after a finite number of oracle queries.
\end{enumerate*}
An efficient sampling algorithm must carefully balance the cost of drawing
fresh coin flips from the entropy source with the cost of invoking the
oracle at higher and higher precision to control the space, time,
and entropy costs.

\paragraph{Space Cost.}
We distinguish \textit{persistent space}, which is the number of bits used
for the state persisting between the iterations of online sampling, and
\textit{temporary space}, which is used only within a single sampling
iteration.
The \textit{overall} space is the maximum of these two quantities across all
iterations of online sampling.
If the distributions have irrational probabilities, then the
temporary space must be unbounded to guarantee exact outputs,
although the \textit{expected} temporary space may remain bounded.
On the other hand, there is no restriction on the persistent space: it is
zero for algorithms that waste entropy by maintaining no state,
and it grows unbounded as $n \to \infty$ for classical
algorithms achieving (near) optimal entropy loss~\citep{knuth1976,han1997}.

\paragraph{Global State for Online Random Sampling.}
The dice-rolling example illustrates a general phenomenon: reducing entropy
loss requires preserving randomness that was not fully used by the sampler.
More generally, if a sampler for a CDF $F$ outputs $X=x$ using $U \sim \Uniform[0,1]$,
then the conditional position of $U$ inside the interval $[F(x-1),F(x)]$ remains
uniform.
This information constitutes ``unused'' randomness that can be used in
future sampling rounds to reduce the number of fresh bits drawn from the entropy source.

A key obstacle to efficient online sampling is that existing methods reduce
entropy cost by storing reusable randomness from the entire output history.
In online sampling with arithmetic decoding~\citep{han1997}, the
reusable randomness is maintained \textit{globally}, by carrying
a high-precision interval containing $U$ after producing $n$
outputs $X_1, \dots, X_n$.
For example, if all the probabilities are rationals with denominator $d$, then endpoints
have a denominator growing as $d^n$, and the
expected space is $O(n)$.

\paragraph{Key Idea: Local Randomness Recycling.}
Instead of storing a large global interval as in arithmetic coding,
we maintain a \textit{local} discrete state which is a lossy,
compressed form of the unused randomness from the previous samples.
Specifically, after each step $i$, our sampler's state is a pair of integers
$(Z_{i}, M_{i})$ with $Z_{i} \mid M_{i} \sim \Uniform\set{0,\dots,M_{i}-1}$.
We sample $X_{i+1}$ using a uniform variate $Q_{i+1}$ constructed
from $(Z_{i}, M_{i})$ and fresh coin flips from the source as needed.
Then, a new uniform state $(Z_{i+1}, M_{i+1})$ is obtained by compactly
encoding the \textit{relative} position of $Q_{i+1}$ within the interval
$[F_{i+1}(x_{i+1}-1),F_{i+1}(x_{i+1})]$.

The state is not obtained by simply truncating high-precision interval
endpoints, which introduces bias.
Instead, we use a discrete re-encoding that preserves exactness while
bounding the state size.
To control the space, we set an upper bound $M_i \leq \Delta_i$
that evolves with the iteration count $i$.
Choosing the sequence $\Delta_i$ is a delicate balance:
larger values give a lower asymptotic entropy loss but increase
the runtime.
Large values also create leftover randomness that is
discarded at the (unknown) end of the online sampling procedure.

In our algorithm, we choose $\Delta_i$ to scale as $i \log i$ up to
some optional specified maximum value, after which it stabilizes.
Over $n$ rounds, the entropy loss is $L_n \leq \epsilon n + O(\log n)$ and the
persistent space is $O(\log \min\set{1/\epsilon, n})$,
which is an exponential improvement over the global-state approach.
Here $\epsilon$ can be any desired user-specified parameter, including
$\epsilon = 0$ to obtain a purely logarithmic entropy loss.

\paragraph{Overview of the Algorithm.}

\Cref{fig:demo} gives an overview of the randomness-recycling procedure.


\begin{figure}[H]
\centering
\definecolor{cbBlue}{HTML}{0072B2}
\definecolor{cbRed}{HTML}{D55E00}
\begin{tikzpicture}[thick]
\tikzset{title/.style={anchor=west, fill=gray, fill opacity=.10, text opacity=1, minimum width=10cm,}}

\node[at={(-0.5,0)},anchor=east,font=\bfseries]{Step 1};
\draw[-, line width=1pt] (0,0) -- (10,0)
  node[name=fa,pos=0.09,anchor=center,label={below:$MF(X-1)$}]{}
  node[name=fb,pos=0.92,anchor=center,label={below:$MF(X)$}]{}
  node[name=fq,pos=0.485,anchor=center,circle,minimum size=5pt,fill=black,inner sep=0pt,label={below:$MQ$}]{}
  ;


\draw[thick, line width=1pt] ([yshift=-.15cm]fa.center) -- ([yshift=.15cm]fa.center);
\draw[thick, line width=1pt] ([yshift=-.15cm]fb.center) -- ([yshift=.15cm]fb.center);

\node[at={(-0.5,-1)},anchor=east,font=\bfseries]{Step 2};
\draw[-, line width=1pt] (0,-1) -- (10,-1)
  node[name=ga,pos=0.09,anchor=center,label={below:$2^sMF(X-1)$}]{}
  node[name=gb,pos=0.92,anchor=center,label={below:$2^sMF(X)$}]{}
  node[name=gq,pos=0.485,anchor=center,circle,minimum size=5pt,fill=black,inner sep=0pt,label={below:$2^sMQ$}]{}
  ;
\draw[thick, line width=1pt] ([yshift=-.15cm]ga.center) -- ([yshift=.15cm]ga.center);
\draw[thick, line width=1pt] ([yshift=-.15cm]gb.center) -- ([yshift=.15cm]gb.center);

\draw[decorate,decoration={brace,mirror,amplitude=4pt,raise=.5cm}]
  (ga.south) -- (gb.south)
  node[anchor=north,pos=0.5,yshift=-.55cm,font=\footnotesize]{width $\in [\Delta/4, \Delta]$}
  ;


\node[at={(-0.5,-3)},anchor=east,font=\bfseries]{Step 3};
\draw[-, line width=1pt] (0,-3) -- (10,-3)
  node[name=ha,pos=0.09,anchor=center,label={below:$2^sMF(X-1)$}]{}
  node[name=hb,pos=0.92,anchor=center,label={below:$2^sMF(X)$}]{}
  node[name=hq,pos=0.485,anchor=center,circle,minimum size=5pt,fill=black,inner sep=0pt,label={below:$2^sMQ$}]{}
  ;
\draw[thick, line width=1pt] ([yshift=-.15cm]ha.center) -- ([yshift=.15cm]ha.center);
\draw[thick, line width=1pt] ([yshift=-.15cm]hb.center) -- ([yshift=.15cm]hb.center);

\foreach \x in {0, ..., 20} {
  \node[name=tick-\x, anchor=south,coordinate] at (\x/2,-3){};
  \draw[thick,line width=1pt] (tick-\x.center) -- ([yshift=.25cm]tick-\x.center);
}

\node[name=Um,anchor=south,at=(tick-3.center),label={[label distance=.5mm]above:$\phantom{U}\mathclap{U^-}$}]{};
\node[name=Up,anchor=south,at=(tick-17.center),label={[label distance=.5mm]above:$\phantom{U}\mathclap{U^+}$}]{};
\node[name=V,anchor=south,at=(tick-9.center),label={[label distance=.5mm]above:$V$}]{};

\node[draw=none, fill=cbBlue, opacity=0.4, thick, inner sep=0pt, fit=(Um.south) (Up.north)] {};
\node[ draw=none, fill=cbRed, opacity=0.4, thick, inner sep=0pt, fit=(tick-1) (Um.north)] {};
\node[ draw=none, fill=cbRed, opacity=0.4, thick, inner sep=0pt, fit=(Up.north) (tick-19)] {};

\draw[-latex,black] (hq.north) to[out=90,in=30] (V.center);

\draw[thin,decorate,black,decoration={brace,raise=.275cm}] (ha.center) -- (tick-3.north) node[pos=0,font=\tiny,yshift=0.45cm,anchor=west,inner sep=0pt]{${\le}2$};
\draw[thin,decorate,black,decoration={brace,raise=.275cm}] (tick-17.north) -- (hb.center) node[pos=1,font=\tiny,yshift=0.45cm,anchor=east,inner sep=0pt]{${\le}2$};


\end{tikzpicture}
\caption{Visualization of the randomness recycling mechanism in \cref{alg:online-sampling}.}
\label{fig:demo}
\end{figure}

\noindent
Conditioned on generating $X \sim F$ at a given round,
the conditional distribution of the uniform variate
$Q \sim \Uniform[0,1]$
becomes $Q \mid X \sim \Uniform[F(X-1), F(X)]$.
We will extract a new state $(Z', M')$ that captures unused
randomness from the most recent sampling round.

We first scale the interval $[F(X-1), F(X)]$ by $M 2^s$,
where $s$ is chosen to give an interval width
$2^sM ( F(X) - F(X-1) ) \in [\Delta/4, \Delta]$
(\cref{fig:demo}, Step 2), which implies that
\begin{equation}
(2^sMQ \mid X, M) \sim \Uniform[2^sMF(X-1), 2^sMF(X)].
\label{eq:rhinoscopy}
\end{equation}

We next compute integer points $U^-$ and $U^+$ such that the interval
$\set{ U^-, U^- + 1, \dots, U^+ -1}$ is \emph{strictly} contained in the real interval
$[2^s MF(X-1), 2^sMF(X)]$, with a small gap at the endpoints.
\Cref{eq:rhinoscopy} implies that the floor
$V \defeq \floor{2^sMQ}$ is uniformly distributed over
$\set{U^-, \dots, U^+-1}$ conditioned on falling in this set
(\cref{fig:demo}, Step 3, blue region).
In this case, we extract $Z' \defeq V - U^-$ and $M' \defeq U^+-U^-$ as a
new discrete state $(Z',M')$, with
$Z' \mid M' \sim \Uniform\set{0,\dots,M'-1}$ and $M' \leq \Delta$
as desired.
The factor $2^s M$ is key to efficiently performing these computations given
rational-approximation access to $F$ and bit access to $Q$.

\subsection{Main Results}
Equipped with this background, a high-level summary of our main result is as follows.

\begin{nicebox}[sharp corners]
\begin{theorem}[Informal]
For every online sequence of discrete distributions with computable-real CDF
access, exact online sampling can be performed with $O(\log n)$ total entropy
loss after $n$ samples while storing only $O(\log n)$ bits of persistent state.
Under standard assumptions, the sampler has polynomial expected running time per
sample and $O(\log n)$ expected
overall space over $n$ samples.
The logarithmic overall-space bound is optimal
up to constant factors in the CDF-oracle model.
\end{theorem}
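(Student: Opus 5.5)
The plan is to analyze the recycling scheme of \cref{fig:demo} round by round, with $\Delta_i \asymp i\log i$, and then add up the per-round losses. There are four parts: exactness, entropy, space and time, and the space lower bound.

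\emph{Exactness.} Assume inductively that $Z_{i-1}\mid M_{i-1} \sim \Uniform\set{0,\dots,M_{i-1}-1}$, independent of $X_1,\dots,X_{i-1}$ given $M_{i-1}$. Then $Q_i \defeq (Z_{i-1}+U)/M_{i-1}$, with $U$ a uniform real whose bits are fresh coin flips, is $\Uniform[0,1]$. We output $X_i=x$ exactly when $F_i(x-1)\le Q_i<F_i(x)$. This is decided by refining $Q_i$ bit by bit and querying $\phi$ at increasing precision $k$ until the dyadic interval containing $Q_i$ is separated from every approximated CDF value. Hence $X_i\sim F_i$ exactly. Given $X_i$, \cref{eq:rhinoscopy} holds. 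If $V=\floor{2^sM Q_i}$ lands in $\set{U^-,\dots,U^+-1}$, the new state $(V-U^-,\,U^+-U^-)$ is uniform, as argued above. If $V$ lands in an end gap (at most $2$ integers on each side), we reset to a fresh state built from the leftover fractional position. This is still uniform conditionally, so the invariant survives. The choice of $U^\pm$ needs only finitely many oracle queries, because the inner integers only have to lie strictly inside the interval up to the slack used when choosing $s$.

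\emph{Entropy.} I would use an amortized potential argument with potential $\lg M_i$, the number of bits of stored randomness. In round $i$ the coin flips consumed minus $H(F_i)$ should be at most the drop in potential plus an error term. The expected consumption is $\expect{-\lg f_i(X_i)} - \lg M_{i-1} + \lg(2^s)$. Using $M_i \approx 2^s M_{i-1} f_i(X_i)$, this becomes $H(F_i) + \lg M_i - \lg M_{i-1} + \text{error}$. The error has three parts. The first is the rounding loss $\lg\bigl(2^sMf/(U^+-U^-)\bigr) = O(1/\Delta_i)$, since the width is at least $\Delta/4$ and we lose at most $4$ integer points. The second is the gap probability $O(1/\Delta_i)$ times the cost of a reset, which is $O(\log \Delta_i)$. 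The third is the $O(1)$ overhead of emitting a uniform real from a uniform integer, which only happens in reset rounds. Each round therefore contributes $O(\log\Delta_i/\Delta_i)=O(1/i)$. Summing gives $O(\log n)$, and the final potential $\lg M_n \le \lg\Delta_n = O(\log n)$ is discarded at the end. So $L_n = O(\log n)$.

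\emph{Space, time, and optimality.} The persistent state is $(Z_i,M_i,i)$ with $M_i\le\Delta_i$, which is $O(\log n)$ bits. For temporary space and time, assume (the ``standard assumptions'') that $\phi(x,k)$ costs $\poly(k,\log x)$ and that each $F_i$ has a tail such as finite entropy or a bounded moment. The precision needed in round $i$ is $O(\log\Delta_i)$ plus a term with geometrically decaying tail, since each extra bit of precision halves the chance of an unresolved comparison. This gives polynomial expected time and $O(\log n)$ expected overall space. For the lower bound, I would take $F_i$ to be Bernoulli with an irrational or dyadically awkward parameter and argue information-theoretically. A sampler whose persistent state has $S$ bits can recycle at most $S$ bits between rounds. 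Each round then loses about $\Omega(2^{-S})$ bits, by a quantization argument over the finitely many states. Achieving total loss $O(\log n)$ over $n$ rounds therefore forces $S=\Omega(\log n)$ in some round.

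I expect the main obstacle to be the per-round loss bound, in particular controlling the reset cost together with the unknown position of $Q_i$ relative to irrational endpoints. Those endpoints are learned only through $\phi$, so the number of oracle refinements interacts with the consumed bits. I would handle this by charging to the potential only those bits that end up inside the interval, and bounding the expected number of extra refinement bits by a geometric series.
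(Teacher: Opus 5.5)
\paragraph{What matches.} Your upper-bound analysis follows the paper. Exactness comes from $Q=(Z+U)/M$ together with conditional uniformity of $V-U^-$. The entropy bound uses the potential $\lg M_i$ with per-round error $O(\lg\Delta_i/\Delta_i)=O(1/i)$, and the final $\lg M_n\le\lg\Delta_n$ is discarded at the end. One slip: a round consumes $s+\expect{(r-s)^+}$ bits, and the identity you need is $s=\lg\theta-\lg M_{i-1}-\lg f_i(X_i)$ with $\theta=2^sM_{i-1}f_i(X_i)\approx M_i$. Your expression with $\lg(2^s)$ is not this, although the conclusion you draw is the right one.

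\paragraph{The gap: optimality.} You argue that an $S$-bit persistent state forces loss $\Omega(2^{-S})$ per round, so $O(\log n)$ total loss forces $S=\Omega(\log n)$. There are two problems.
\begin{enumerate}
\item The per-round bound is the entire difficulty, and it is only asserted. Within a round the sampler may use unboundedly many coins and realize arbitrary real transition probabilities. You would have to show that no $2^S$-state recycling scheme can drive its per-round loss below $2^{-O(S)}$ on some target. That is a separate space--entropy trade-off theorem; the paper does not prove it and only cites such a result for the $\log(1/\epsilon)$ regime.
\item Even granted, it proves a different statement. It constrains only samplers that already achieve $O(\log n)$ loss, and it never uses the oracle model. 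The claim is that every exact sampler with CDF-oracle access needs $\Omega(\log n)$ expected overall space. This includes the stateless interval sampler, which has zero persistent space and $O(1)$ loss per round, and which your argument cannot touch. The bound comes from the precision of oracle answers, not from stored randomness.
\end{enumerate}

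\paragraph{The missing idea.} What is needed is precision indistinguishability. For each $F_i$, oracle answers of precision at most $k$ only locate $F_i(0)$ in an interval $(a_i,b_i)$ of width at least $2^{-k}$, since the endpoints are multiples of $2^{-k}$. Every $c\in(a_i,b_i)$ equals $G_i(0)$ for some distribution $G_i$ whose oracle agrees with that of $F_i$ on all such queries.

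Fix a pattern $z\in\{0,1\}^n$, push $c_i$ to $a_i$ or $b_i$ according to $z_i$, and couple the runs on $(F_i)$ and $(G_i)$. Let $E$ be the event that no query exceeds precision $k$. On $E$ the transcripts coincide, so $\Pr\bigl(E\wedge(\mathbf{1}_{\{X_i>0\}})_{i\le n}=z\bigr)\le\prod_{z_i=0}a_i\prod_{z_i=1}(1-b_i)$. Summing over $z$ gives $\Pr(E)\le\prod_i(a_i+1-b_i)\le(1-2^{-k})^n$. A precision-$k$ answer occupies $k$ bits, so $\expect{S}\ge\sum_{k\ge1}\bigl(1-(1-2^{-k})^n\bigr)=\Omega(\log n)$.

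\paragraph{A smaller gap in the space/time part.} The precision in round $i$ is $O(\log(\Delta_i/f_i(X_i)))$ plus a geometric tail, not $O(\log\Delta_i)$ plus a tail. So $O(\log n)$ expected overall space also needs $\expect{\max_{i\le n}\log(1/f_i(X_i))}=O(\log n)$, and finite entropy does not give this. The paper assumes a mean bound $\mu_n\le\poly(n)$. It applies Markov's inequality to $f(X)^{-1/3}$ to get an $e^{-t/3}$ tail for $\log(1/f_i(X_i))$, then takes a union bound over the $n$ rounds. For the polynomial time bound, it converts moments of $\log(1/f(X))$ into moments of $\log X$ via its lemma comparing $\sum_i a_i h(i,1/a_i)$ with $\sum_i a_i h(i,i)$.
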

\end{nicebox}

A more formal description of the results is as follows.

\begin{nicebox}[sharp corners,boxrule=-1pt,leftrule=3pt, boxsep=0pt, colframe=black]
\begin{theorem}[Simplified]
\label{thm:rvg-simple}
There is an online random sampling algorithm that,
given CDF oracles for target distributions $F_1, \dots, F_n$,
outputs $X_1 \sim F_1, X_2 \sim F_2, \dots$ with entropy loss
$L_n \leq O(\log n)$,
such that when sampling $X_i$ at iteration $i$,
\begin{itemize}[wide=0pt,leftmargin=*,itemsep=0pt, topsep=1pt]

\item The persistent space is $O(\log i)$ bits;

\item The expected temporary space is $O(\log i + \expect{\log X_i})$ bits;

\item The expected number of calls to the CDF oracle is $O(\expect{\log X_i})$;

\item The expected number of arithmetic operations is $O(\expect{\log X_i})$,
  and the expected bit lengths of the relevant operands are $O(\log i +\expect{\log X_i})$;

\item For efficient CDF oracles (according to the standard notions in~\cref{eq:oracle-accuracy-efficient,eq:oracle-index-efficient}),
the expected running time is within a constant factor of the expected
cost of $\log X_i$ invocations of the oracle at $X_i$ with $\lg X_i$ bits of precision,
plus one invocation of the oracle at $X_i$ with
$\lg i$ bits of precision.
\qedhere
\end{itemize}
\end{theorem}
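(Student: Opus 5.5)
The plan is to analyze the algorithm sketched in \cref{fig:demo}. The state is $(Z_{i-1},M_{i-1})$ with $Z_{i-1}\mid M_{i-1}\sim\Uniform\set{0,\dots,M_{i-1}-1}$, and we use the schedule $\Delta_i=\Theta(i\log i)$. The proof has three parts: exactness, the entropy accounting, and the cost of a single round. All three will be established by induction on $i$.

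\textbf{Exactness.} Form $Q_i=(Z_{i-1}+U')/M_{i-1}$, where $U'\in[0,1)$ is lazily revealed by fresh coin flips, so that $Q_i\sim\Uniform[0,1)$. We output $X_i=x$ exactly when $F_i(x-1)\le Q_i<F_i(x)$. Deciding this requires comparing the dyadic interval known for $2^sM_{i-1}Q_i$ with oracle approximations $\phi(x,k)$ at increasing precision $k$, refining whichever of the two is coarser. The index $x$ is located by exponential search followed by binary search on the oracle, which costs $O(\log X_i)$ comparisons. For the new state, $U^\pm$ are integers chosen so that $\set{U^-,\dots,U^+-1}$ lies strictly inside $2^sM_{i-1}[F_i(X_i-1),F_i(X_i)]$. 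By \cref{eq:rhinoscopy}, conditioned on $X_i$ and on $V\in[U^-,U^+)$, the variable $V$ is uniform. On the complementary event (the red region, of width at most $4$ out of at least $\Delta_i/4$) we discard the state and reset to $M_i=1$. In both cases $Z_i\mid M_i,X_{1..i}$ is uniform and independent of the future, so $X_i\sim F_i$ given the past.

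\textbf{Entropy.} Use the potential $\Phi_i=\expect{\lg M_i}$, which is the stored entropy. In round $i$ the flips consumed are about $\lg(2^s)+O(1)$. One checks that
\[
\expect{T_i-T_{i-1}}\le H(F_i)+\Phi_i-\Phi_{i-1}+\delta_i .
\]
The loss $\delta_i$ comes from three sources. Shrinking to integer endpoints costs $\lg\frac{\text{width}}{U^+-U^-}=O(1/\Delta_i)$. The reset event has probability $O(1/\Delta_i)$ and loses at most $\lg\Delta_i$ bits. The overshoot in the flips needed to resolve $V$ contributes further. Telescoping gives $L_n\le \sum_i O(\lg \Delta_i/\Delta_i)+\Phi$-boundary terms. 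With $\Delta_i\sim i\log i$ this is $\sum O(1/i)=O(\log n)$, plus the final discarded $\lg M_n\le\lg\Delta_n=O(\log n)$.

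\textbf{Costs.} The persistent state $(Z,M,i)$ uses $O(\log\Delta_i)=O(\log i)$ bits. Temporary quantities are $2^sM$, $U^\pm$, and oracle values at precision about $\lg(2^sM)+O(1)$. We need $\expect{s}$ to be bounded by the $-\lg f_i(X_i)$ terms up to $O(\log X_i)$. Each extra bit of precision needed to resolve a comparison occurs with geometrically decaying probability, so the expected extra precision is $O(1)$ per comparison. Summing these geometric tails gives the operand lengths $O(\log i+\log X_i)$ and the stated oracle count. The running-time claim then follows by charging each of the $O(\log X_i)$ search comparisons to one oracle call at $X_i$'s scale with $O(\lg X_i)$ bits, plus the one high-precision call at $\lg(2^sM)\approx\lg i$ bits.

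\textbf{Main obstacle.} I expect the hardest step to be the entropy accounting: proving $\delta_i=O(\log\Delta_i/\Delta_i)$ precisely. The difficulty is that the flips drawn to resolve $Q_i$ must be shown to be almost entirely recaptured in $M_i$, and this needs careful handling of the rounding of $s$ and the lazily revealed bits. I would first establish the exact identity $\lg M_{i-1}+\#\text{flips}=\lg(2^sM_{i-1})$ and bound the gap between $\lg(2^sM_{i-1}f_i(X_i))$ and $\lg M_i$ in expectation. Taking the expectation over $X_i$ then turns the term $\expect{-\lg f_i(X_i)}$ into exactly $H(F_i)$.
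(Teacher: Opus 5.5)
Your outline has the same architecture as the paper: uniform state $(Z,M)$, the $U^{\pm}$ recycling rule, telescoping of $\expect{\lg M_i}$ as in \cref{prop:entropy-loss}, and $\Delta_i\approx i\log i$. However, the quantitative step that makes the entropy loss logarithmic is not supplied, and the route you sketch for it misses the relevant term. The number of flips in a round is $\max\set{r,s}$, where $r$ is whatever the search needed to determine $X_i$. Your ``exact identity'' $\lg M_{i-1}+(\text{flips})=\lg(2^sM_{i-1})$ fails precisely on the event $r>s$, and a bound of $s+O(1)$ flips per round only gives $L_n=O(n)$. You need $\expect{(r-s)^+\mid X_i,M_{i-1}}=O(1/\Delta_i)$. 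A generic ``geometric decay per comparison'' does not deliver this, because the decay of $r$ starts at the output-dependent scale $\lg(1/(Mf(X_i)))$, about $\lg\Delta_i$ below $s$. The missing ingredient is the output-conditional tail bound $\Pr(t>k\mid X)\le 6/(2^kf(X))$ of \cref{prop:precision-geometric-tail}. If the search is still unresolved at precision $k$, the current midpoint satisfies $\abs{F(\varmid)-Q}\le 3\cdot 2^{-k}$. This forces $Q$ within $3\cdot 2^{-k}$ of $F(x-1)$ or $F(x)$, and given $X=x$, $Q$ is uniform on an interval of length $f(x)$. With $r=\max\set{0,t-\floor{\lg M}}$ and $\theta=2^sMf(X)\ge\Delta/4$, summing this tail from $k=s$ bounds the overshoot by $24/\theta$. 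Only then do your rounding and reset terms combine into \cref{thm:entropy-bound-conditional-surprisal}.

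The cost bullets have a similar hole. The precision the search reaches is about $\lg(1/f_i(X_i))$, not $\lg X_i$, and for a single sample these are unrelated: if $X_i=0$ with $f_i(0)=2^{-100}$, the very first comparison needs about $100$ oracle calls. Replacing one by the other is valid only in expectation, and it needs a separate averaging lemma rather than geometric tails. For the oracle-call count, operation count and temporary space you need $H(F_i)\le O(\expect{\log X_i})$ (\cref{lemma:entropy-leq-ElogX}). For the running time you need the stronger $\expect{h(X,1/f(X))}\le O(\expect{h(X,X)})$ with $h(x,y)=\CF{x}{\Delta}+\log(x+y)\,\CF{2x}{y}$ (\cref{lemma:ff}); this is where the efficiency assumptions enter. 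Moreover, the one high-precision query is at precision $\lg(2^sM)\approx\lg\Delta_i+\lg(1/f_i(X_i))$, not $\lg i$. Splitting $\CF{x}{\Delta/p}\le O(\CF{x}{1/p}+\CF{x}{\Delta})$ uses accuracy efficiency (\cref{thm:accuracy-efficient}). For the $\lg i$ term to appear once rather than inside each of the $O(\log X_i)$ comparisons, the search must not carry the $\lg\Delta_i$-bit rational $Z/M$ at full length; the paper truncates it to the current precision via \cref{prop:approx-div}. Finally, your search cannot work with $2^sM_{i-1}Q_i$: $s$ is computed from an estimate of $f_i(X_i)$, so it is only available after $X_i$ has been found.
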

\end{nicebox}

Our method can also interpolate between entropy loss
$L_n \leq O(\log n)$---with space and runtime per sample scaling as $\log{n}$,
as stated in \cref{thm:rvg-simple}---and entropy loss
$L_n \leq \epsilon n + o_\epsilon(n)$---with space and runtime per sample
scaling as $\log(1/\epsilon)$---yielding a spectrum of trade-offs between
entropy loss and space.
In the latter configuration, the $\log(1/\epsilon)$ persistent space is
asymptotically optimal, matching the lower bound shown in
\citet[Theorem~1]{draper2026lb}.
We further show in \cref{sec:space} that the expected overall space
is essentially optimal, up to constant factors.
The proposed method improves on existing algorithms as follows.

\begin{itemize}
\item The interval algorithm of \citet{han1997} uses exponentially more
persistent and temporary space than our algorithm to achieve a given
entropy loss, as made precise in \cref{table:compare-interval}.
Their algorithm also requires querying the previous CDFs with
increasing precision, while our algorithm does not.

\item The algorithm in \citet{Kozen2022} (for i.i.d.~outputs) generates batched samples,
with entropy loss $L_n \leq \epsilon n + o_\epsilon(n)$ using $O(1/\epsilon)$ persistent space.
Our algorithm uses $O(\log(1/\epsilon))$ space and can support full online
random sampling with non-i.i.d.~outputs.
The algorithms are compared in \cref{table:compare-kozen}.

\item The algorithms in \citet{draper2025rr} assume distributions with
rational probabilities having denominator bounded by $d$, while our
algorithm supports countably many symbols and arbitrary real probabilities.
Further, \citep{draper2025rr} achieve $L_n \leq \epsilon n + o_\epsilon(n)$ using
$O(\log(d/\epsilon))$ space while our algorithm uses $O(\log(1/\epsilon))$,
removing the dependence on the denominator size.
\end{itemize}

\section{Related Work}
\label{sec:related}
As a starting point, \textbf{\citet{knuth1976}} first described
entropy-optimal online sampling, with entropy loss satisfying
$L_n < 2$ for every $n$.
This method presupposes effective access to the bits in the binary expansions of the
output probabilities (and the joint probabilities via cumulative products),
which is possible in the Real RAM model but
incompatible with the standard model of computable reals.
In particular, access to binary expansions is not computably admissible~\citep[Theorem 2.17]{ko1991}
because carries may propagate indefinitely, so basic arithmetic
operations such as addition cannot be implemented bitwise.

\paragraph{\Citet{matias2003}} study online generation of discrete random
variates from a finite set of $N$ dynamically changing rational
weights, with total weight $S$.
The data structures support sampling and
weight updates in a transdichotomous RAM model with
constant-time arithmetic on $O(\log S)$-bit quantities.
In their model, the entropy source emits a uniform integer in $\set{0,\ldots,S-1}$ in
$O(1)$ time; they do not track entropy cost in terms of
individual coin flips.
This setting is similar to ours in its dynamic nature but is otherwise
orthogonal in its model and objectives: \citep{matias2003} target
dynamically changing fixed-precision distributions over finite support without
regard to entropy loss, while we
study entropy-efficient sampling from infinite-support
distributions with CDFs accessed through rational
approximation oracles.

\paragraph{\Citet{han1997}} describe a method of arithmetic coding called
the interval algorithm for online random sampling.
It also handles uniform and nonuniform input and output streams, subject
to an arbitrary dependence structure.
As with our method, it also supports various configurations of the space--entropy
tradeoff.
\Cref{table:compare-interval} compares the two methods under different
choices of parameters.

\begin{table}[H]
\centering
\caption{Complexity and entropy loss trade-offs
for converting i.i.d.~coin flips to random variables
$X_1, \dots, X_n$ subject to any distribution, using the
interval algorithm from \citet{han1997} and \cref{alg:online-sampling}
from this paper (whose properties are summarized in \cref{thm:rvg-simple}).
Here, we define $\mu_i \defeq \expect{X_i}$ (for $i=1,\dots,n$) and $\mu \defeq \max_{1 \le i \le n} \mu_i$.
For simplicity, the table assumes each CDF oracle uses quasilinear time.}
\label{table:compare-interval}
\begin{adjustbox}{max width=\linewidth}
\begin{tabular*}{\linewidth}{|l@{\extracolsep{\fill}}lll|}
\hline
\multirow{2}{*}{\textbf{Method}} & \textbf{Entropy}              & \textbf{Persistent }                           & \textbf{ Expected Time } \\
~                                & \textbf{Loss $L_n$}           & \textbf{Space at Step $i$}                     & \textbf{ Complexity at Step $i$} \\ \hline\hline
\hline
\multicolumn{4}{|@{\;}l|}{\textbf{Bounded Persistent State}} \\
\Citep{han1997}                  &  $\leq \epsilon n + O(1)$      & $O(\textcolor{black}{1/\epsilon} \, \log \mu )$ & $\widetilde{O}\!\left(\textcolor{black}{1/\epsilon} \, \log \mu_i \log \mu \right)$ \\
\Cref{alg:online-sampling}       & $\leq \epsilon n + O(\log n)$ & $O(\textcolor{black}{\log(1/\epsilon)})$        & $\widetilde{O}\!\left( \textcolor{black}{\log(1/\epsilon)}\, \log \mu_i + \log^2 \mu_i\right)$ \\
\hline
\multicolumn{4}{|@{\;}l|}{\textbf{Minimum Entropy Loss}} \\
\Citep{han1997}                  & $O(1)$                        & $O(\textcolor{black}{i} \log \mu)$              & $\widetilde{O}\!\left(\textcolor{black}{i} \log \mu_i \log \mu \right)$ \\
\Cref{alg:online-sampling}       & $O(\log n)$                   & $O(\textcolor{black}{\log i})$                  & $\widetilde{O}\!\left(\textcolor{black}{\log i}\, \log \mu_i + \log^2 \mu_i \right)$ \\
\hline
\end{tabular*}
\end{adjustbox}
\end{table}

Our method requires exponentially less space to
achieve entropy loss $L_n \leq \epsilon n + o_\epsilon(n)$, as $\epsilon$ decreases.
Furthermore, our algorithm can achieve logarithmic space and logarithmic
entropy loss, compared to the linear space and constant entropy loss with
the default iterative approach from \citep[\S{V}]{han1997}.
Note that the persistent space bound of \cref{alg:online-sampling} is
deterministic while the bound \citep{han1997} holds only in expectation.

\paragraph{\Citet{draper2025rr}} describe a general online
random sampling technique
for reducing entropy loss when the
output distributions are over a finite domain
and have rational probabilities with denominator bounded by $d$.
It achieves entropy loss $L_n \leq \epsilon n + O(\log(d/\epsilon))$
for any $\epsilon > 0$ using $O(\log(d/\epsilon))$ persistent
and temporary space.
Randomness recycling is shown to accelerate
\citeauthor{walker1977}'s alias method~\citep{walker1977},
inversion and lookup-table sampling \citep{devroye1986},
DDG tree sampling~\citep{knuth1976,roy2013,saad2020fldr,draper2026a},
and batched uniform sampling~\citep{lemire2019,brackett2025}.
Applying this method to arbitrary discrete distributions
requires each target distribution to be decomposed as a
series of bounded rational distributions, as in \citep[\S4.3]{Kozen2022}.
However, this decomposition is inefficient in general,
requiring a linear scan over the probabilities, with expected time at least linear in the mean of the distribution.
Our algorithm instead manipulates the CDF oracles directly,
allowing logarithmic dependence on the mean via binary search,
while maintaining the same entropy efficiency via a carefully optimized
randomness-recycling procedure.
Thus, we generalize and improve on \citep{draper2025rr}, eliminating the dependence on $d$
and efficiently handling arbitrary distributions.

\paragraph{\Citet{Kozen2022}} discuss an algorithm to produce i.i.d. outputs in batches,
trading off persistent space (i.e., the batch size) with entropy loss.
We improve on \citep{Kozen2022} in two ways.
First, our method handles a dynamic sequence of non-i.i.d.~target
distributions, while the batching method used by \citep{Kozen2022} only
works for an i.i.d.\ output sequence from a fixed distribution.
Second, even for i.i.d.~outputs, our sampling algorithm achieves an
exponential improvement in space.
\Cref{table:compare-kozen} gives the full comparison, showing how the
$1/\epsilon$ overall expected space required by batching in
\citep{Kozen2022} is reduced to $\log(1/\epsilon)$ space
using the randomness-recycling method in this work.

\begin{table}[h]
\centering
\caption{Comparison of overall space complexity
to achieve entropy loss $L_n \leq \epsilon n + o_\epsilon(n)$,
when converting an i.i.d.~input stream of fair bits
to an i.i.d.~output stream with uniform, rational, or arbitrary real probabilities,
using the batched techniques in \citet{Kozen2022} and randomness recycling techniques.}
\label{table:compare-kozen}
\begin{adjustbox}{max width=\linewidth}
\begin{tabular}{|ll||ll|ll|}
\hline
\textbf{Input}     & \textbf{Output} & \bfseries\begin{tabular}[t]{@{}l}Space\\(Batched)\end{tabular} & \textbf{Reference}                        & \bfseries \begin{tabular}[t]{@{}l}Space\\(Recycled)\end{tabular} & \textbf{Reference} \\ \hline\hline
Uniform            & Uniform         & $1/\epsilon$                                                   & \citep[\S4.1]{Kozen2022}                  & $\log(1/\epsilon)$                                               & \citep{jacques2004} \citep[Algorithm 3.1]{draper2025rr} \\
Uniform            & Rational        & $1/\epsilon$                                                   & \citep[\S4.2]{Kozen2022}                  & $\log(1/\epsilon)$                                               & \citep[Algorithms 3.2, 5.1--5.3]{draper2025rr} \\
Uniform            & Real            & $\log n + 1/\epsilon$                                          & \citep[\S4.3]{Kozen2022}                  & $\log(n/\epsilon)$                                               & \cref{alg:online-sampling} \\ \hline
\end{tabular}
\end{adjustbox}
\end{table}

\paragraph{Other Algorithms.}

Several exact sampling algorithms for specific
distributions have been developed, e.g. for the geometric~\citep{Bringmann2013},
the Laplace and Gaussian~\citep{canonne2020,karney2016},
the Bernoulli~\citep{Mendo2025},
and exponential~\citep{thomas2008}.
These methods are tailored to particular distributions
and do not aim to conserve entropy or support online
random sampling from dynamic sequences of distributions.

\Citet{devroye2020} present arbitrary-precision algorithms for
discrete and continuous distributions.
To reduce entropy cost of sampling, they leverage
the interval method~\citep{han1997}
to recycle coin flips, although it is applied in a way that
does not guarantee correct outputs \citep[Footnote 1]{draper2025rr}.

Another direction is exact sampling given a floating-point
approximation to a CDF~\citep{goualard2020,uyematsu2003,saad2025}.
Algorithms to minimize the statistical distance between a finite
discrete distribution with irrational probabilities and a finite rational
approximation are given in \citet{Bocherer2016} using the $\ell_1$
distance and KL divergence, and in \citet{saad2020popl} more generally
using any $f$-divergence.

\paragraph{Complexity Theory.}
\Citet{Feldman1993} develop a probabilistic Turing machine model
for random sampling with bounded (as opposed to expected) polynomial time,
and study the problem of efficiently simulating output dice with
as few different \textit{types} of biased input coins as possible.
\Citet{yamakami1999} studies the relationship between
computable distributions and polynomial-time approximate samplability up to
a small error.
Our oracle model is similar in that the real-valued target probabilities
are accessed via rational approximations.
However, we take computable-real access as the input representation
and ask for exact online sampling with near-optimal entropy
consumption, explicit oracle-cost bounds, and small space.

\Citet{bringmann2017} present various
time and space complexity bounds in the Real RAM for sampling a finite discrete
distribution given unnormalized probabilities
$(p_1, \dots, p_n) \in \mathbb{R}^n_{\ge 0}$.
In this setting, entropy cost is not a traceable aspect of the
complexity analysis, because a uniform real in $[0,1]$---which has
infinitely many random bits---can also be obtained and manipulated in
$O(1)$ time.

\section{Computational Model}
\label{sec:computational-model}

\paragraph{Notation.}
We write $\ln x$ and $\lg x$ for the usual natural logarithm and base-two
logarithm respectively.
Further, we define $\log x \defeq \ln \max\set{ x, e} \geq 1$ for $x \geq 0$.

\paragraph{Word RAM Arithmetic.}

Throughout this paper, we assume a word size of $w \ge 1$ bits and define
$W \defeq 2^w$.
Basic word operations (e.g., addition, subtraction, division,
multiplication) take $O(1)$ time.
A big integer $z$ is stored as a sequence
$(z_{k-1}, \dots, z_0)$ of $w$-bit \emph{limbs}
representing $z = z_{k-1}W^{k-1} + \dots + z_1 W^1 + z_0$.
From $z$ we can form a new integer $z' \defeq z_{k-1}W^{k'-1} + \dots + z_{k-k'}$
by copying the top $k' \le k$ limbs in $O(k')$ time.
A rational number is stored as a pair $(z,y)$ of integers representing $z/y$.
Addition and subtraction on big integers take $k = O(\log_W N)$
word operations.
Other operations are more expensive.
For instance, multiplication and division take
$O(k \log k)$ time using the method of \citet{harvey2021},
or $\Theta(k^2)$ using the schoolbook method
(see \citep[Chapter 15]{gmp2023} for a comprehensive overview).

From a practical standpoint, this model aligns with
existing CPU designs with registers of size $w = 64$.
From a theoretical standpoint, it avoids tracking the bit
complexity of low-level operations.
One important consequence of our representation for big integers
is that it allows \emph{sublinear-time} operations.
The following result will be useful for our algorithm.

\begin{proposition}
\label{prop:approx-div}
There is an algorithm $\Call{ApproxDivision}{q,k}$ that takes as
input a rational number $q = a/b \in [0,1]$ and integer $k \ge 1$,
and returns in $O(k)$ time a rational number $q' = a'/b'$
such that $\abs{q - q'} < 2^{-k}$,
where $0 \le a' \le b' < 2^{k+1}$
(i.e., $a'$ and $b'$ each have bit length at most $k+1$).
\end{proposition}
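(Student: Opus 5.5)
Given $q = a/b \in [0,1]$ with $a,b$ big integers in the limb representation, the idea is to truncate both numerator and denominator to their leading $O(k)$ bits, using only the top limbs of each, and control the resulting relative error.

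\textbf{Step 1: a trivial case.} Let $\ell \defeq \lceil \lg(b+1)\rceil$ be the bit length of $b$. The top limb of $b$ is nonzero, so $\ell$ is computable in $O(1)$ time from the limb count and that limb. If $\ell \le k+1$, we want to return $(a,b)$ directly. Copying it costs $O(\ell/w)=O(k)$ limb operations, and the bit-length bound holds since $a\le b$.

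\textbf{Step 2: truncation.} Suppose $\ell > k+1$ and set $t \defeq \ell-(k+1)$. Define $b' \defeq \floor{b/2^t}$ and $a' \defeq \floor{a/2^t}$. Then $2^{k}\le b' <2^{k+1}$ and $0\le a'\le b'$, where the last inequality follows from monotonicity of the floor and $a\le b$. We cannot shift by exactly $t$ bits cheaply in the limb model unless $w \mid t$. The plan is therefore to extract the top $\lceil (k+2)/w\rceil+1$ limbs of $b$, and the limbs of $a$ in the same positions (i.e.\ $a$ is read with the same limb offset), using the copy operation of the computational model. This takes $O(k/w+1)=O(k)$ time. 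We then finish the shift within those few words with word operations, also in $O(k)$ time.

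\textbf{Step 3: error bound, the main step.} Write $a = 2^t a' + r_a$ and $b = 2^t b' + r_b$ with $0\le r_a,r_b<2^t$. Then
\[
\abs*{\frac{a}{b}-\frac{a'}{b'}} = \frac{\abs{a b' - a' b}}{b b'} = \frac{\abs{r_a b' - a' r_b}}{b\, b'} .
\]
Since $r_a, r_b < 2^t$ and $a' \le b'$, the numerator is a difference of two nonnegative quantities each less than $2^t b'$, so it is less than $2^t b'$. Hence the error is $< 2^t/b \le 2^t/2^{\ell-1} = 2^{-k}$, as required.

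The one point needing care is keeping the bound strict, and it is: $r_ab'<2^tb'$ and $a'r_b<2^tb'$, so their difference is strictly less than $2^tb'$ in absolute value. In the limb-aligned variant, the effective shift $t'$ is rounded down to a multiple of $w$, so $b'$ may have up to $k+1+w$ bits. I would therefore finish with an explicit word-level shift by $t \bmod w$ to reach exactly $k+1$ bits, which keeps the argument above intact.
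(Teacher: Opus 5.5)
Your proposal is correct and follows the paper's proof essentially step for step. You use the same trivial case $b < 2^{k+1}$, and the same truncation $b' = \lfloor b/2^t \rfloor$, $a' = \lfloor a/2^t \rfloor$ at the position of the top $k+1$ bits of $b$. Your error bound is also the same, since your $2^t/b$ is exactly the paper's $1/(b'+\beta)$; your discussion of limb alignment just spells out what the paper compresses into its remark that the leading $k+1$ bits lie in at most $\lceil (k+1)/w \rceil + 1$ limbs.
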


\begin{proof}
See \cref{sec:approx-div}.
\end{proof}

\paragraph{Representing Probability Distributions.}
We consider probability
distributions with support $\Nat$.
A CDF $F: \Nat \to \Real$
is a nondecreasing
function with $F(0) \geq 0$ and $\lim_{i \to \infty} F(i)=1$.
The corresponding PMF is given by
$f(i) \defeq F(i) - F(i-1)$ for $i \geq 0$,
where by convention we write $F(-1) \defeq 0$.

An \textit{oracle} for $F$ is a
function $\phi: \Nat \times \Nat \to \Rat$ such that
$\abs{F(i) - \phi(i,k)} < 2^{-k}$, and
$\phi(i,k)$ has denominator $2^{k}$ for every $i,k$.
We also define $\phi(-1,k) \defeq 0$ for all $k$.

We define the \textit{bound oracles}
\begin{align}
\Fl{i}{k} \defeq \phi(i,k+1) - 2^{-(k+1)}
\qquad \text{and} \qquad
\Fu{i}{k} \defeq \phi(i,k+1) + 2^{-(k+1)},
\end{align}
which satisfy
\begin{align}
\Fl{i}{k} < F(i) < \Fu{i}{k}
\qquad \text{and} \qquad
\Fu{i}{k} - \Fl{i}{k} \leq 2^{-k}.
\label{hrup5}
\end{align}

To account for the cost of evaluating the oracle,
we define $\CF{x}{\nu}$:  $x \geq 0, \nu \geq 1$  to be an upper bound on the number
of word-level operations needed to evaluate the oracle
$\phi(i',k')$ over all $i' \leq \ceil{x}$, $k' \leq \ceil{\lg \nu} + 1$, and we suppose that
$\mathcal{C}_F$ is nondecreasing in $x$ and $\nu$.
We say the oracle is \textit{accuracy efficient} /\textit{index efficient}
if there is a polynomial $g$ with
\begin{align}
&\mbox{\bfseries \underline{accuracy efficient}} & \frac{\CF{x}{\nu'}}{\CF{x}{\nu}} &\le g\!\left( \frac{\log \nu'}{\log \nu} \right) & (x \geq 0; 1 \leq \nu < \nu'), \label{eq:oracle-accuracy-efficient}  \\
&\mbox{\bfseries \underline{index efficient}}    & \frac{\CF{x'}{\nu}}{\CF{x}{\nu}} &\le g\!\left( \frac{\log x' }{\log x} \right) & (\nu \geq 1; 0 \leq x < x'). \label{eq:oracle-index-efficient}
\end{align}
An oracle for $F$ is \emph{efficient} if it is both
accuracy-efficient and index-efficient.
Many distributions have efficient oracles; for instance, the Poisson
distribution (via algorithms to estimate the incomplete Gamma
functions~\citep{winitzki2003computing}) or the discretized Gaussian
distribution (via algorithms to estimate the error
function~\citep{chevillard2012functions}).

\begin{remark}[name={Note on Complexity Bounds}]
When dealing with efficient oracles, all the asymptotic notations may have
a hidden dependence on the polynomial $g$ in
\cref{eq:oracle-accuracy-efficient,eq:oracle-index-efficient}.
The intent is that, for a certain genre of oracle (e.g. oracles for
Gaussians), the terms in $g$ would be small fixed constants.
To avoid edge cases in the analysis, we set
$\CF{x}{\nu} \defeq \CF{x}{1}$ for $\nu < 1$, and
we assume that $\CF{x}{\nu}$ is at least as large as the cost of basic arithmetic
operations on quantities of magnitude $\ceil{x}$ and $\nu$.
This convention ensures that the cost of computing $F^{\pm}(i,k)$
is $O\!\left(\CF{i}{2^k}\right)$ for $k \geq 0$.
\end{remark}

The next result establishes the cost of estimating the PMF given a CDF oracle.

\begin{proposition}
\label{prop:estimate-pdf}
There is an algorithm $\Call{EstimatePMF}{F, x, \nu}$
that takes a CDF $F$, an integer $x \geq 0$, and an integer $\nu \geq 1$,
and returns an estimate $\hat p$ of the probability
mass $p \defeq f(x)$
such that $\hat p \le p \le (1+1/\nu) \hat p$,
with runtime $O\!\left( \log(\nu/p)\thinspace \CF{x}{8 \nu/p} \right).$
(If $p = 0$, the algorithm runs forever without returning any output).
\end{proposition}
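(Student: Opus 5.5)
The plan is to use an iterative-refinement (doubling-precision) loop over the bound oracles. For $k = 1, 2, 3, \dots$ we compute
\[
L_k \defeq \Fl{x}{k} - \Fu{x-1}{k}, \qquad U_k \defeq \Fu{x}{k} - \Fl{x-1}{k}.
\]
(For $x = 0$ we use $F(-1)=0$ exactly.) By \cref{hrup5} these satisfy $L_k < p < U_k$ and $U_k - L_k \le 2^{1-k}$. We stop at the first $k$ for which $L_k > 0$ and $U_k \le (1+1/\nu) L_k$, and return $\hat p \defeq L_k$. Correctness is then immediate: $\hat p = L_k < p < U_k \le (1+1/\nu)\hat p$. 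If $p = 0$, then $L_k < 0$ for every $k$, so the loop never terminates, as the statement allows.

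For termination and the bound on the number of rounds, we need a sufficient condition for stopping. Once $2^{1-k} \le p/(2\nu+2)$, we have $L_k > p - 2^{1-k} > 0$ and
\[
U_k \le L_k + 2^{1-k} \le L_k + \frac{p}{2\nu+2}.
\]
It therefore suffices that $p/(2\nu+2) \le L_k/\nu$. This follows from $L_k \ge p(1 - 1/(2\nu+2))$, which gives
\[
\frac{L_k}{\nu} \ge \frac{p(2\nu+1)}{\nu(2\nu+2)} \ge \frac{p}{2\nu+2}.
\]
Hence the loop halts at some $k^* \le \lg(4(\nu+1)/p) + O(1) = O(\log(\nu/p))$, using $p \le 1$ and $\nu \ge 1$ so that $\log(\nu/p) \ge 1$.

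For the runtime, round $k$ costs $O(\CF{x}{2^k})$ by the convention in the remark, plus $O(k)$ for the rational comparisons. Each step of the comparison involves dyadic numbers with denominator $2^{k+1}$, and that cost is absorbed since $\mathcal{C}_F$ dominates arithmetic on such magnitudes. By monotonicity of $\mathcal{C}_F$ in $\nu$, every round costs at most $O(\CF{x}{2^{k^*}})$. The total is then $O(k^* \, \CF{x}{2^{k^*}})$.

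The main obstacle is making the constants match the stated argument $8\nu/p$. This requires checking that $2^{k^*} \le 8\nu/p$ up to the rounding inside the definition of $\mathcal{C}_F$, which uses precision $\ceil{\lg \nu}+1$. I would first tighten the stopping threshold: take the minimal $k$ with $2^{1-k} \le p/(4\nu)$, so that $2^{k} < 8\nu/p$. This still suffices since $\nu \ge 1$ gives $L_k \ge p(1 - 1/(4\nu))$, and then
\[
\frac{L_k}{\nu} \ge \frac{p(4\nu-1)}{4\nu^2} \ge \frac{p}{4\nu}.
\]
Calling $\Fp{x}{k}$ queries $\phi$ at precision $k+1$, which corresponds to $\CF{x}{2^{k}}$ in the paper's convention. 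If that off-by-one fails, I would absorb a constant factor of $2$ using accuracy efficiency, or simply start at $k = 1$ and note $\nu/p \ge 1$.
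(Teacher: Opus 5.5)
Your proposal is correct and is essentially the paper's proof. Since $U_k-L_k=2^{1-k}$ exactly for $x\ge 1$, your stopping test $U_k\le(1+1/\nu)L_k$ is the paper's test $2^{1-k}\nu\le L_k$, and the correctness argument and the $O(\log(\nu/p))$ bound on the number of rounds are the same.

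The only slip is in the constant bookkeeping. The minimal $k$ with $2^{1-k}\le p/(4\nu)$ satisfies $2^k\ge 8\nu/p$, not $2^k<8\nu/p$, so what you get is $k^*\le\lceil\lg(8\nu/p)\rceil$. Every oracle query then has precision $k+1\le\lceil\lg(8\nu/p)\rceil+1$ and is covered directly by the definition of $\mathcal{C}_F(x,8\nu/p)$. You need neither accuracy efficiency, which this proposition does not assume, nor a monotonicity step from $2^{k^*}$. Even your first threshold $2^{1-k}\le p/(2\nu+2)$ already suffices, since $4(\nu+1)\le 8\nu$.
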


\begin{proof}
See \cref{sec:pmf}.
\end{proof}

\section{Online Random Sampling: Overview}
\label{sec:basic-online-sampling}

As a starting point, \cref{alg:basic-sampling} shows a simple, stateless
procedure for random sampling.
It is based on the \citet{han1997} interval method adapted to the case that
the target CDF is given via an oracle.

\begin{algorithm}[H]
\caption{Online random sampling using the (stateless) \citet{han1997} interval method.}
\label{alg:basic-sampling}
\begin{algorithmic}[1]
\Require{Oracle for CDF $F$}
\Ensure{Random sample $X \sim F$}
\Procedure{IntervalSampleStateless}{$F$}
\State Initialize $r \gets 0$, $\varlo \gets 0$, $\varhi \gets \infty$, $\hat Q \gets 0$.
\While{$\varlo < \varhi$}
  \State $\varmid \gets \begin{cases}
    2 \varlo  & \text{if $\varhi = \infty$} \\
    \floor{(\varhi+\varlo)/2} & \text{otherwise}
  \end{cases}$

  \If{$\Fu{\varmid}{r} \leq \hat Q$}
    \State $\varlo \gets \varmid+1$
  \ElsIf{$\hat Q + 2^{-r} \leq \Fl{\varmid}{r}$}
    \State $\varhi \gets \varmid$
  \Else
    \State $r \gets r+1$
    \State $B_r \gets \mbox{fresh random bit}$ \State $\hat Q\gets \sum_{i=1}^r B_i 2^{-i}$
  \EndIf
\EndWhile
\State \Return $\varlo$
\EndProcedure
\end{algorithmic}
\end{algorithm}

This sampling algorithm performs a dichotomous search over
$\Nat$ to find the value $X$ with $Q \in [F(X-1), F(X)]$
where $Q \sim \Uniform[0,1]$ is an infinitely precise real
that is known to lie in the  interval $[\hat Q, \hat Q + 2^{-r}]$.
The oracle for $F$ is queried at the current midpoint $\varmid$ with $r$ bits of precision.
If $\hat Q$ is to the right of the upper bound, we recurse right.
If $\hat Q + 2^{-r}$ is to the left of the lower bound, we recurse left.
If neither condition holds, then a fresh bit is sampled to narrow the
interval containing $Q$.

\begin{proposition}
\label{prop:basic-sampling-complexity}
\Cref{alg:basic-sampling} returns a sample from distribution $F$, and it
uses $H(F) + O(1)$ random bits in expectation. Moreover, if $F$ is
efficient, then the expected runtime is
\begin{equation}
O\!\left(\expect{ \log(X) \thinspace \CF{X}{X} } \right).
\end{equation}
\end{proposition}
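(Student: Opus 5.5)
We split the argument into three parts: correctness, entropy cost, and runtime.

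\emph{Correctness.} Couple the algorithm with the infinitely precise real $Q=\sum_{i\ge1}B_i2^{-i}\sim\Uniform[0,1]$. We show that the returned value is the unique $X$ with $F(X-1)\le Q<F(X)$; ties occur on a null set. The invariant to maintain is
\[
F(\varlo-1)\le Q<F(\varhi)
\]
(with $F(\infty)\defeq1$), together with $Q\in[\hat Q,\hat Q+2^{-r}]$. By \cref{hrup5}, if $\Fu{\varmid}{r}\le\hat Q$ then $F(\varmid)<Q$, so moving $\varlo$ to $\varmid+1$ preserves the invariant. If $\hat Q+2^{-r}\le\Fl{\varmid}{r}$ then $Q<F(\varmid)$, which justifies setting $\varhi\gets\varmid$. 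For termination, note that almost surely $Q\notin\set{F(j)}$. Each comparison is therefore eventually decided: the margin $\abs{Q-F(\varmid)}$ exceeds $2\cdot2^{-r}$ once $r$ is large, and then one of the two tests succeeds because $\Fu{}{}-\Fl{}{}\le2^{-r}$. Finally, the doubling phase ends once $F(\varmid)>Q$, which happens almost surely since $F\to1$.

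\emph{Entropy.} Let $R$ be the final number of bits drawn, and define the event
\[
E_r\defeq\set{\text{some }F(j)\text{ queried during the run lies within }O(2^{-r})\text{ of }Q}.
\]
If $r$ bits are drawn without termination, then $E_r$ holds: bits are drawn only when the current query point is undecided, i.e.\ $Q$ is within $3\cdot2^{-r}$ of it. The natural bound is
\[
\Pr[R>r]\le\sum_x\min\set{f(x),\,c\,2^{-r}\cdot\#\set{\text{relevant endpoints near }[F(x-1),F(x)]}}.
\]
For the sake of concreteness, condition on $X=x$. Given $X=x$, the point $Q$ is uniform on an interval of length $f(x)$, and there are only two endpoints whose proximity matters. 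This gives $\Pr[R>r\mid X=x]\le\min\set{1,c\,2^{-r}/f(x)}$, and summing over $r$ yields
\[
\expect{R\mid X=x}\le\lg(1/f(x))+O(1).
\]
Averaging over $x$ gives $H(F)+O(1)$.

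\textbf{Main obstacle.} The delicate point in this bound is that undecided comparisons at intermediate midpoints (not equal to $x$ or $x-1$) might appear to cost bits. The fix is to show that such a midpoint $j$ satisfies $F(j)\notin(F(x-1),F(x))$. Hence if $Q$ is within $3\cdot2^{-r}$ of $F(j)$, it is also within $3\cdot2^{-r}$ of the nearer endpoint of $[F(x-1),F(x)]$. So only distance to the two endpoints matters, regardless of which midpoint triggered the draw.

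\emph{Runtime.} The search performs $O(\log X)$ iterations that do not draw bits: about $\lg X$ doubling steps, then a bisection over a range of length at most $X+1$. It also performs $R$ bit-drawing iterations. Each iteration calls the oracle at an index at most $2X+1$ with precision $r\le R$, costing $O\bigl(\CF{2X+1}{2^{R}}\bigr)$. Index efficiency gives
\[
\CF{2X+1}{\nu}=O\bigl(\CF{X}{\nu}\bigr),
\]
and accuracy efficiency gives $\CF{X}{2^{R}}\le g(R/\log X)\,\CF{X}{X}$. The tail bound $\Pr[R>\lg(1/f(x))+t\mid X=x]\le c2^{-t}$ makes every polynomial moment of $R$ finite. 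Combining these, the total expected cost is $O\bigl(\expect{\log(X)\,\CF{X}{X}}\bigr)$ as claimed. There is one remaining step: absorbing the term $\lg(1/f(X))$ into $\log X$ in expectation. For this I would argue via $\expect{\lg(1/f(X))}=H(F)$ together with a Cauchy--Schwarz-type split, or alternatively accept it as a hidden dependence under the paper's convention for efficient oracles. This absorption is the step I am least sure of.
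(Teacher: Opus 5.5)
Your correctness and entropy arguments are sound and follow the paper's route. The paper proves this proposition by specializing its analysis of \cref{alg:online-sampling} to $M=\Delta=1$. Your ``main obstacle'' observation is exactly the argument of \cref{prop:precision-geometric-tail}: every queried midpoint $j$ has $F(j)\notin(F(x-1),F(x))$, so an undecided comparison at precision $r$ puts $Q$ within $O(2^{-r})$ of $F(x-1)$ or $F(x)$.

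The gap is in the runtime, at the step you flagged. Condition on $X=x$ and combine the geometric tail of $R$ with accuracy efficiency, best normalized against $\nu=1/f(x)$ as in \cref{thm:accuracy-efficient}. What you can actually prove is a conditional cost $O(h(x,1/f(x)))$ with $h(x,y)=\log(x+y)\,\CF{2x}{y}$. This depends genuinely on $\log(1/f(x))$, which can be arbitrarily larger than $\log x$ (take $x=0$ with $f(0)$ tiny). So replacing $1/f(X)$ by $X$ can only hold on average, and neither of your proposed fixes delivers that:
\begin{itemize}
\item The identity $\expect{\lg(1/f(X))}=H(F)$ controls only the first moment of $\log(1/f(X))$. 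It says nothing about the higher powers produced by $g$, or about their product with $\CF{X}{X}$.
\item Cauchy--Schwarz would leave a factor like $\expect{(\log X\,\CF{X}{X})^2}^{1/2}$, which is not $O(\expect{\log X\,\CF{X}{X}})$ in general.
\item The convention on hidden dependence only hides the constants of the polynomial $g$. It does not license a change of argument from $1/f(X)$ to $X$.
\end{itemize}

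The missing ingredient is \cref{lemma:ff}. Suppose $h$ is nondecreasing in both arguments and $h(x,y)/h(x',y')\le C\bigl(\log x\log y/(\log x'\log y')\bigr)^k$ for $x'\le x$, $y'\le y$. Then $\sum_i a_i h(i,1/a_i)\le c\sum_i a_i h(i,i)$ for every probability vector $(a_i)$. The proof splits the indices three ways:
\begin{itemize}
\item $i\le e^k$: there are finitely many terms, and $a\log^k(1/a)=O(1)$ on $[0,1]$.
\item $a_i\ge(i+2)^{-2}$: here $1/a_i\le(i+2)^2$, and the growth condition lowers the second argument from $(i+2)^2$ to $i$ at constant cost.
\item $a_i<(i+2)^{-2}$: since $a\log^k(1/a)$ is increasing for small $a$, each term is $O\bigl(h(0,0)(i+2)^{-2}\log^{2k}(i+2)\bigr)$. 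This series is summable, and $h(0,0)\le\sum_i a_i h(i,i)$ by monotonicity.
\end{itemize}
Apply the lemma with $a_i=f(i)$ and the $h$ above; its growth condition follows from index and accuracy efficiency. Then use index efficiency to pass from $\CF{2X}{X}$ to $\CF{X}{X}$. This completes your runtime bound; without such an averaging lemma the argument does not close.
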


\begin{proof}
This is a direct corollary of the analysis in
\cref{sec:online-sampling-single}
for the special case $\Delta=M=1$, where \cref{alg:basic-sampling} and
\cref{alg:online-sampling} are essentially equivalent.
\end{proof}

While \cref{alg:basic-sampling} is efficient in runtime, it
wastes significant entropy. In the remainder of this paper, we present our novel online random
sampling method. The basic plan, given in \cref{alg:loop}, is to repeatedly call a method \Call{Sample}{}
(\cref{alg:online-sampling}).
The \Call{Sample}{} method generates a single
variable $X$ given a CDF $F$ using i.i.d.~coin flips; it additionally consumes
and produces a discrete uniform random state
$(Z,M)$ of recycled randomness,
where $Z \mid M \sim \Uniform\set{0,\dots,M-1}$ and $M \leq \Delta$.

\begin{algorithm}[H]
\caption{Online random sampling with randomness recycling.}
\label{alg:loop}
\begin{algorithmic}[1]
\Require Entropy loss parameter $\epsilon \in [0,1]$
\Ensure Exact online random sampling with entropy loss $L_n \leq \epsilon n + 11\lg(n+2)$
  and $O(\log\min\set{n,1/\epsilon})$ bits persistent space, after any $n \ge 1$ iterations
\State $t \gets 0$, $\Delta \gets 0$, $(Z, M) \gets (0, 1)$
\While{true} \label{algline:online-sampling-loop}
\If{$\epsilon\, t < 0.1 $}
  \State $t \gets t+1$
  \State $\Delta \gets 5000 t \lceil \lg(t+2) \rceil$
  \EndIf
  \State \textbf{Receive} CDF $F$ of next target distribution $F$
  \State $(X, (Z,M)) \gets \Call{Sample}{F, (Z, M), \Delta}$
    \Comment{call \cref{alg:online-sampling}}
  \State \textbf{Emit} $X$
\EndWhile
\end{algorithmic}
\end{algorithm}

With this parametrization, the values after each iteration $k$ satisfy
\begin{equation}
0 \le Z < M \leq \Delta \leq \poly(t) \leq \poly( \min\set{1/\epsilon, k} ).
\end{equation}

The total persistent space to store the global state of \cref{alg:loop}
is therefore $O\!\left(\log \min\set{1/\epsilon, k} \right)$.
We will establish the following main result.
\begin{theorem}[store=entropy-bound-overall,restate-keys={note={Restated}}]
\label{thm:entropy-bound-overall}
The entropy loss $L_n$ after $n$ iterations
of the sampling loop in \cref{alg:loop} satisfies
\begin{equation}
L_n \leq \epsilon n + 11 \lg(n+2).
\end{equation}
\end{theorem}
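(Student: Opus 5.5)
The plan is a potential-function (amortized) argument in which the recycled state is treated as stored entropy. Define the potential after iteration $i$ as $\Phi_i \defeq \expect{\lg M_i}$, with $\Phi_0 = \lg 1 = 0$. Since $Z_i \mid M_i$ is uniform on $\set{0,\dots,M_i-1}$, the state holds exactly $\lg M_i$ bits of unused entropy.

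The core per-iteration claim, to be proved in the single-step analysis of \Call{Sample}{}, is
\begin{equation*}
\expect{T^{(i)}} + \Phi_{i-1} \;\le\; \expect{H(F_i)} + \Phi_i + \delta(\Delta_i),
\end{equation*}
where $T^{(i)}$ is the number of fresh flips used in iteration $i$, and $\delta(\Delta) = O(1/\Delta)$, say $\delta(\Delta) \le c/\Delta$ for an explicit constant $c$. Summing over $i=1,\dots,n$ telescopes the potential. Since $\Phi_n \le \lg \Delta_n$, this gives
\begin{equation*}
L_n \;\le\; \sum_{i=1}^n \delta(\Delta_i) + \lg \Delta_n .
\end{equation*}

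The single-step inequality is obtained by decomposing where entropy is lost in Steps 1--3 of \cref{fig:demo}.

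\begin{itemize}
\item The uniform $Q$ is formed from $Z/M$ plus fresh bits. Sampling $X$ by the dichotomous search of \cref{alg:basic-sampling} costs $H(F)$ plus the unused information $\expect{\lg(1/(\text{final interval width}))}$ about $Q$. After conditioning on $X$, the rescaled width is $2^s M f(X)$, and this is what the new state should capture.
\item \textbf{Rescaling.} Choosing $s$ so that the width lies in $[\Delta/4,\Delta]$ means the fresh bits consumed, up to the determination of $\floor{2^sMQ}$, equal $s + \lg(\text{width}) - \lg M - \lg f(X)$. Hence $\lg M' \ge \lg(\text{width}) - \lg\bigl(\text{width}/(\text{width}-4)\bigr)$, using $M' \ge \text{width} - 4$ from the $\le 2$ gaps at each end.
\item \textbf{Boundary loss.} This costs $\lg\bigl(1/(1-16/\Delta)\bigr) = O(1/\Delta)$.
\item \textbf{Rejection.} The event that $V$ lands in a red boundary region has probability $O(1/\Delta)$. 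On that event the state is reset, or the red-region value is itself recycled. Either way the expected loss is $O(\Delta^{-1}\lg\Delta)$ bits, or $O(1/\Delta)$ if the red part is recycled as a small uniform.
\item \textbf{Fractional carry.} I also have to account for the fact that bit-level draws of $Q$ beyond what is needed are not wasted: the search stops once $\floor{2^sMQ}$ is determined, so there is no extra precision.
\end{itemize}

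This accounting is the main obstacle. Knuth--Yao-type arguments usually give an $O(1)$ overhead per sample, not $O(1/\Delta)$. The point to establish is that, because the search already carries $\lg M$ bits of state, the $O(1)$ "last-bit" overhead of stateless interval sampling is absorbed into $\lg M'$. I would prove this by showing that the total flips drawn, plus $\lg M$, equals $\lg(2^sM)$ exactly, and then compare that quantity with $\lg M' - \lg f(X)$ directly. If the rejection loss is really $O(\lg\Delta/\Delta)$, the sum below still works with a slightly different constant.

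Finally, the schedule. While $\epsilon t < 0.1$, we have $\Delta_i = 5000\, i\lceil\lg(i+2)\rceil$. With $\delta(\Delta) \le c\lg\Delta/\Delta$, the sum $\sum_i c\lg\Delta_i/\Delta_i$ is of order $\sum_i 1/(5000\, i) \le \lg(n+2)$, using the constants. Also $\lg\Delta_n \le \lg 5000 + \lg n + \lg\lceil\lg(n+2)\rceil$. Together these fit within $11\lg(n+2)$; the constant 11 comes from bounding $\lg 5000 \le 13 \le$ a multiple of $\lg(n+2)$ for $n \ge 1$, which is where the slack goes. After $t$ freezes at $t^* \approx 0.1/\epsilon$, each further step costs $\delta(\Delta_{t^*}) \le \epsilon$ for the chosen constant 5000, contributing at most $\epsilon n$. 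The persistent term $\lg\Delta$ is then bounded by the same logarithmic expression, since $t^* \le n$ whenever the freeze has occurred.
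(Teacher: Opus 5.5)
Your outer structure is the paper's. The potential $\expect{\lg M_i}$ telescopes exactly as in \cref{prop:entropy-loss}, giving $L_n\le\lg\Delta_n+\sum_i\delta(\Delta_i)$. Your handling of the schedule (a harmonic-type sum in the growing phase, at most $\epsilon$ per step after the freeze, and $\lg\Delta_n=O(\lg(n+2))$) is the paper's final computation.

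The gap is in the per-step bound $\delta(\Delta)=O(\lg\Delta/\Delta)$. You correctly flag it as the crux, but then resolve it with a claim that is false for \cref{alg:online-sampling}. \textsc{IntervalSample} runs to completion before $s$ is even computed, and it stops when $X$ is determined, not when $\floor{2^sMQ}$ is. Whenever $Q$ lands near $F(X-1)$ or $F(X)$, it draws $r>s$ bits. The surplus bits $B_{s+1},\dots,B_r$ are then thrown away by \textsc{Recycle}, since $V$ uses only $B_1,\dots,B_s$. So the number of fresh flips is $T=\max\{r,s\}$, and your identity ``flips drawn plus $\lg M$ equals $\lg(2^sM)$ exactly'' holds only on the event $\{r\le s\}$. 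The overshoot $\expect{T}-s=\sum_{k\ge s}\Pr(r>k)$ is not controlled by any item in your decomposition (rescaling, boundary gaps, rejection).

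The missing ingredient is a tail bound on the precision the search reaches, conditioned on the output (\cref{prop:precision-geometric-tail}):
\begin{itemize}
\item If $t$ is incremented from $k$ to $k+1$, neither branch updating $X^-$ or $X^+$ fired at the current midpoint. Together with $\hat Q\le Q\le\hat Q+2^{-k+1}$, this gives $\abs{F(X^{\mathrm{mid}})-Q}\le 3\cdot 2^{-k}$.
\item Since $X^{\mathrm{mid}}$ lies on one side of $x$, this puts $Q$ within $3\cdot 2^{-k}$ of $F(x-1)$ or $F(x)$.
\item Because $Q\mid\{X=x\}$ is uniform on $[F(x-1),F(x)]$, it follows that $\Pr(t>k\mid X=x,M)\le 6/(2^kf(x))$.
\item Using $r=\max\{0,t-\floor{\lg M}\}$ and $\theta=2^sMf(x)\ge\Delta/4$, this gives $\sum_{k\ge s}\Pr(r>k)\le 24/\theta$.
\end{itemize}
This is the actual reason the $O(1)$ last-bit overhead of the stateless sampler disappears: \textsc{Recycle} pays for $s$ bits regardless, and $s$ is large enough that the search rarely needs more.

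With that in hand, the exact accounting is $\expect{T-\lg M'\mid M,X}=s+\sum_{k\ge s}\Pr(r>k)-\theta^{-1}(U^+-U^-)\lg(U^+-U^-)$. Combined with $U^+-U^-\ge\theta-4$ and $s-\lg\theta=\lg(1/f(x))-\lg M$, it yields a per-step overhead of at most $12\lg\theta/\theta\le 48\lg\Delta/\Delta$. After that, the rest of your argument goes through.
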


\Cref{alg:online-sampling} provides the main individual sampling procedure \Call{Sample}{}. This procedure -- in particular, the subroutine \Call{Recycle}{} -- implements the rule described
previously in \cref{fig:demo}.
We describe \Call{Sample}{} in \cref{sec:online-sampling-single}, and
analyze the sampling loop from \cref{alg:loop} in \cref{sec:online-sampling-overall}.

\begin{algorithm}[!t]
\caption{Single iteration of online random sampling with randomness recycling.}
\label{alg:online-sampling}

\begin{algorithmic}[1]
\Require{%
  \begin{tabular}[t]{@{}l}
    Oracle for CDF $F$  \\
    Random state $(Z,M)$ such that $Z \mid M \sim \mathrm{Unif}\set{0,\dots,M-1}$ \\
    Parameter $\Delta \geq M$
  \end{tabular}}
\Ensure{
  \begin{tabular}[t]{@{}l}
    Random sample $X \sim F$ \\
    Random state $(Z',M')$ such that
    $Z' \mid M' \sim \mathrm{Unif}\set{0,\dots,M'-1}$ and $M' \leq \Delta$
  \end{tabular}}
\Procedure{Sample}{$F, (Z, M), \Delta$}
\State $(X, (B_1, \dots, B_r)) \leftarrow \Call{IntervalSample}{F,(Z,M)}$
\State $(Z', M') \leftarrow \Call{Recycle}{F, (Z,M), \Delta, X, (B_1, \dots, B_r)}$
\State  \Return $(X, (Z', M'))$
\EndProcedure

\bigskip

\Procedure{IntervalSample}{$F, (Z, M)$}
  \State $r \gets 0, t \gets 0, \varlo \gets 0, \varhi \gets \infty, \hat Q \gets 0$
  \While{$\varlo < \varhi$}
  \State $\varmid \gets \begin{cases}
  2 \varlo & \text{if $\varhi = \infty$} \\
  \floor{(\varhi+\varlo)/2} & \text{otherwise}
  \end{cases}
  $
  \If{$\Fu{\varmid}{t} \leq \hat Q$}
    \State $\varlo \leftarrow \varmid + 1$
  \ElsIf{$\hat Q + 2^{-t+1} \leq \Fl{\varmid}{t}$}
    \State $\varhi \leftarrow \varmid$
  \Else
    \State $t \leftarrow t+1$
    \If{$t \geq 1 + \lfloor \lg M \rfloor$}
    \State $r \leftarrow r+1$
      \State $B_{r} \gets \mbox{fresh random bit}$ \label{algline:online-sampling-bit1}
      \State $\hat Q \gets \left(Z + \sum_{i=1}^{r} B_i 2^{-i}\right)/{M}$
      \label{qform1}
      \Else
  \State $\hat Q \leftarrow \Call{ApproxDivision}{Z/M,t+1} - 2^{-t-1}$ \label{qform2}
\EndIf
    \EndIf
  \EndWhile
  \State \Return $(\varlo, (B_1, \dots, B_r))$
\EndProcedure

\bigskip

\Procedure{Recycle}{$F, (Z, M), \Delta, X, (B_1, \dots, B_r)$}
  \State $\hat p \leftarrow \Call{EstimatePMF}{F, X, 1}$
  \State $s \gets \max \set*{ 0, \ceil*{\lg  \frac{\Delta}{M\hat p}} - 2}$
  \If{$r < s$}
    \State $(B_{r+1}, \dots, B_s) \gets \mbox{fresh random bits}$ \label{algline:online-sampling-bit2}
  \EndIf
  \State $U^- \gets \left\lceil 2^s M \Fu{X-1}{s + \lceil \lg M \rceil} \right\rceil$
  \State $U^+ \gets \left\lfloor 2^s M \Fl{X}{s + \lceil \lg M \rceil} \right\rfloor$
  \State $V \gets 2^s\!\left(Z + \sum_{i=1}^s B_i 2^{-i} \right)$
  \If{$U^- \leq V < U^+$}
    \State \Return $(V - U^-, U^+ - U^-)$
  \Else
    \State \Return $(0, 1)$
  \EndIf
\EndProcedure
\end{algorithmic}
\end{algorithm}

\section{Single Variable Sampling with Uniform Random States}
\label{sec:online-sampling-single}
Here $\Delta \ge M$ is an integer parameter dictating
the size of the output state. We will analyze the call
\begin{equation*}
(X, (Z', M')) \leftarrow \Call{Sample}{F,(Z,M),\Delta}.
\end{equation*}

Fix $M$, and suppose that $Z \sim\Uniform\set{0, \dots, M-1}$
and $B \defeq (B_1, B_2, \dots)$ is
an infinite sequence of random bits from the entropy source.
Then the following quantity is uniformly distributed over $[0,1]$:
\begin{equation}
Q = \frac{Z + \sum_{i=1}^{\infty} B_i 2^{-i}}{M}.
\label{eq:q-uniform}
\end{equation}

\Call{Sample}{} begins by calling
$\Call{IntervalSample}{F,(Z,M)}$,
which uses the randomness in $Z$ to extend the baseline method $\Call{IntervalSampleStateless}{F}$.
The call to \Call{IntervalSample}{} returns a sample
$X \sim F$ as well as the random bits $(B_1, \dots, B_r)$
used to generate the outcome.
The interval $[\hat Q, \hat Q + 2^{-t+1}]$ approximates the real quantity $Q$.
Next, the call to \Call{Recycle}{} attempts to recycle the used random
bits into a new random state $(Z', M')$, which may require fresh
random bits to obtain a wide enough discrete interval around $Q$.

\begin{proposition}
\label{obs:interval-sample-invariants}
The following invariants hold in each iteration of \Call{IntervalSample}{}:
\begin{align}
r=\max &\set{0,t-\floor{\lg M}}, \label{hbg0} \\
\hat Q      \leq{} & Q \leq \hat Q+2^{-t+1}, \label{hbg1}
  \\
F(\varlo-1) \leq{} & Q \leq F(\varhi). \label{hbg2}
\end{align}
\end{proposition}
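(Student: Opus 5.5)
We prove all three invariants together by induction on the iterations of the while loop. Concretely, we check that they hold at initialization and are preserved by each of the three branches of the loop body. At initialization $r=t=0$, $\varlo=0$, $\varhi=\infty$ and $\hat Q=0$. Then \cref{hbg0} holds because $\floor{\lg M}\ge 0$, \cref{hbg1} reads $0\le Q\le 2$, and \cref{hbg2} reads $F(-1)=0\le Q\le F(\infty)\defeq 1$. Throughout, $Q$ is the fixed real of \cref{eq:q-uniform}, and the algorithm only ever reveals prefixes of the bits $B_i$ appearing in it.

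For \cref{hbg2}, only the first two branches change $\varlo,\varhi$, and neither changes $\hat Q$ or $t$. In the first branch, $F(\varmid)<\Fu{\varmid}{t}\le\hat Q\le Q$ by \cref{hrup5} and the inductive \cref{hbg1}. Hence the new $\varlo=\varmid+1$ satisfies $F(\varlo-1)\le Q$. In the second branch, $Q\le \hat Q+2^{-t+1}\le \Fl{\varmid}{t}<F(\varmid)$, so the new $\varhi=\varmid$ satisfies $Q\le F(\varhi)$. In both branches \cref{hbg0,hbg1} are untouched.

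The third branch increments $t$ and then splits into two cases. For \cref{hbg0}: if the new $t\ge 1+\floor{\lg M}$, we increment $r$. Before the step we had $t-1\ge\floor{\lg M}$, so the old $r$ was $t-1-\floor{\lg M}$ and the new $r$ equals $t-\floor{\lg M}$. There is one boundary case: when $t-1=\floor{\lg M}$ the old $r$ is $0=\max\{0,(t-1)-\floor{\lg M}\}$ and the formula still matches. If instead $t\le\floor{\lg M}$, then $r$ stays $0$, as required.

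For \cref{hbg1} we again split by case. In the bit-drawing case, $\hat Q=(Z+\sum_{i\le r}B_i2^{-i})/M\le Q$, and $Q-\hat Q\le 2^{-r}/M$. Since $M\ge 2^{\floor{\lg M}}$ and $r=t-\floor{\lg M}$, this gives $Q-\hat Q\le 2^{-t}\le 2^{-t+1}$. In the \Call{ApproxDivision}{} case, \cref{prop:approx-div} gives $q'$ with $\abs{q'-Z/M}<2^{-t-1}$. Setting $\hat Q=q'-2^{-t-1}$ yields $\hat Q<Z/M\le Q$. For the upper bound, we have $Q<(Z+1)/M$ and $1/M\le 2^{-\floor{\lg M}}\le 2^{-t}$ in this case. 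Therefore
\[Q-\hat Q < \tfrac{Z}{M}-q'+2^{-t-1}+\tfrac1M < 2^{-t}+2^{-t}=2^{-t+1}.\]

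The main obstacle is this last case. There the interval width $2^{-t+1}$ has to absorb both the approximation error of \cref{prop:approx-div} and the unrevealed fractional part of $Q$ inside $[Z/M,(Z+1)/M)$. This is exactly why the algorithm uses the factor $2^{-t+1}$ rather than $2^{-t}$, and why the switch to fresh bits happens at $t=1+\floor{\lg M}$. The remaining work is routine bookkeeping of these constants.
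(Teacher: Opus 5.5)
Your proof is correct and follows essentially the same route as the paper: the same case split on whether the new $t$ reaches $1+\floor{\lg M}$ for \cref{hbg0}, the same two cases for \cref{hbg1} (fresh bits versus \Call{ApproxDivision}{}, using $1/M\le 2^{-t}$ when $t\le\floor{\lg M}$), and the same use of \cref{hrup5} together with \cref{hbg1} for \cref{hbg2}. One cosmetic point: strictly $Q\le (Z+1)/M$ rather than $Q<(Z+1)/M$, since all $B_i$ may equal $1$; your final strict bound still holds because the \Call{ApproxDivision}{} error is strict.
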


\begin{proof}
We show \cref{hbg0} by induction on $t$.
The base case holds since $t = r = 0$ and $M \geq 1$.
For the induction step, if the new value has
$t < 1 + \lfloor \lg M \rfloor$, then
$t - \lfloor \lg M \rfloor \leq 0$ and $r$ remains zero.
If the new value has $t \geq 1 + \lfloor \lg M \rfloor$, then both $t$ and
$r$ increment, while $t - \lfloor \lg M \rfloor$ also increments and is
strictly positive.

For \cref{hbg1}, there are two cases.
Suppose $r > 0$ and $\hat Q$ is derived at \cref{qform1}.
So $\hat Q \leq Q \leq \hat Q + 2^{-r}/M$, as the missing bits
$B_{r+1}, \dots$ can contribute at most $2^{-r}$ in total.
By \cref{hbg0}, we have
$2^{-r}/M = 2^{-t + \lfloor \lg M \rfloor} / M \leq 2^{-t} \leq 2^{-t+1}$.
Otherwise, suppose $r = 0$ and $\hat Q$ is derived at \cref{qform2}.
Then we have $Z/M \leq Q \leq (Z+1)/M$.
By the specification of \Call{ApproxDivision}{}, we have
$Z/M - 2^{-t} \leq \hat Q \leq Z/M$
and
$Q - \hat Q \leq (Z+1)/M - (Z/M - 2^{-t}) \leq 1/M + 2^{-t}$.
Since $r = 0$, we have $t \leq \lfloor \lg M \rfloor$ and so $1/M \leq 2^{-t}$.

For \cref{hbg2}, first note that $F(-1) = 0 \leq Q \leq 1 = F(\infty)$ initially.
Next, observe that if $\varlo$ updates to $\varmid+1$,
then by \cref{hbg1} we have $F(\varmid) \leq \Fu{\varmid}{t} \leq \hat Q \leq Q$.
So $F(\varlo - 1) \leq Q$.
Similarly, if $\varhi$ updates to $\varmid$, then by \cref{hbg1} we have
$Q \leq \hat Q + 2^{-t+1} \leq \Fl{\varmid}{t} \leq F(\varmid)$.
\end{proof}

\begin{proposition}
\label{prop:recycling-integer-bounds}
The following invariants hold for $\Call{Recycle}{F, (Z,M), \Delta, X, (B_1, \dots, B_r)}$:
\begin{eqnarray}
\Delta/4       \leq & 2^s M f(X) & \leq \Delta \label{hag0} \\
2^s M F(X) - 2 \leq & U^+        & \leq 2^s M F(X) \label{hag1} \\
2^s M F(X-1)   \leq & U^-        & \leq 2^s M F(X-1) + 2 \label{hag2} \\
V              \leq & 2^s M Q    & \leq V + 1. \label{hag3}
\end{eqnarray}
\end{proposition}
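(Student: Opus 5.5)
The plan is to verify the four inequalities one at a time, directly from the definitions in \Call{Recycle}{}. I will use three inputs: the guarantee of \cref{prop:estimate-pdf} with $\nu = 1$, the bound-oracle property \cref{hrup5}, and the representation \cref{eq:q-uniform} of $Q$. Write $p \defeq f(X)$ and $k \defeq s + \ceil{\lg M}$. By \cref{prop:estimate-pdf} we have $\hat p \le p \le 2\hat p$. Note that $p > 0$ for the returned $X$, since $Q$ lands in $[F(X-1),F(X)]$ almost surely only on atoms of positive mass, so \Call{EstimatePMF}{} terminates.

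\textbf{Invariant \cref{hag0}.} I split on whether the maximum defining $s$ is attained at $0$.
\begin{itemize}
\item If $s = \ceil{\lg(\Delta/(M\hat p))} - 2 > 0$, then $\Delta/(4M\hat p) \le 2^s < \Delta/(2M\hat p)$. Multiplying by $M p$ and using $\hat p \le p \le 2\hat p$ gives $2^sMp \ge 2^sM\hat p \ge \Delta/4$ and $2^sMp \le 2\cdot 2^sM\hat p < \Delta$.
\item If $s = 0$, then $\ceil{\lg(\Delta/(M\hat p))} \le 2$. Hence $M\hat p \ge \Delta/4$, and so $Mp \ge \Delta/4$. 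The upper bound is $Mp \le M \le \Delta$, using $p \le 1$ and the hypothesis $\Delta \ge M$.
\end{itemize}

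\textbf{Invariants \cref{hag1} and \cref{hag2}.} These are symmetric. By \cref{hrup5}, $\Fl{X}{k} < F(X) < \Fu{X}{k}$ and $\Fu{X}{k} - \Fl{X}{k} \le 2^{-k}$, so
\[
F(X) - 2^{-k} < \Fl{X}{k} < F(X).
\]
Multiplying by $2^sM$ and using $2^s M 2^{-k} = M/2^{\ceil{\lg M}} \le 1$ gives $2^sM\Fl{X}{k} \in (2^sMF(X) - 1,\, 2^sMF(X))$. Taking the floor loses less than $1$, which yields $2^sMF(X) - 2 \le U^+ \le 2^sMF(X)$.

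The same argument with the ceiling gives \cref{hag2} for $U^-$. The only point to check is the edge case $X = 0$. There the convention $\phi(-1,\cdot) = 0$ gives $\Fu{-1}{k} = 2^{-(k+1)}$, which still lies in $(F(-1), F(-1) + 2^{-k})$ with $F(-1) = 0$, so the bound holds unchanged.

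\textbf{Invariant \cref{hag3}.} First I confirm that the bits $B_1,\dots,B_s$ used in $V$ are exactly the first $s$ bits in \cref{eq:q-uniform}. They are either the bits already drawn in \Call{IntervalSample}{} (when $r \ge s$), or those bits extended by fresh coin flips at \cref{algline:online-sampling-bit2}. Then
\[
2^sMQ = 2^s\Bigl(Z + \sum_{i\ge1} B_i2^{-i}\Bigr) = V + 2^s\sum_{i>s} B_i 2^{-i}.
\]
The tail sum lies in $[0, 2^{-s}]$, so $V \le 2^sMQ \le V+1$. Since $V = 2^sZ + \sum_{i\le s} B_i 2^{s-i}$, $V$ is an integer.

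The main obstacle is \cref{hag0}, specifically handling the factor-$2$ slack in $\hat p$ together with the $s = 0$ truncation; the remaining invariants are routine precision bookkeeping.
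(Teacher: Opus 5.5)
Your proposal is correct and follows essentially the same route as the paper: \cref{hag0} from $\hat p \le f(X) \le 2\hat p$, the definition of $s$, and $M \le \Delta$; the bounds on $U^{\pm}$ from the oracle error $2^{-(s+\lceil \lg M\rceil)} \le 1/(2^sM)$ together with the floor/ceiling rounding; and \cref{hag3} from the tail of the binary expansion of $Q$. Your explicit case split on $s$ and your check of the $X=0$ convention spell out details that the paper leaves implicit.
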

\begin{proof}
By specification of \Call{EstimatePMF}{},
we have $\hat p \leq f(X) \leq 2 \hat p$.

The bounds \cref{hag0} hold from the specification of $\hat p$, the
definition of $s$, and our hypothesis that $M \leq \Delta$.

The bounds \cref{hag1} hold because
\begin{equation}
\frac{U^+}{2^s M} \leq \Fl{X}{s + \ceil{\lg M}} \leq F(X) \leq \Fl{X}{s + \ceil{\lg M}} + \frac{1}{2^s M} \leq \frac{U^+ + 2}{2^s M}.
\end{equation}
The bounds \cref{hag2} are completely analogous.

The bounds \cref{hag3} hold since
$2^s M Q - V = 2^s \sum_{i=s+1}^{\infty}B_i 2^{-i}$ and
$0 \leq 2^s \sum_{i=s+1}^{\infty}B_i 2^{-i} \leq 1$.
\end{proof}

\begin{proposition}
\label{prop:interval-search}
The sampled value $X$ is the unique index with $Q \in [F(X-1),F(X)]$
(aside from the probability-zero event that $Q$ equals a CDF boundary),
and $X$ is $F$-distributed.
Moreover, if we condition on the event $\set{X = x}$ for
any $x$ with $f(x) > 0$,
then $Q$ is uniformly drawn from the range $[F(x-1), F(x)]$.
\end{proposition}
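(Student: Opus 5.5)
We derive the statement from the invariants of \cref{obs:interval-sample-invariants}, together with the fact that $Q$ in \cref{eq:q-uniform} is exactly $\Uniform[0,1]$ whenever $Z \mid M \sim \Uniform\set{0,\dots,M-1}$ and the bits $B_i$ are fresh and independent.

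\textbf{Step 1: termination.} We first show that \Call{IntervalSample}{} halts with probability one. Condition on the probability-one event that $Q$ is not equal to any value $F(i)$ with $i\in\Nat$; this set is countable and $Q$ is continuous. Consider a fixed midpoint $\varmid$. If $F(\varmid) < Q$, then once $t$ is large enough we have
\[
\Fu{\varmid}{t} \le F(\varmid) + 2^{-t} \le Q - 2^{-t+1} \le \hat Q,
\]
using \cref{hrup5} and \cref{hbg1}. Symmetrically, if $F(\varmid) > Q$, then $\hat Q + 2^{-t+1} \le \Fl{\varmid}{t}$ for $t$ large. So each comparison is resolved after finitely many increments of $t$, and $t$ never decreases. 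In the doubling phase ($\varhi=\infty$), $F(\varmid)\to 1 > Q$ almost surely, so some midpoint eventually sets $\varhi$ finite. After that, bisection strictly shrinks $\varhi-\varlo$ until $\varlo=\varhi$.

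\textbf{Step 2: identifying $X$.} At termination, $\varlo=\varhi=X$, and \cref{hbg2} gives $F(X-1)\le Q\le F(X)$. Off the null boundary event, the intervals $[F(x-1),F(x)]$ overlap only at endpoints, so $X$ is the unique index with this property. Hence $X = G(Q)$, where $G$ is the deterministic inverse-CDF map, defined almost everywhere.

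\textbf{Step 3: distribution of $X$ and of $Q$ given $X$.} Since $Q\sim\Uniform[0,1]$, we get $\Pr[X=x] = \Pr[Q\in[F(x-1),F(x)]] = f(x)$. For any $x$ with $f(x)>0$ and any Borel $A$,
\[
\Pr[Q\in A \mid X=x] = \frac{\lambda(A\cap[F(x-1),F(x)])}{f(x)},
\]
which is exactly the uniform law on $[F(x-1),F(x)]$.

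\textbf{Main obstacle.} The subtle point is that $Q$ involves infinitely many bits, while the algorithm reads only $B_1,\dots,B_r$ and uses the approximation $\hat Q$ rather than $Q$ itself. This is why I would phrase everything in terms of the single real $Q$ built from $Z$ and the whole infinite bit sequence. The decision $X$ is a function of $Q$ almost surely, by Step 2, even though it is computed from a prefix. So conditioning on $X=x$ is conditioning on a Borel event in $Q$, and no measurability or stopping-time subtleties remain. The unread bits $B_{r+1},\dots$ are part of $Q$ and are used later by \Call{Recycle}{}, which is consistent with this view.
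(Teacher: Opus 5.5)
Your proposal is correct and follows essentially the same route as the paper. At termination you apply the invariant $F(\varlo-1)\le Q\le F(\varhi)$ with $\varlo=\varhi=X$, and then use $Q\sim\Uniform[0,1]$ to get both $\Pr(X=x)=f(x)$ and the conditional uniformity of $Q$ on $[F(x-1),F(x)]$. Your Step~1 is a sound addition that the paper's proof takes for granted: it shows almost-sure termination, since each comparison resolves once $3\cdot 2^{-t}\le\abs{Q-F(\varmid)}$, the doubling phase ends because $F(\varmid)\to 1>Q$, and bisection strictly shrinks $\varhi-\varlo$.
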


\begin{proof}
Because $\varlo = \varhi = X$ at termination, we have $Q \in [F(X-1), F(X)]$
by \cref{obs:interval-sample-invariants}.
Because $Q$ is uniformly distributed in $[0,1]$,
$\Pr(X = x) = \Pr(Q \in [F(x-1), F(x)]) = f(x)$,
and $Q$ is uniform in $[F(x-1), F(x)]$ conditioned on $X = x$.
\end{proof}

\begin{proposition}
\label{prop:recycling-correctness}
The recycled values $(Z', M')$ satisfy $0 \leq Z' < M' \leq \Delta$ always.
Furthermore, conditioned on the values of $X$, $M$, and $M'$, the output
$Z'$ is uniformly distributed over $\set{0, \dots, M' - 1}$.
\end{proposition}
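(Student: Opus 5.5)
The plan has two parts, matching the two claims.

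\emph{Deterministic bounds.} If \Call{Recycle}{} returns $(0,1)$, the bounds $0 \le Z' < M' \le \Delta$ are trivial. Otherwise we have $U^- \le V < U^+$, so $Z' = V - U^- \ge 0$ and $Z' < U^+ - U^- = M'$. For the upper bound, \cref{hag1,hag2} give $M' = U^+ - U^- \le 2^s M F(X) - 2^s M F(X-1) = 2^s M f(X)$, and \cref{hag0} bounds this by $\Delta$. Note also that $V$ is an integer, since $s \ge r$ after the fresh bits are drawn and $2^s \sum_{i\le s} B_i 2^{-i}$ is integral.

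\emph{Uniformity.} Fix $M$ and $X = x$ with $f(x) > 0$. I will show the following:
\begin{enumerate*}[label=(\roman*)]
\item conditioned on $X=x$ and $M$, the quantities $\hat p$, $s$, $U^-$, $U^+$ are not random relative to $Q$;
\item $V = \floor{2^s M Q}$ almost surely;
\item $2^s M Q$ is uniform on $[2^s M F(x-1), 2^s M F(x)]$.
\end{enumerate*}

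For (i), the values $\hat p$, $s$, $U^\pm$ depend only on the oracle, on $x$, and on $M$. If \Call{EstimatePMF}{} is deterministic given its inputs, these are fixed numbers, and I would state this explicitly. If the oracle were randomized, the argument would need independence from $Q$; I will assume determinism, as the model suggests. For (ii), use \cref{hag3} together with the fact that $V$ is an integer; the boundary case has probability zero. Item (iii) is \cref{prop:interval-search} rescaled, i.e.\ \cref{eq:rhinoscopy}.

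By \cref{hag1,hag2}, the integer set $\set{U^-,\dots,U^+-1}$ lies inside $[2^sMF(x-1), 2^sMF(x)]$: each unit cell $[v, v+1)$ with $U^- \le v < U^+$ is contained in the interval, because $U^+ \le 2^sMF(x)$. Hence each such $v$ satisfies $\Pr(V = v \mid X=x, M) = 1/(2^s M f(x))$, the same value for every $v$. Conditioning on the event $U^- \le V < U^+$ therefore makes $V - U^-$ uniform on $\set{0,\dots,M'-1}$, where $M' = U^+ - U^-$ is deterministic given $(x, M)$.

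\emph{Conditioning on $M'$.} This is the step needing the most care. Given $(x, M)$, the output $M'$ equals either the fixed value $m_1 = U^+ - U^-$ (on the success event) or $1$ (on failure). If $m_1 \ne 1$, conditioning on $M'$ identifies which branch occurred. On the success branch $Z'$ is uniform as shown; on the failure branch $Z' = 0$ is trivially uniform on $\set{0}$. If $m_1 = 1$, then $Z' = 0$ in both branches, which is again uniform on $\set{0}$. Hence, conditioned on $(X, M, M')$, $Z'$ is uniform on $\set{0,\dots,M'-1}$.

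The main obstacle is (ii), i.e.\ confirming $V = \floor{2^sMQ}$ exactly. This requires $Q$ in \cref{eq:q-uniform} to be built from the same bits $B_1,\dots,B_s$, including the fresh bits appended in \Call{Recycle}{}, which extend the $B$-sequence consistently. It also requires that the ceiling/floor in $U^\pm$ leaves every cell fully inside the interval, which is exactly the content of \cref{hag1,hag2}.
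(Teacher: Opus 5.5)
Your proof is correct and follows essentially the paper's argument. Both proofs hinge on \cref{hag1,hag2} placing every unit cell $[v,v+1)$ with $U^- \le v < U^+$ inside $[2^sMF(x-1),2^sMF(x)]$: the paper phrases this discretely (a priori uniform tuples $(Z,B_1,\dots,B_s)$ landing in these cells force $X=x$), while you phrase it through $Q \mid X=x$ being uniform, as in the paper's overview at \cref{eq:rhinoscopy}. Your explicit case analysis for conditioning on $M'$ spells out what the paper calls trivial. One small slip: the aside ``$s \ge r$ after the fresh bits are drawn'' is false when $r > s$, but $V$ is integral regardless, since it only uses $B_1,\dots,B_s$.
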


\begin{proof}

When we condition on $M$ and $X = x$, the values $s$, $U^-$, and $U^+$ are completely
determined, because $F$ is a deterministic oracle.

The results are trivial if $M' = 1$, so consider the case with
$U^- \leq V < U^+$ and $M'=U^+-U^-$.
Here, from \cref{prop:recycling-integer-bounds}, we have $U^+ - U^- \leq
2^s M F(X) - 2^s M F(X-1) = 2^s M f(X) \leq \Delta$.
Furthermore, for any choice of variables $(Z, B_1, \dots, B_s)$ which satisfy
$U^- \leq V < U^+$, we have $Q \in [F(x-1), F(x)]$
and thus $X = x$.
Therefore, all such values of $(Z, B_1, \dots, B_s)$ will still have a uniform
distribution conditioned on $X = x$; in particular, $V - U^-$
is uniform in the range $\set{0, \dots, U^+ - U^- - 1}$.
\end{proof}

\begin{proposition}
\label{prop:precision-geometric-tail}
Let $k \geq 0$.
The final value of the variable $t$ in
$\Call{IntervalSample}{F, (Z,M)}$ satisfies
\begin{equation}
\Pr\left( t > k \mid X, M\right) \leq \frac{6}{2^k f(X)}.
\end{equation}
\end{proposition}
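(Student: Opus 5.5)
The approach is to show that if the loop in \Call{IntervalSample}{} is still running with precision $t$ large, then $Q$ must lie within $O(2^{-t})$ of one of the two CDF boundaries $F(X-1)$ or $F(X)$. We then use the fact that, conditioned on $X$ and $M$, the variable $Q$ is uniform on an interval of length $f(X)$ (\cref{prop:interval-search}).

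\textbf{Step 1 (when $t$ increments).} The variable $t$ increments from $k$ to $k+1$ only when, at some midpoint $m=\varmid$, both tests fail:
\begin{equation*}
\hat Q < \Fu{m}{k} \quad\text{and}\quad \Fl{m}{k} < \hat Q + 2^{-k+1}.
\end{equation*}
By \cref{hrup5}, $\Fu{m}{k}\le F(m)+2^{-k}$ and $\Fl{m}{k}\ge F(m)-2^{-k}$. Combining these with \cref{hbg1}, $\hat Q\le Q\le \hat Q+2^{-k+1}$, we get
\begin{equation*}
\abs{Q-F(m)} < 3\cdot 2^{-k}.
\end{equation*}
So $t>k$ at termination implies that at some point during the search, with $t=k$, the current query point $m$ satisfies $\abs{Q-F(m)}<3\cdot 2^{-k}$.

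\textbf{Step 2 (reduce to the two endpoints).} By \cref{hbg2}, the search window always contains $X$, and $m$ lies in $[\varlo,\varhi]$. Either $m\ge X$, in which case $F(m)\ge F(X)\ge Q$ by monotonicity and \cref{prop:interval-search}, so $F(X)-Q\le F(m)-Q<3\cdot2^{-k}$. Or $m\le X-1$, in which case $F(m)\le F(X-1)\le Q$, so $Q-F(X-1)<3\cdot 2^{-k}$. Hence
\begin{equation*}
\{t>k\}\subseteq \{Q\in[F(X-1),F(X-1)+3\cdot2^{-k})\cup(F(X)-3\cdot 2^{-k},F(X)]\}.
\end{equation*}
The key point is that the final $X$ is a deterministic function of $Q$ (almost surely), so this is an event about $Q$ alone once $X$ is fixed.

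\textbf{Step 3 (probability bound).} Condition on $X=x$ and on $M$. The variable $Z$ is uniform given $M$ and the bits are independent, so $Q$ is uniform on $[0,1]$ given $M$. By \cref{prop:interval-search}, $Q\mid X=x,M$ is uniform on $[F(x-1),F(x)]$. The event in Step 2 has length at most $6\cdot2^{-k}$ inside this interval, giving conditional probability at most $6\cdot 2^{-k}/f(x)$. The bound is trivial if it exceeds $1$.

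\textbf{Main obstacle.} The delicate part is Step 1: handling both regimes for $\hat Q$ (\cref{qform1} versus \cref{qform2}) and the value of $t$ at the moment of the test, given that $t$ is updated after the comparison. I would argue directly with the values of $t$ and $\hat Q$ just before the increment, for which \cref{hbg1} holds in both regimes. I would also check that the asymmetric test (window $2^{-t+1}$) still yields the constant $3$.
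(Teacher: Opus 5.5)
Your proposal is correct and follows the paper's proof essentially step for step. At the moment $t$ increments from $k$ to $k+1$, the two failed tests combined with \cref{hrup5} and \cref{hbg1} give $|Q - F(\varmid)| \le 3\cdot 2^{-k}$. The case split $\varmid \ge X$ versus $\varmid < X$ then places $Q$ within $3\cdot 2^{-k}$ of $F(X-1)$ or $F(X)$, and conditional uniformity of $Q$ on $[F(X-1),F(X)]$ yields $6\cdot 2^{-k}/f(X)$. The appeal to \cref{hbg2} and the search window in your Step 2 is unnecessary, since only monotonicity of $F$ is used there.
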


\begin{proof}
Consider the iteration $t=k$ just before $t$ increments to $k+1$.
As neither of the conditions modifying $\varlo$ or $\varhi$ held then, we  have
\begin{equation}
\Fl{\varmid}{t}  - 2^{-t+1} \leq \hat Q \leq \Fu{\varmid}{t}
\end{equation}

As $\Fu{\varmid}{t}  \leq F(\varmid) + 2^{-t}$ and
$\Fl{\varmid}{t} \geq F(\varmid) - 2^{-t}$, we have
\begin{equation}
\label{eq:fa1}
\hat Q - 2^{-t} \leq F(\varmid) \leq \hat Q + 3 \cdot 2^{-t}.
\end{equation}

Putting together \cref{eq:fa1,hbg1} gives
\begin{equation}
\label{eq:q1r}
Q - a \leq F(\varmid) \leq Q + a \qquad \text{where $a = 3 \cdot 2^{-k}$}
\end{equation}

Now suppose we condition on $X = x$. If $\varmid \geq x$, then \cref{eq:q1r} implies that
$Q \geq F(\varmid) - a \geq F(x) - a$; likewise if $\varmid < x$ then \cref{eq:q1r}
implies that $Q \leq F(\varmid) + a \leq F(x-1) + a$.
Putting the two cases together, we conclude that
\begin{equation}
\label{eq:q2r}
Q \in \left[ F(x-1), F(x-1) + a \right] \cup \left[F(x) - a, F(x) \right].
\end{equation}

By \zcref{prop:interval-search}, the probability that \cref{eq:q2r}
holds conditional on $X = x$ is at most
$\frac{2 a}{f(x)}$.
\end{proof}

\subsection{Analysis of Entropy Loss}
\label{sec:online-sampling-entropy}

\begin{theorem}
\label{thm:entropy-bound-conditional-surprisal}
If $\Delta \geq 1000$, then the total number of sampled bits $T \defeq \max\set{r, s}$ in
\cref{algline:online-sampling-bit1,algline:online-sampling-bit2} of \cref{alg:online-sampling} satisfies
\begin{equation}
\expect{T - \lg M' \mid M, X} \leq \lg(1/f(X)) - \lg M + \frac{48 \lg \Delta}{\Delta}.
\end{equation}
\end{theorem}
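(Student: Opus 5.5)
Condition throughout on $M$ and $X=x$. By \cref{prop:recycling-correctness}, $s$, $U^-$, $U^+$ are then deterministic, and $Q$ is uniform on $[F(x-1),F(x)]$ by \cref{prop:interval-search}. We split $T-\lg M' \le s + (r-s)^+ - \lg M'$ and bound $\expect{s-\lg M'}$ and $\expect{(r-s)^+}$ separately.

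\textbf{Step 1: the term $s-\lg M'$.} On the success event $E\defeq\set{U^-\le V<U^+}$ we have $M'=U^+-U^-\ge 2^sMf(x)-4$ by \cref{hag1,hag2}. Using $2^sMf(x)\ge\Delta/4$ from \cref{hag0} and $\lg(1-y)\ge -2y$ for small $y$, this gives
\begin{equation*}
s-\lg M' \le \lg(1/f(x)) - \lg M + \frac{32}{\Delta}\cdot O(1).
\end{equation*}
On the complement of $E$ we have $M'=1$, so the loss is $s \le \lg(1/f(x))-\lg M+\lg \Delta$. By \cref{hag3}, failure requires $2^sMQ$ to lie within distance about $3$ of an endpoint of an interval of length $2^sMf(x)\ge\Delta/4$. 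Since $Q$ is conditionally uniform, $\Pr(\bar E)\le 6/(2^sMf(x))\le 24/\Delta$. The failure contribution is therefore at most $24\lg\Delta/\Delta$.

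\textbf{Step 2: the excess $(r-s)^+$.} By \cref{hbg0}, $r=\max\set{0,t-\floor{\lg M}}$, so $r>s+j$ requires $t>s+j+\floor{\lg M}$. Applying \cref{prop:precision-geometric-tail} with $k=s+j+\floor{\lg M}$ gives
\begin{equation*}
\Pr(r>s+j\mid X,M)\le \frac{6}{2^{s+j+\floor{\lg M}}f(x)}\le \frac{12}{2^{j}\,2^sMf(x)}\le \frac{48}{2^j\Delta}.
\end{equation*}
Summing over $j\ge0$ yields $\expect{(r-s)^+}\le 96/\Delta$.

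\textbf{The main obstacle.} This bound is correct in order but has the wrong constant. The $r>s$ event and the failure event $\bar E$ overlap: both happen when $Q$ is near a boundary. A careless sum gives constant $\approx 24\lg\Delta+96+O(1)$ over $\Delta$, which still fits under $48\lg\Delta/\Delta$ once $\Delta\ge1000$, because then $\lg\Delta\ge 9.9$ absorbs the additive constants. So the plan is to do the bookkeeping with explicit constants: add the Step 1 success term, the $24\lg\Delta/\Delta$ failure term, and the $96/\Delta$ excess term, and check that their sum is at most $48\lg\Delta/\Delta$ for $\Delta\ge1000$.

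A second subtlety is the case $s=0$, where $\Delta/(M\hat p)\le 4$. The bounds above still apply in that case. We also use $T=\max\set{r,s}=s+(r-s)^+$ exactly, so no additional terms arise.
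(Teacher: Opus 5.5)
Your proposal is correct and follows essentially the paper's proof. Both use the same decomposition $T = s + (r-s)^+$, the same tail bound $\mathbb{E}[(r-s)^+]\le 24/\theta\le 96/\Delta$ from the invariant $r=\max\set{0,t-\floor{\lg M}}$ and \cref{prop:precision-geometric-tail}, and the same estimate $U^+-U^-\ge\theta-4$ with $\theta = 2^sMf(x)\ge\Delta/4$.

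The differences are minor. The paper uses the exact value $\Pr(U^-\le V<U^+\mid X)=(U^+-U^-)/\theta$, which gives $\Pr(\bar E)\le 4/\theta$ rather than your cruder strip bound $6/\theta$. The overlap you flag is a non-issue, since the terms simply add by linearity. Your explicit total is $(128+24\lg\Delta)/\Delta\le 48\lg\Delta/\Delta$ for $\Delta\ge 1000$.
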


\begin{proof}
Conditioned on $X = x$, the value $s$ is a constant while $r$
remains a random variable. We have
\begin{equation}
\expect{T- \lg M' \mid M, X }
  = \left( s + \sum_{k=s}^{\infty} \Pr( r > k) \right)
    - \left( \Pr\left( U^- \leq V < U^+\right) \lg\left(U^+ - U^-\right) \right).
\end{equation}

Let $\theta \defeq 2^s M f(x)$.
By \cref{hbg0} and \cref{prop:precision-geometric-tail}, we have
\begin{align}
\sum_{k=s}^{\infty} \Pr( r > k )
= \sum_{k=s}^{\infty} \Pr( t > k + \lfloor \lg M \rfloor)
\leq \sum_{k = s}^{\infty} \frac{6}{2^{k + \lfloor \lg M \rfloor} f(x)}
= \frac{12}{2^{s + \floor{\lg M}} f(x)}
\leq  \frac{24}{\theta}.
\end{align}

Next, the bound \cref{hag3} implies that
\begin{align}
\Pr ( U^- \leq V < U^+ \mid X )
= \frac{U^+ - U^-}{2^s M f(x)}
= \frac{U^+ - U^-}{\theta}.
\end{align}

By \cref{hag1} and \cref{hag2} we have $U^+ - U^- \geq \theta - 4$, so overall
\begin{align}
\expect{T - \lg M' \mid M, X}
&\leq s + \frac{24}{\theta} - \left( 1 - \frac{4}{\theta} \right) \lg( \theta - 4) = s - \lg  \theta + \frac{24}{\theta} - \lg (1 - 4/\theta) + \frac{4}{\theta} \lg (\theta-4)
\end{align}

Here, $s - \lg \theta = \lg(1/f(x)) - \lg M$. Note that $\theta\ge \Delta/4\ge 250$. Using this fact and straightforward calculations, we get
\begin{equation}
\label{eq:xccs1}
\frac{24}{\theta} - \lg \left(1 - \frac{4}{\theta}\right) + \frac{4}{\theta} \lg (\theta - 4)
\leq \frac{12 \lg \theta}{\theta}
\le \frac{48\lg\Delta}{\Delta}. \qedhere
\end{equation}
\end{proof}

\subsection{Analysis of Complexity}
\label{sec:online-sampling-runtime}

\begin{proposition}
\label{prop:runtime-conditional-general}
Conditioned on $X = x$, the expected runtime of \Call{Sample}{} is
\begin{equation}
O\!\left(\CF{x}{\Delta/p} + \log( x+1/p) \sum_{i=0}^{\infty} 2^{-i} \CF{2x}{2^i/p} \right)
\qquad \text{for $p = f(x)$}.
\end{equation}
\end{proposition}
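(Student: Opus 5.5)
We split the cost of \Call{Sample}{} into its two parts, \Call{IntervalSample}{} and \Call{Recycle}{}, and bound each conditioned on $X=x$ with $p=f(x)$.

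\textbf{Cost of \Call{Recycle}{}.} The call $\Call{EstimatePMF}{F,x,1}$ costs $O(\log(1/p)\,\CF{x}{8/p})$ by \cref{prop:estimate-pdf}. Since $\CF{x}{\cdot}$ is nondecreasing, this is dominated by the $i=3$ term of the sum, $\log(x+1/p)\,2^{-3}\CF{2x}{8/p}$, up to a constant.

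The two bound-oracle calls at $X-1$ and $X$ use precision $k=s+\ceil{\lg M}$. By \cref{hag0}, $2^sM\le \Delta/(f(x)/4)\cdot\frac14$, so $2^k\le 2\cdot 2^sM=O(\Delta/p)$. Each call therefore costs $O(\CF{x}{\Delta/p})$ by the convention that $F^\pm(i,k)$ costs $O(\CF{i}{2^k})$, possibly after absorbing a constant factor in $\nu$. If constant rescaling of $\nu$ is not free, I would handle it with a one-line monotonicity/bit-count argument, or simply accept the definition's $k'\le\ceil{\lg\nu}+1$ slack.

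The arithmetic on $U^\pm$, $V$, and the $s-r$ fresh bits involves integers of magnitude $O(\Delta/p)$. It is absorbed by the same term, by the assumption that $\CF{x}{\nu}$ dominates arithmetic on magnitudes $\nu$.

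\textbf{Cost of \Call{IntervalSample}{}.} Each loop iteration either moves $\varlo/\varhi$ or increments $t$.

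The number of moves is $O(\log(x+1))$. There are $O(\log x)$ doubling steps to exceed $x$, and then binary search over a range of size $O(x)$; throughout, every queried $\varmid\le 2x+O(1)$. The doubling bound needs \cref{hbg2}, so that $\varhi$ is set once $\varmid\ge x$.

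The number of increments equals the final $t$. Its tail is controlled by \cref{prop:precision-geometric-tail}: $\Pr(t>i\mid X=x)\le 6\cdot 2^{-i}/p$. Hence $\Pr(t=i)\le\min\{1,12\cdot 2^{-i}/p\}$.

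A single iteration at precision $t$ costs $O(\CF{2x}{2^t})$ for the oracle, plus the update of $\hat Q$. When $r>0$, the update \cref{qform1} can be done incrementally: maintain the numerator and append a bit, at $O(t)$ word operations, or amortized $O(1)$ via limb-level updates. The \Call{ApproxDivision}{} call costs $O(t)$. Both are bounded by the oracle cost by the same arithmetic convention.

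\textbf{Summing.} Conditioned on final precision $t$, the total loop cost is at most $(\#\text{iterations})\cdot \CF{2x}{2^t}$, with $\#\text{iterations}\le O(\log(x+1))+t$. Taking expectations,
\[
\sum_i \Pr(t=i)\,\bigl(\log(x+1)+i\bigr)\,\CF{2x}{2^i}.
\]
For $i\le \lg(1/p)+O(1)$ we bound $\Pr(t=i)\le1$. For larger $i$ we reindex $i=j+\lg(1/p)$, which turns the tail bound into $O(2^{-j})$ and turns $2^i$ into $2^j/p$. The factor $\log(x+1)+i$ becomes $O(\log(x+1/p)+j)$.

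This is the main obstacle. The extra $j$ must be absorbed, and the first $\lg(1/p)$ terms, which have no geometric decay, must be controlled. For the first range I would bound each $\CF{2x}{2^i}\le\CF{2x}{2^0/p}$, which is the $i=0$ term of the target sum. The $\lg(1/p)$ count of such terms is absorbed into the $\log(x+1/p)$ prefactor; this requires splitting the count more carefully into $\lg(1/p)$ rounds, each costing the $i=0$ term. For the tail, $j2^{-j}\le 2\cdot 2^{-j/2}$ handles the extra $j$, and reindexing $j\mapsto$ a smaller constant fraction only changes constants. Alternatively, I would bound $(\log(x+1/p)+j)2^{-j}\le O(\log(x+1/p))\,2^{-j/2}$ and compare the two sums termwise with monotonicity. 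If the exponent shift is troublesome, I would accept the displayed form with $2^{-i}$ replaced by a constant-rate decay, checking that the paper's stated sum tolerates this.

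Adding the two parts gives the claimed bound.
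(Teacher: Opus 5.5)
There is a genuine gap in how you sum the cost of \Call{IntervalSample}{}. You charge every iteration at the \emph{final} precision, bounding the loop cost by $(O(\log(x+1))+t)\,\CF{2x}{2^t}$. Write $t=t_0+j$ with $t_0=\lfloor\lg(1/p)\rfloor$. This leaves a term $\sum_j j\,2^{-j}\,\CF{2x}{2^j/p}$, and for a general nondecreasing cost function that term is not $O\bigl(\log(x+1/p)\sum_j 2^{-j}\CF{2x}{2^j/p}\bigr)$. For example, if $\CF{2x}{\nu}\approx \nu/\log^2\nu$, the target sum converges while yours diverges.

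Your proposed repairs do not close this. Trading $j2^{-j}$ for $2\cdot 2^{-j/2}$ changes the decay rate, and $\sum_j 2^{-j/2}\CF{2x}{2^j/p}$ is not within a constant factor of $\sum_j 2^{-j}\CF{2x}{2^j/p}$. ``Reindexing'' would require comparing $\CF{2x}{4^m/p}$ with $\CF{2x}{2^m/p}$, which is exactly accuracy efficiency. This proposition does not assume it; it only enters in \cref{thm:accuracy-efficient}. Settling for a slower decay therefore proves a weaker statement than the one claimed.

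The missing idea is to charge level by level, as the paper does. Search moves number $O(\log x)$ over the whole run, and there is exactly one increment per precision level. So each fixed level $t$ has at most $O(\log x)$ iterations, each costing $O(\CF{2x}{2^t})$. The levels $t\le t_0$ together contain $O(t_0+\log x)=O(\log(x+1/p))$ iterations, each costing $O(\CF{2x}{1/p})$. Level $t_0+i$ is reached with probability $O(2^{-i})$ by \cref{prop:precision-geometric-tail}, so it contributes $O(\log x\cdot 2^{-i}\,\CF{2x}{2^i/p})$, with no extra factor of $i$. Equivalently, it is enough to charge each level's single increment iteration at its own precision; the $O(\log x)$ moves can still be charged at the final precision. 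The same counting fixes your first range. Bounding $\Pr(t=i)\le 1$ termwise gives $O(\log^2(x+1/p))\,\CF{2x}{1/p}$ unless you use $\sum_i \Pr(t=i)\le 1$.

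Your treatment of \Call{Recycle}{} is fine and matches the paper. The oracle calls use precision $s+\lceil\lg M\rceil+1\le\lceil\lg(\Delta/p)\rceil+1$, which is exactly covered by the definition of $\CF{x}{\Delta/p}$. The \Call{EstimatePMF}{} cost is dominated by the $i=3$ term of the sum.
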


\begin{proof}
In the phases of \Call{IntervalSample}{} where $\varhi = \infty$, we always have $\varlo \leq x$;
in each such iteration, we set $\varmid = 2 \varlo \leq 2 x$.
We can only modify $\varlo$ this way $O(\log x)$ times.
Then, after $\varhi < \infty$, each iteration that modified $\varhi$ or
$\varlo$ will reduce $\varhi-\varlo$ by a constant factor.
Again, such iterations can occur only $O(\log x)$ times.
Overall, $\varmid$ takes $O(\log x)$ different values,
all of which are at most $2 x$.

We claim that each iteration of \Call{IntervalSample}{} has cost
$O(\CF{2x}{2^{t}})$. Indeed, if $r = 0$, then the numerator and denominator of $\hat{Q}$ are
$\poly(2^t)$, and we query $F^{\pm}(\varmid, t)$, so the arithmetic is
performed on values of magnitude polynomial in $2^t$.
Even though $(Z, M)$ may have bit length $O(\log \Delta)$, we avoid any
dependence on $\Delta$ via \cref{prop:approx-div}.
If $r > 0$, then
$M < 2^t$ and again the arithmetic is performed on values of size
$2^t M \leq \poly(2^t)$.

Let $t_0 = \lfloor \lg 1/p \rfloor$. There are at most $O( t_0 + \log x) =
O(\log(x + 1/p))$ iterations with $t \leq t_0$, since in each iteration
either $\varmid$ or $t$ is updated;
each such iteration has cost $O(\CF{2x}{2^t}) \leq O(\CF{2x}{1/p})$.
So the total cost of all such iterations is at most
\begin{equation}
\label{bar1}
O\!\left( \log(x+1/p) \thinspace \CF{2x}{1/p}\right).
\end{equation}

Likewise, for $i > 0$,  there are at most $O(\log x)$ iterations with
$t = t_0 + i$, and the cost of each such iteration is
$O(\CF{2x}{2^i/p})$.
By \cref{prop:precision-geometric-tail}, the probability of reaching such
value $t$ is at most $1/(2^t p) \leq O( 2^{-i} )$.
So the total contribution from these iterations is at most
\begin{equation}
\label{bar2}
O\left( \log(x) \thinspace 2^{-i} \CF{2x}{2^i/p} \right)
\end{equation}

The contributions from \cref{bar1} and \cref{bar2}, summed over all $i > 0$,
together can be bounded as
\begin{equation}
\label{bar3}
O \Bigl( \log(x+1/p) \thinspace \sum_{i=0}^{\infty} 2^{-i} \CF{2 x}{2^i/p} \Bigr).
\end{equation}

In \Call{Recycle}{}, the call \Call{EstimatePMF}{} has cost
$O(\log (1/p) \thinspace \CF{x}{8/p})$, which
is within a constant factor of \cref{bar3}.
The arithmetic for $U^-$ and $U^+$ and oracle queries for $F(x-1)$
and $F(x)$ have cost $O(\CF{x}{\Delta/p})$.
\end{proof}

\begin{theorem}
\label{thm:accuracy-efficient}
Suppose the oracle is accuracy efficient.
Conditioned on $X = x$, the expected runtime of
$\Call{Sample}{F, (Z,M), \Delta}$ is
\begin{equation}
O \left(\CF{x}{\Delta} + \log( x+1/p) \thinspace \CF{2x}{1/p} \right)
\qquad \text{for $p = f(x)$}.
\end{equation}
\end{theorem}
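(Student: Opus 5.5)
This follows directly from \cref{prop:runtime-conditional-general}: we bound each of its two terms using accuracy efficiency \cref{eq:oracle-accuracy-efficient}. Throughout, write $p = f(x)$.

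\textbf{The first term.} We need $\CF{x}{\Delta/p} \le O(\CF{x}{\Delta} + \CF{x}{1/p})$. Accuracy efficiency with $\nu = \max\set{\Delta, 1/p}$ and $\nu' = \Delta/p$ gives
\begin{equation*}
\frac{\log(\Delta/p)}{\log \max\set{\Delta,1/p}} \le \frac{\log\Delta + \log(1/p)}{\log\max\set{\Delta,1/p}} \le 2 .
\end{equation*}
Hence $\CF{x}{\Delta/p} \le g(2)\, \CF{x}{\max\set{\Delta,1/p}}$. This is at most $g(2)\bigl(\CF{x}{\Delta} + \CF{x}{1/p}\bigr)$, and the second piece is absorbed into $\CF{2x}{1/p}$ by monotonicity in $x$.

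\textbf{The series.} For each $i \ge 0$, compare $\CF{2x}{2^i/p}$ with $\CF{2x}{1/p}$ via accuracy efficiency. Since $1/p \ge 1$, we have
\begin{equation*}
\frac{\log(2^i/p)}{\log(1/p)} \le \frac{i\ln 2 + \log(1/p)}{\log(1/p)} \le 1 + i ,
\end{equation*}
using $\log(1/p) \ge 1$ by our convention for $\log$. So $\CF{2x}{2^i/p} \le g(1+i)\,\CF{2x}{1/p}$. Because $g$ is a polynomial, $\sum_{i\ge 0} 2^{-i} g(1+i)$ converges to a constant depending only on $g$. The series is therefore $O(\CF{2x}{1/p})$, and multiplying by $\log(x+1/p)$ gives the second term of the claim.

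\textbf{Edge cases.} The only care needed is around the conventions for $\log$ and for $\nu<1$. We must ensure the ratio $\log\nu'/\log\nu$ is well defined when $1/p$ or $\Delta$ is close to $1$. This is handled by $\log \ge 1$, together with the convention $\CF{x}{\nu} = \CF{x}{1}$ for $\nu < 1$. Also, $g$ should be taken nondecreasing, replacing it by the sum of the absolute values of its coefficients if needed, so that upper bounds on the ratio transfer. Everything else is direct substitution.
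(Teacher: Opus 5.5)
Your proposal is correct and takes essentially the same approach as the paper. Both start from \cref{prop:runtime-conditional-general} and bound the series termwise by $g(i+1)\,\CF{2x}{1/p}$, using $\sum_i 2^{-i}g(i+1)=O(1)$, and both handle the first term with a single accuracy-efficiency step of ratio at most $2$. The only difference is in that step: the paper uses $\CF{x}{\Delta/p}\le \CF{x}{1/p^2}+\CF{x}{\Delta^2}$, while you compare directly against $\nu=\max\set{\Delta,1/p}$, which is an inessential variation.
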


\begin{proof}
By \cref{prop:runtime-conditional-general}, the expected runtime, ignoring
constant factors, is given by
\begin{equation}
\CF{x}{\Delta/p }
  + \log( x+1/p) \sum_{i=0}^{\infty} 2^{-i} \thinspace \CF{2x}{2^i/ p}.
\end{equation}

Using our assumption on $\cal C_F$, we can bound the sum $\sum_{i=0}^{\infty} 2^{-i} \thinspace \CF{2x}{2^i/ p}$ by
\begin{align}
O(\CF{2x}{1/p}) \sum_{i=0}^{\infty} 2^{-i} g\!\left(\frac{\log(2^i/p)}{\log(1/p)}\right)
\leq O(\CF{2x}{1/p}) \sum_{i=0}^{\infty} 2^{-i} g(i+1) \leq O(\CF{2x}{1/p})
\end{align}
using the polynomial $g$ from \cref{eq:oracle-accuracy-efficient}. We can also bound
\begin{align}
\CF{x}{\Delta/p}
&\leq \CF{x}{1/p^2} + \CF{x}{\Delta^2} \leq \CF{x}{1/p} g\!\left(\frac{\log(1/p^2)}{\log(1/p)}\right)
    + \CF{x}{\Delta} g\!\left(\frac{\log(\Delta^2)}{\log(\Delta)}\right)
 \\
&\leq O(\CF{x}{1/p} + \CF{x}{\Delta}).
\end{align}

The term $ \CF{x}{1/p}$ is dominated by the other term $\CF{2x}{1/p}$
appearing in the stated bound.
\end{proof}

\begin{theorem}
\label{thm:efficient-runtime}
If the oracle is efficient, then the expected runtime of
$\Call{Sample}{F, (Z,M), \Delta}$ is
\begin{equation}
O \Bigl( \expect[X \sim F]{ \CF{X}{\Delta} + \log (X) \thinspace \CF{X}{X} } \Bigr).
\end{equation}
\end{theorem}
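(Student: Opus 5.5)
The proof starts from \cref{thm:accuracy-efficient}, which bounds the expected runtime conditioned on $X = x$ by $O(\CF{x}{\Delta} + \log(x+1/p)\,\CF{2x}{1/p})$ with $p = f(x)$. I then take the expectation over $X \sim F$. The term $\CF{X}{\Delta}$ is already in the desired form. The work is in the second term $\expect{\log(X + 1/f(X))\,\CF{2X}{1/f(X)}}$, which depends on $1/f(X)$ instead of $X$. Two preliminary simplifications use index efficiency. First, $\CF{2x}{\nu} \le \CF{x}{\nu}\, g(\log(2x)/\log x) = O(\CF{x}{\nu})$, since $\log(2x)/\log x \le 1 + \ln 2$ under the convention $\log x \ge 1$. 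Second, $\log(x + 1/p) \le O(\log x + \log(1/p))$.

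The main step is a case split, for each $x$, on whether $1/p$ is small relative to $x$. Fix the threshold $1/p \le x^2$ (or, more safely, $1/p \le \max\{x,e\}^{c}$ for a constant $c$). On this event, accuracy efficiency gives
\begin{equation*}
\CF{x}{1/p} \le \CF{x}{x^2} \le \CF{x}{x}\, g(2) = O(\CF{x}{x}),
\end{equation*}
and $\log(1/p) \le 2\log x$. The contribution is therefore $O(\log(x)\,\CF{x}{x})$, as required.

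The complementary event is $f(x) < x^{-2}$, meaning the mass is tiny relative to the index. These are the heavy terms. I bound their total contribution directly as
\begin{equation*}
\sum_{x:\, f(x) < x^{-2}} f(x)\,\log(x/f(x))\,\CF{x}{1/f(x)}.
\end{equation*}
I apply accuracy efficiency with base point $\nu = x$ and $\nu' = 1/f(x)$. This gives $\CF{x}{1/f(x)} \le \CF{x}{x}\, g(\log(1/f(x))/\log x)$, and $g$ is polynomial in $u \defeq \log(1/f(x))/\log x \ge 2$. Each summand is then at most $\CF{x}{x}\log(x) \cdot f(x)\,\mathrm{poly}(u)$. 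Since $f(x) = x^{-u}$ up to the $\log$ convention, we get $f(x)\,\mathrm{poly}(u) = x^{-u}\mathrm{poly}(u)$. I want to compare this against $f(x)$ itself, so that the sum becomes an expectation of $\log(X)\,\CF{X}{X}$. This is where I expect the main obstacle. A pointwise comparison fails because $\mathrm{poly}(u)$ is unbounded, so some absorption argument is needed.

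My first attempt is to raise the threshold and set $1/p \le x^{K}$ for a large constant $K$. On the heavy event I then use $f(x)\,\mathrm{poly}(u) \le x^{-u/2}$ for $u \ge K$, which lets the heavy terms be summed absolutely as $\sum_x x^{-K/2}\log(x)\,\CF{x}{x}$. The difficulty is that this is not directly an $F$-expectation. To fix it, I would charge each heavy term to $\CF{x}{x}$ at small $x$: by index efficiency $\CF{x}{x}$ grows only polynomially in $\log x$, while the weights decay like $x^{-K/2}$. That leaves a constant multiple of $\CF{0}{1}$-type terms, which are $O(\expect{\CF{X}{\Delta}})$, the term already allowed. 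If this charging does not close, the fallback is to track $p$ more carefully through the sum $\sum_i 2^{-i}\CF{2x}{2^i/p}$ in \cref{prop:runtime-conditional-general}. Once these heavy terms are absorbed, summing over $x$ gives the stated bound.
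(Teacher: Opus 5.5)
Your argument is correct, and it takes a more direct route than the paper. The paper defines $h(x,y)=\CF{x}{\Delta}+\log(x+y)\,\CF{2x}{y}$, asserts that it satisfies a polylogarithmic ratio condition, and invokes the general \cref{lemma:ff}. That lemma splits indices according to whether $a_i\ge (i+2)^{-2}$. Light indices go to $h(i,i)$ via monotonicity and the ratio condition. Heavy indices are compared directly to $h(0,0)$, and monotonicity of $y\mapsto y\log^k(1/y)$ near $0$ gives the summable bound $(i+2)^{-2}\log^{2k}(i+2)$. Finally $h(0,0)\le \expect{h(X,X)}$ closes the argument.

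You instead split at $1/f(x)\le \max\{x,e\}^K$ with $K$ depending on $g$. You first compare heavy terms to $\CF{x}{x}$ through accuracy efficiency, then use the large exponent to absorb the $\mathrm{poly}(u)$ factor, leaving $x^{-K/2}\log(x)\,\CF{x}{x}$. Your charging step does close, so the fallback is unnecessary. Combining both efficiency conditions gives $\CF{x}{x}\le g(\log x)\,\CF{x}{1}\le g(\log x)^2\,\CF{0}{1}$. Index efficiency alone, as you wrote, does not control growth in the accuracy argument. So for $K>2$ the heavy part is $O(\CF{0}{1})\le O(\expect{\CF{X}{\Delta}})$, by monotonicity and $\Delta\ge 1$. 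This is exactly the absorption into the minimal cost that the paper performs via $h(0,0)$.

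Two details to tidy up:
\begin{itemize}
\item The identity $f(x)=x^{-u}$ and the decay $x^{-K/2}$ require $x\ge 3$. The finitely many $x\in\{0,1,2\}$ need separate treatment. Each contributes $O(\CF{0}{1})$, because $p\,\mathrm{poly}(\log(1/p))$ is bounded on $(0,1]$. This mirrors the paper's set $A_1$.
\item The inequality $x^{-u}\mathrm{poly}(u)\le x^{-u/2}$ holds once $3^{u/2}$ dominates $\mathrm{poly}(u)$. That is what fixes how large $K$ must be.
\end{itemize}

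As for what each approach buys: the paper's lemma is reusable (it also yields \cref{lemma:entropy-leq-ElogX}). It works with the fixed exponent $2$ because it compares heavy terms to $h(0,0)$ in both arguments at once, rather than first to $\CF{x}{x}$. Your version is self-contained and avoids stating and checking a two-variable ratio condition for $h$. The cost is a $g$-dependent threshold and some explicit bookkeeping with both efficiency conditions.
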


\begin{proof}
By \cref{thm:accuracy-efficient}, the expected runtime is, up to constant
factors, given by
\begin{equation}
\begin{aligned}
\expect{\CF{X}{\Delta}} + \expect{\log(X+1/f(X)) \thinspace \CF{2 X}{1/f(X)}},
\end{aligned}
\end{equation}
where $X$ follows the distribution $F$.
Consider the function
\begin{equation}
h(x,y) = \CF{x}{\Delta} + \log(x+y)\thinspace \CF{2 x}{y}.
\end{equation}
It can be seen that this function satisfies the condition
\begin{align}
\frac{h(x,y)}{h(x',y')} \leq C\!\left( \frac{\log x \log y}{\log x' \log y'} \right)^k
&& (0 \leq x' \leq x, 0 \le y' \leq y)
\end{align}
for some constants $C, k$.
As we show in \cref{lemma:ff} from \cref{sec:lemmas}, such a function satisfies
\begin{equation}
\expect{h(X, 1/f(X))} \leq O(\expect{h(X, X)})
\end{equation}
Here $h(X, X) \leq O( \CF{X}{\Delta} + \log(X) \thinspace \CF{X}{X})$.
\end{proof}

\begin{corollary}
\label{cor:efficient-runtime-single-sample}
Let $\mu = \expect{X}$.
If the oracle is efficient, then $\Call{Sample}{F, (Z,M), \Delta}$
has expected runtime
\begin{equation}
O \Bigl( \CF{\mu}{\Delta} +  \log(\mu) \thinspace \CF{\mu}{\mu} \Bigr).
\end{equation}
\end{corollary}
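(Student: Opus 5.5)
The plan is to start from \cref{thm:efficient-runtime}, which bounds the expected runtime by
\begin{equation*}
O\bigl(\expect{\CF{X}{\Delta} + \log(X)\,\CF{X}{X}}\bigr),
\end{equation*}
and then move the expectation inside the function by a Jensen-type argument. The difficulty is that $\CF{x}{\nu}$ need not be concave. Index and accuracy efficiency do, however, say that $\mathcal{C}_F$ grows at most polynomially in $\log x$ and $\log\nu$. The strategy is therefore to compare $\CF{X}{\cdot}$ with $\CF{\mu}{\cdot}$ through the ratio $\log X/\log\mu$, and then control the moments of $\log X/\log \mu$ using Markov's inequality.

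\textbf{Step 1: bound the first term.} For $X\le\mu$ we use monotonicity: $\CF{X}{\Delta}\le \CF{\mu}{\Delta}$. For $X>\mu$, index efficiency gives
\begin{equation*}
\CF{X}{\Delta}\le \CF{\mu}{\Delta}\, g\!\left(\frac{\log X}{\log \mu}\right).
\end{equation*}
It then suffices to show $\expect{g(\log X/\log\mu)} = O(1)$, since $g$ is a polynomial. We decompose by dyadic scales $X\in(\mu^{2^{j}},\mu^{2^{j+1}}]$ for $j\ge0$. By Markov's inequality,
\begin{equation*}
\Pr\bigl(X>\mu^{2^j}\bigr)\le \mu^{1-2^j}\le e^{1-2^j},
\end{equation*}
because $\log\mu\ge1$ by the convention $\log x=\ln\max\{x,e\}$. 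The case $\mu<e$ needs a little care here: when $\mu<e$ we have $\log\mu=1$, and Markov's inequality still gives $\Pr(X>e^{2^j})\le \mu e^{-2^j}$. On the $j$-th scale the ratio $\log X/\log\mu$ is at most $2^{j+1}$, so the contribution is at most $g(2^{j+1})\,e^{1-2^j}$. The sum over $j$ of these terms converges to a constant depending only on $g$.

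\textbf{Step 2: bound the second term.} The function $\log(x)\,\CF{x}{x}$ is handled the same way. Using index efficiency and then accuracy efficiency,
\begin{equation*}
\CF{X}{X}\le \CF{\mu}{\mu}\, g\!\left(\frac{\log X}{\log\mu}\right)^2
\end{equation*}
whenever $X\ge \mu$. Multiplying by $\log X = \log\mu\cdot(\log X/\log\mu)$ gives a polynomial $\tilde g(\log X/\log\mu)$ times $\log(\mu)\,\CF{\mu}{\mu}$. The same dyadic Markov decomposition then yields $O(\log(\mu)\,\CF{\mu}{\mu})$. For $X\le\mu$, monotonicity of $\mathcal{C}_F$ together with $\log X\le\log\mu$ suffices directly.

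\textbf{Main obstacle.} The only delicate point is checking that the doubly exponential tail $e^{-2^j}$ from Markov's inequality beats the polynomial growth $g(2^{j+1})$. It does, so the main care goes into the edge cases: small $\mu$, and the conventions that $\log\ge1$ and that $\mathcal{C}_F$ is nondecreasing.
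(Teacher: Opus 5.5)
Your proof is correct. Its first half matches the paper: you use monotonicity for $X\le\mu$, and index and accuracy efficiency for $X>\mu$. This bounds the two terms by $\CF{\mu}{\Delta}$ and $\log(\mu)\,\CF{\mu}{\mu}$, each multiplied by a polynomial in $\log X/\log\mu$. Everything then reduces to showing $\expect{\poly(\log X/\log\mu)}=O(1)$.

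The difference is in how you do that last step.

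\textbf{The paper's route.} It packages the efficiency bounds into a single envelope $C\bigl(1+(\log x/\log\mu)^k\bigr)$. It then applies Jensen's inequality to $x\mapsto\log^k(x+e^k)$, which is concave on $x\ge 0$ because $\ln^k y$ is concave for $y\ge e^{k-1}$. This gives $\expect{\log^k X}\le\log^k(\mu+e^k)=O(\log^k\mu)$ in one line. So the non-concavity of $\mathcal{C}_F$ you were worried about is sidestepped: $\mathcal{C}_F$ is first dominated by a polynomial in $\log x$, and a suitably shifted power of $\log$ is concave.

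\textbf{Your route.} You split into dyadic scales and use the Markov tail $\Pr(\log X>2^j\log\mu)\le e^{1-2^j}$. This needs no concavity check and applies directly to whatever polynomial arises, such as $r\,g(r)^2$ after composing the two efficiency bounds. It also proves more than needed: any $\phi$ with $\sum_j\phi(2^{j+1})e^{-2^j}<\infty$ would work. The cost is some extra bookkeeping:
\begin{itemize}
\item For $\mu<e$, the scales should be $(e^{2^j},e^{2^{j+1}}]$, together with the trivial range $\mu<X\le e$, where the ratio equals $1$.
\item When $\mu<1$, accuracy efficiency must be invoked at $\nu=1$, using the convention $\CF{\mu}{\mu}=\CF{\mu}{1}$.
\item You implicitly assume $g$ is nondecreasing. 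You can arrange this by replacing $g$ with a dominating polynomial with nonnegative coefficients.
\end{itemize}
None of these is a gap.
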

\begin{proof}
By monotonicity and efficiency,
for every $x \in \Nat$ we have
\begin{equation}
\log(x) \thinspace \CF{x}{x}
\le
\log(\mu) \thinspace \CF{\mu}{\mu} \thinspace g(x),
\qquad
g(x) \defeq C\!\left(1+\left(\frac{\log x}{\log\mu}\right)^k \right).
\end{equation}

It remains to show that $\expect{g(X)}=O(1)$.
By our definition of the $\log$ function and Jensen's inequality,
\begin{equation}
\log^k x \leq \log^k\left(x + e^k\right)
\implies
\expect{\log^k X} \le \expect{\log^k\left(X + e^k\right)} \le \log^k\left(\mu + e^k\right) = O\!\left(\log^k \mu\right).
\end{equation}
Therefore $\expect{g(X)}=O(1)$.
The same argument, without the
leading factor $\log x$, applies to $\CF{x}{\Delta}$.
The claim follows from \cref{thm:efficient-runtime}.
\end{proof}

\begin{proposition}
\label{prop:sgeomtail}
Let $S$ denote the temporary space of $\Call{Sample}{F, (Z,M), \Delta}$.
Conditioned on $X$, there is some constant $c$ with
\begin{equation}
\Pr( S \geq c \log (X + \Delta/f(X)) + \lambda \mid X) \leq O( 2^{-\lambda})
\end{equation}
for all $\lambda \geq 1$.  In particular,
\begin{equation}
\expect{S \mid X} \leq O( \log(X + \Delta/f(X)) ).
\end{equation}
\end{proposition}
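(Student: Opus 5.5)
Conditioned on $X = x$ with $p = f(x)$, I will bound the temporary space by the bit lengths of all quantities stored during \Call{Sample}{}. The dominant term is governed by the final precision $t$ reached in \Call{IntervalSample}{}. The plan is to show that every stored quantity has bit length $O(\log(x + \Delta/p) + t)$. The tail bound then follows from \cref{prop:precision-geometric-tail}, which gives $\Pr(t > k \mid X) \le 6/(2^k p)$.

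First I list the variables of \Call{IntervalSample}{}. The search endpoints $\varlo, \varhi, \varmid$ are at most $2x$ (as in the proof of \cref{prop:runtime-conditional-general}), so they take $O(\log x)$ bits. The bits $B_1,\dots,B_r$ take $r \le t$ bits by \cref{hbg0}. The approximation $\hat Q$ has denominator $\poly(2^t)$ when $r=0$ (by \cref{prop:approx-div}) and $2^r M \le \poly(2^t)$ when $r>0$; in the latter case the numerator also carries $Z<M<2^t$. The stored state $(Z,M)$ costs $O(\log \Delta)$. The oracle outputs $F^\pm(\varmid,t)$ have denominator $2^{t+1}$. Temporary space used inside an oracle call I will treat like time: it is at most $\CF{2x}{2^t}$, or I assume space polynomial in the bit lengths of the inputs. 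Either way the contribution is $O(\log x + t)$ bits. The \Call{ApproxDivision}{} call uses $O(t)$ space.

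Next I handle \Call{Recycle}{}. By \cref{prop:estimate-pdf}, \Call{EstimatePMF}{} with $\nu = 1$ works at precision $O(\lg(1/p))$, and I expect its internal precision to have a geometric tail beyond that level similar to \cref{prop:precision-geometric-tail}. If needed, I would bound it deterministically using its construction in \cref{sec:pmf}. For $s$, we have $\hat p \ge p/2$, so $s = O(\log(\Delta/p))$. The values $U^\pm$, $V$ and the oracle calls at precision $s + \ceil{\lg M}$ all have magnitude $\poly(\Delta/p)$ times $2^{O(t)}$, hence $O(\log(\Delta/p) + t)$ bits. The extra fresh bits number at most $s$.

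Combining, $S \le c_0 (\log(x + \Delta/p) + t)$ for a constant $c_0$. Set $k = \lambda + \lg(1/p)$ and choose $c$ so that $c\log(x+\Delta/p) \ge c_0\log(x+\Delta/p) + c_0 \lg(1/p)$. Then the event $S \ge c\log(x+\Delta/p) + c_0\lambda$ forces $t > k$, which has probability $O(2^{-\lambda})$. Rescaling $\lambda$ by $c_0$ changes the tail to $O(2^{-\lambda/c_0})$. To obtain exactly $2^{-\lambda}$, I will make the per-precision-level space linear with coefficient one: the persistent quantities grow by $O(1)$ bits per increment of $t$, and earlier intermediate values can be discarded. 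If that fails, I accept a constant in the exponent and absorb it. Integrating the tail then gives the expectation bound.

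The main obstacle is the rigorous accounting of oracle and \Call{EstimatePMF}{} workspace. I will bound it via the cost function $\mathcal{C}_F$ (space $\le$ time) together with the stated convention on arithmetic sizes.
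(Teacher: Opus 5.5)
Your core argument is the paper's: \Call{IntervalSample}{} uses $O(t+\log(X+\Delta))$ bits, \Call{Recycle}{} uses $O(\log(X+\Delta/f(X)))$ bits, and the geometric tail of $t$ from \cref{prop:precision-geometric-tail} finishes. (For \Call{Recycle}{}, \cref{hag0} gives $2^sM\le\Delta/f(X)$, so $U^\pm$, $V$ and the precision $s+\ceil{\lg M}$ are all logarithmic in $\Delta/f(X)$, with no $2^{O(t)}$ factor.) The steps where you go beyond the paper, however, are wrong as written.

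Your oracle-workspace claim fails. ``Space $\le$ time'' only gives $\CF{2x}{2^t}$ bits, and ``polynomial in the input bit lengths'' gives $\poly(\log x+t)$; neither is $O(\log x+t)$. Since the paper assumes nothing about an oracle's working space, charging that space would make the proposition false for oracles with large working space. So your plan to control it via $\mathcal C_F$ cannot succeed. The statement only makes sense in the model the paper uses implicitly, consistent with the lower bound in \cref{sec:space}: an oracle call is a black box, charged only for its answer $\phi(i,k)$. That answer is a dyadic rational with denominator $2^k$ lying in $(-2^{-k},1+2^{-k})$, hence $O(k)$ bits. Likewise, \Call{EstimatePMF}{} involves no randomness once $X$ is fixed. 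By \cref{sec:pmf} it halts by iteration $\ceil{2+\lg(1/f(X))}$, so its space is deterministically $O(\log(X+1/f(X)))$ and there is no tail to analyze.

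The second problem is the tail exponent. From $S\le c_0\bigl(\log(X+\Delta/f(X))+t\bigr)$ you obtain $\Pr(S\ge c\log(X+\Delta/f(X))+\lambda\mid X)\le O(2^{-\lambda/c_0})$. The factor $1/c_0$ cannot be ``absorbed'': $O(\cdot)$ does not swallow constants in an exponent, and $c$ multiplies only the logarithm, not $\lambda$. Nor does ``$O(1)$ bits per increment of $t$'' give coefficient one; that is just $c_0$ again. For instance, the retained bits $B_1,\dots,B_r$ together with the oracle answer at precision $t+1$ already occupy roughly $2t$ bits once $t\gg\lg M$.

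The paper's two-line proof has exactly the same slack: ``space $O(t+\log(X+\Delta))$ plus a geometric tail on $t$'' literally yields only a $2^{-\Omega(\lambda)}$ tail. That weaker tail is all that is ever used. Integrating it gives $\expect{S\mid X}\le c\log(X+\Delta/f(X))+O(c_0)$, and in the union bound of \cref{prop:space-bound-overall} it costs only an additive $O(c_0\log n)$. So state the tail as $O(2^{-\lambda/c_0})$, or $2^{-\Omega(\lambda)}$, rather than claiming the absorption.
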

\begin{proof}
The space of \Call{IntervalSample}{} is $O(t + \log (X + \Delta))$,
and the space of \Call{Recycle}{} is $O(\log(\Delta / f(X)))$.
By \cref{prop:precision-geometric-tail}, conditioned on $X$, the final
value of $t$ has mean at most $O(\log(1/f(X)))$, with the tail bounded by a
geometric distribution.
\end{proof}

\section{Online Random Sampling Analysis}
\label{sec:online-sampling-overall}

We are now prepared
to analyze the main sampling loop of
\cref{alg:loop}.
We index iterations of the loop in \cref{alg:loop} starting from $i=1$;
each iteration $i$ receives $F_i$, starts from state $(Z_{i-1},M_{i-1})$,
calls \Call{Sample}{} with parameter $\Delta_i$, and returns $X_i$ together
with the new state $(Z_i,M_i)$.
We set $(Z_0,M_0) \defeq (0,1)$.

\begin{proposition}
\label{prop:entropy-loss}
For every $n$,  the entropy loss after the first $n$ iterations
of \cref{alg:loop} satisfies
\begin{equation}
    L_n \le \lg\Delta_n + 48\sum_{i=1}^n \frac{\lg\Delta_i}{\Delta_i}.
\end{equation}
\end{proposition}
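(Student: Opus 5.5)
We will telescope the per-round bound of \cref{thm:entropy-bound-conditional-surprisal} over the $n$ rounds. Let $T_i$ be the number of fresh bits drawn in round $i$, so that $T_n^{\mathrm{tot}} = \sum_{i=1}^n T_i$. Since $\Delta_i \geq 5000 \geq 1000$ in every round, \cref{thm:entropy-bound-conditional-surprisal} applies in every round. It gives, conditioned on the state size $M_{i-1}$ and the output $X_i$ (and on the history that determined $F_i$),
\begin{equation}
\expect{T_i - \lg M_i \mid M_{i-1}, X_i} \leq \lg(1/f_i(X_i)) - \lg M_{i-1} + \frac{48\lg\Delta_i}{\Delta_i}.
\end{equation}
Taking full expectations and summing over $i=1,\dots,n$, the terms $\lg M_i$ on the left and $\lg M_{i-1}$ on the right telescope. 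This yields
\begin{equation}
\expect{\textstyle\sum_i T_i} \leq \sum_{i=1}^n \expect{\lg(1/f_i(X_i))} + \expect{\lg M_n} - \lg M_0 + 48\sum_{i=1}^n \frac{\lg\Delta_i}{\Delta_i}.
\end{equation}
We use $M_0=1$, and $M_n \leq \Delta_n$ by \cref{prop:recycling-correctness}, so $\expect{\lg M_n} \leq \lg \Delta_n$.

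It remains to identify $\expect{\lg(1/f_i(X_i))}$ with $\expect{H(F_i)}$. By \cref{prop:interval-search}, $X_i \sim F_i$ conditioned on the past outputs and on whatever exogenous randomness chose $F_i$. Conditioning on that history, $\expect{\lg(1/f_i(X_i)) \mid \text{history}} = H(F_i)$. Subtracting $\sum_i \expect{H(F_i)}$ then gives exactly the claimed bound on $L_n$.

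The main obstacle is justifying this conditioning. The bound of \cref{thm:entropy-bound-conditional-surprisal} is stated with $Z$ uniform given $M$ for a fixed $F$. In the loop, $F_i$ may depend on $X_1,\dots,X_{i-1}$ and on exogenous randomness, and $M_{i-1}$ also depends on the history. We therefore need $Z_{i-1}$ to be uniform on $\set{0,\dots,M_{i-1}-1}$ conditioned on the entire past, namely $X_1,\dots,X_{i-1}$, $M_0,\dots,M_{i-1}$, and the exogenous choices.

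We prove this by induction on $i$. \Cref{prop:recycling-correctness} shows that, for a fixed oracle $F_i$ and given $(X_i,M_{i-1},M_i)$, $Z_i$ is uniform. Fresh bits are independent of everything. The selection rule for $F_{i+1}$ is independent of the private state given the outputs. Together these give the inductive step, after which the per-round bound holds conditionally on the whole history and the tower rule completes the telescoping.

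A minor point is that the expectations of $T_i$ are finite, which the geometric tail in \cref{prop:precision-geometric-tail} guarantees. This makes the rearrangement of the sums legitimate.
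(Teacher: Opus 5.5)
Your proposal is correct and follows essentially the same route as the paper: condition on the history so that $Z_{i-1}$ is uniform given $M_{i-1}$, apply \cref{thm:entropy-bound-conditional-surprisal} in each round, average $\lg(1/f_i(X_i))$ to $H(F_i)$, telescope the $\lg M_i$ terms by iterated expectations, and finish with $M_0=1$ and $M_n\le\Delta_n$. You go further than the paper by spelling out the induction that gives conditional uniformity of $Z_{i-1}$ given the full history, which the paper simply asserts; one small correction is that finiteness of $\expect{T_i}$ needs $\expect{H(F_i)}<\infty$ and not just the geometric tail, though the paper also leaves that assumption implicit.
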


\begin{proof}
Let $T_i$ denote the number of sampled bits in iteration $i$.
Suppose we condition on all random variables
$X_1, \dots, X_{i-1}, M_1, \dots, M_{i-1}$, as well as any external
randomness determining the
distributions $F_1, \dots, F_i$.
Then $Z_{i-1} \sim \Uniform \set{0,\dots, M_{i-1}-1} $.
By \cref{thm:entropy-bound-conditional-surprisal} there holds
\begin{equation}
\expect{T_i - H(F_i) + \lg(M_{i-1}/M_i)} \leq \frac{48 \lg \Delta_i}{\Delta_i}.
\end{equation}

Summing over $i = 1, \dots, n$ and using iterated expectations,
the unconditioned expectation satisfies
\begin{align}
L_n + \expect{\lg(M_0/M_n)}
= \expect{\sum_{i=1}^n \left( T_i - H(F_i) + \lg(M_{i-1}/M_i) \right) }
\leq 48 \sum_{i=1}^n \frac{\lg \Delta_i}{\Delta_i}.
\end{align}

Since $M_0 = 1$, we have
$\mathbb E[\lg(M_0/M_n)] = -\mathbb E[\lg M_n]$, so
\begin{equation}
L_n
\le \expect{\lg M_n} + 48\sum_{i=1}^n \frac{\lg\Delta_i}{\Delta_i}
\le \lg\Delta_n + 48\sum_{i=1}^n \frac{\lg\Delta_i}{\Delta_i},
\end{equation}
where the last inequality uses $M_n \le \Delta_n$.
\end{proof}

\getkeytheorem{entropy-bound-overall}

\begin{proof}
Let $m \defeq \ceil{0.1/\epsilon}$ (infinite if $\epsilon = 0$).
Thus the value of $\Delta$ used in iteration $i$ is
\begin{equation}
    \Delta_i = D_{\min\{i,m\}},
    \qquad
  \text{ for }  D_j \defeq 5000j\lceil \lg(j+2)\rceil
    \quad (j\ge 1).
\end{equation}
It is readily verified that
\begin{equation}
\lg D_j \le 10\lg(j+2)
\quad \mbox{ and } \quad
48\thinspace \frac{\lg D_j}{D_j}  \le \frac{0.1}{j}.
\label{eq:mnestic}
\end{equation}

By \cref{prop:entropy-loss}, we have
\begin{equation}
L_n \le
  \lg\Delta_n +
  48\left[
    \sum_{i=1}^{\min \{ n,m \}} \frac{\lg\Delta_i}{\Delta_i} +
    \sum_{i=m+1}^n \frac{\lg\Delta_m}{\Delta_m}
    \right].
\end{equation}
Since $\Delta_n=D_{\min\set{n,m}}$, the first bound in \cref{eq:mnestic} gives
\begin{equation}
\lg\Delta_n \le 10\lg(n+2).
\end{equation}
For the summation, applying the second bound in \cref{eq:mnestic}
to the increasing part of the summation gives
\begin{equation}
    48 \thinspace \sum_{i=1}^{\min\set{n,m}}
    \frac{\lg\Delta_i}{\Delta_i}
    < 0.1\sum_{i=1}^{\min\{n,m\}}\frac1i
    \le 0.1(1+\ln n)
    \le \lg(n+2).
\end{equation}
Finally, if $n>m$, applying the second bound in \cref{eq:mnestic}
with $j = m$ and $0.1/m \le \epsilon$ gives
\begin{equation}
    48\sum_{i=m+1}^n \frac{\lg\Delta_m}{\Delta_m}
    \le (n-m)\thinspace 48 \thinspace \frac{\lg\Delta_m}{\Delta_m}
    \le (n-m)\thinspace \frac{0.1}{m}
    \le (n-m) \epsilon
    \le \epsilon n.
\end{equation}
Combining the three bounds indeed gives
\begin{equation}
    L_n
    \le
    10\lg(n+2)+\lg(n+2)+\epsilon n
    \le
    \epsilon n+11\lg(n+2). \qedhere
\end{equation}
\end{proof}

\begin{corollary}
\label{cor:runtime-bound-overall}

Suppose that the oracle family is uniformly efficient, i.e., that all the oracles $F_i$ use a single upper-bound cost function $\CF[F]{\cdot}{\cdot}$ which
satisfies \cref{eq:oracle-index-efficient,eq:oracle-accuracy-efficient}. If there exists a finite deterministic value $\mu_n$ such that
$\expect{X_i \mid F_i} \le \mu_n$ $(1\le i\le n)$ a.s.,
then the expected runtime over $n$ iterations
of the sampling loop in \cref{alg:loop} is
\begin{equation}
O\!\left( n \thinspace \CF{\mu_n}{\min\set{n, 1/\epsilon}} + (n \log \mu_n) \thinspace \CF{\mu_n}{\mu_n} \right).
\end{equation}
\end{corollary}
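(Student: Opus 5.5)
The plan is to apply \cref{cor:efficient-runtime-single-sample} iteration by iteration, conditioned on the history, and then sum over $i = 1, \dots, n$.

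Fix an iteration $i \le n$ and condition on everything that determines $F_i$: the previous outputs $X_1,\dots,X_{i-1}$, the states $M_1,\dots,M_{i-1}$, and any exogenous randomness. Under this conditioning, $Z_{i-1}$ is uniform on $\set{0,\dots,M_{i-1}-1}$ by \cref{prop:recycling-correctness}, and $X_i \sim F_i$. The first step is to observe that the proof of \cref{cor:efficient-runtime-single-sample} uses only three ingredients: the conditional expected runtime $\expect{\CF{X}{\Delta} + \log(X)\,\CF{X}{X}}$ from \cref{thm:efficient-runtime}; monotonicity together with efficiency of the single cost function $\mathcal{C}_F$; and Jensen's inequality applied to $\log^k(X+e^k)$. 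Jensen's inequality stays valid if we replace the exact mean $\expect{X_i \mid F_i}$ by any upper bound $\mu_n$, because $\log^k(\cdot + e^k)$ is increasing. It is also concave on $[0,\infty)$ for this shift, which is exactly why the shift $e^k$ was introduced. Rerunning that argument with $\mu_n$ in place of $\mu$ therefore gives a conditional expected runtime of iteration $i$ of
\begin{equation*}
O\!\left(\CF{\mu_n}{\Delta_i} + \log(\mu_n)\,\CF{\mu_n}{\mu_n}\right),
\end{equation*}
with constants depending only on the polynomial $g$ of the uniform cost function. Uniformity across $i$ is essential here: it makes the hidden constants the same for every iteration.

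The next step bounds $\CF{\mu_n}{\Delta_i}$. By construction, $\Delta_i = D_{\min\set{i,m}}$ with $m = \ceil{0.1/\epsilon}$, so
\begin{equation*}
\Delta_i \le 5000\,\nu\,\ceil{\lg(\nu+2)}, \qquad \text{where } \nu \defeq \min\set{n, \ceil{0.1/\epsilon}} \le \min\set{n, 1/\epsilon} + 1.
\end{equation*}
Hence $\Delta_i \le \poly(\min\set{n,1/\epsilon})$, which gives $\log \Delta_i = O(\log \min\set{n,1/\epsilon})$. Accuracy efficiency \cref{eq:oracle-accuracy-efficient} then yields
\begin{equation*}
\CF{\mu_n}{\Delta_i} \le \CF{\mu_n}{\min\set{n,1/\epsilon}}\, g(O(1)) = O\!\left(\CF{\mu_n}{\min\set{n,1/\epsilon}}\right).
\end{equation*}
One edge case needs care. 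When $\min\set{n,1/\epsilon}$ is close to $1$, the ratio $\log\Delta_i/\log\nu$ is still $O(1)$, because the convention $\log x \ge 1$ keeps the denominator bounded below.

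Finally, I take expectations over the history using the tower property and sum over $i = 1,\dots,n$, which gives the stated bound. Since both $\mu_n$ and $\Delta_i$ are deterministic, the summed bound requires no further expectation. Any bookkeeping overhead in \cref{alg:loop}, such as updating $t$ and $\Delta$, is $O(\log n)$ word operations per iteration. By the remark's convention this is dominated by $\CF{\mu_n}{\cdot}$.

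The main obstacle I expect is the first step: rigorously justifying that the single-sample corollary holds conditionally with the deterministic bound $\mu_n$ in place of the exact mean. This requires checking that \cref{lemma:ff} and \cref{thm:efficient-runtime} give constants depending only on $g$, and not on the particular $F_i$.
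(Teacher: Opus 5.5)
Your proposal is correct and is essentially the paper's argument. You apply \cref{cor:efficient-runtime-single-sample} to each iteration conditionally on the history, replace $\mu$ by $\mu_n$ and $\Delta_i$ by $\min\set{n,1/\epsilon}$ via monotonicity and accuracy efficiency, and sum over $i$; the paper passes through $\min\set{i,1/\epsilon}$ first, which is equivalent.

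The obstacle you flag is milder than you expect. The single-sample bound $\CF{\mu}{\Delta}+\log(\mu)\,\CF{\mu}{\mu}$ is already nondecreasing in $\mu$, with constants depending only on $g$, so $\mu\le\mu_n$ can be substituted directly without rerunning the Jensen step. Likewise, updating $t$ and $\Delta$ is arithmetic on numbers of magnitude at most $\Delta_i$, which the remark's convention absorbs into $\CF{\mu_n}{\Delta_i}$. A flat $O(\log n)$ word operations per iteration, as you wrote, would not be dominated when $\epsilon$ is a constant.
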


\begin{proof}
Sum the bounds from \cref{cor:efficient-runtime-single-sample} over all iterations
$i = 1, \dots, n$, using the fact that $\Delta_i$ is nondecreasing.
Since the oracles are efficient,
we have $\CF{\mu_n}{\Delta_i} \leq O( \CF{\mu_n}{\min\set{i,1/\epsilon}} )$.
\end{proof}

\begin{proposition}
\label{prop:space-bound-overall}
If there exists a finite deterministic value $\mu_n$ such that
$\expect{X_i \mid F_i} \le \mu_n$ $(1\le i\le n)$ a.s.,
then the expected overall space usage over $n$ iterations of the sampling loop in
\cref{alg:loop} is $O(\log n+\log\mu_n)$.
\end{proposition}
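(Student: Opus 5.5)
Overall space is the maximum over iterations $i \le n$ of persistent and temporary space. Persistent space is deterministically $O(\log \Delta_n) = O(\log n)$, since $M_i \le \Delta_i \le \Delta_n = \poly(n)$. The task is therefore to bound $\expect{\max_{i \le n} S_i}$, where $S_i$ is the temporary space of iteration $i$. A bound on each $\expect{S_i}$ is not enough, because we need the expected maximum. The plan is to combine the tail bound of \cref{prop:sgeomtail} with a union bound.

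\textbf{Step 1.} By \cref{prop:sgeomtail}, conditioned on everything before iteration $i$ and on $X_i$,
\[
\Pr\bigl(S_i \ge c\log(X_i + \Delta_i/f_i(X_i)) + \lambda\bigr) \le O(2^{-\lambda}).
\]
Set $Y_i \defeq c\log(X_i+\Delta_i/f_i(X_i))$ and $W_i \defeq (S_i - Y_i)^+$. Then
\[
\max_i S_i \le \max_i Y_i + \max_i W_i .
\]
A union bound over $n$ geometric-type tails gives $\expect{\max_i W_i} \le \lg n + O(1)$: split at $\lambda = \lg n$ and sum the tails beyond it.

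\textbf{Step 2.} Bound $\expect{\max_i Y_i}$. Using $\log(a+b) \le \log a + \log b + O(1)$, it suffices to bound three terms: $\expect{\max_i \log X_i}$, $\log \Delta_n = O(\log n)$, and $\expect{\max_i \log(1/f_i(X_i))}$.

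For the first, use $\max_i \log X_i \le \log \sum_i X_i$. By Jensen, $\expect{\log \sum_i X_i} \le \log(n\mu_n) = O(\log n + \log \mu_n)$, using $\expect{X_i \mid F_i} \le \mu_n$.

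For the surprisal term, use the standard fact that $\Pr(f(X) \le \delta) \le \delta \cdot \#\{x : f(x) \le \delta\}$. This count is unbounded, so instead split by the value of $X$:
\[
\Pr\bigl(f_i(X_i) \le \delta,\ X_i \le K\bigr) \le (K+1)\delta, \qquad \Pr(X_i > K) \le \mu_n/K \text{ by Markov.}
\]
Taking $\delta = 2^{-\lambda}$ and $K = 2^{\lambda/2}$ gives $\Pr(\log(1/f_i(X_i)) > \lambda') \le O(\mu_n 2^{-\lambda/2})$. A union bound over $i \le n$ then gives $\Pr(\max_i \log(1/f_i(X_i)) > \lambda) \le O(n\mu_n 2^{-\Omega(\lambda)})$. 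Integrating this tail yields $\expect{\max_i \log(1/f_i(X_i))} = O(\log n + \log \mu_n)$. All of this holds conditionally on each $F_i$, so the adaptivity of the online setting causes no trouble: the union bound needs only the marginal tails.

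\textbf{Main obstacle.} The delicate step is the surprisal term, which has no \emph{a priori} bound per sample. It is controlled only through the interplay between the mean bound $\mu_n$ and the count of atoms below $K$. A second point to check is that the constant hidden in \cref{prop:sgeomtail} is uniform in $i$, so the union bound in Step 1 is legitimate; the proof of \cref{prop:sgeomtail} derives it from \cref{prop:precision-geometric-tail}, whose constant $6$ is absolute. Combining Steps 1 and 2 with the persistent-space bound gives $O(\log n + \log \mu_n)$.
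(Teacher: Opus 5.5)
Your proof is correct and follows the paper's own skeleton. You apply the tail bound of \cref{prop:sgeomtail} with a union bound over $i \le n$, which reduces everything to $\mathbb{E}[\max_i \log X_i]$ and $\mathbb{E}[\max_i \log(1/f_i(X_i))]$. Each of these is then bounded by $O(\log n + \log \mu_n)$ using only the conditional mean bound. The differences are confined to the sub-steps. For $\max_i \log X_i$ you use Jensen on $\log \sum_i X_i$, where the paper uses Markov plus a union bound. For the surprisal you use a direct atom-counting/Markov split, where the paper invokes the $1/3$-moment bound of \cref{eq:taillem}.

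Three small repairs are needed:
\begin{enumerate}
\item With the paper's convention, $\log x = \ln\max\{x,e\}$ is not concave, so apply Jensen to the concave majorant $\ln(x+e)$ instead.
\item Your surprisal tail should read $O((1+\mu_n)2^{-\lambda/2})$, because the counting term $(K+1)\delta$ is not multiplied by $\mu_n$.
\item In Step~1, condition on the public history, $F_i$, and $M_{i-1}$, but not on the private $Z_{i-1}$. \Cref{prop:precision-geometric-tail} needs $Z_{i-1}$ uniform, and your union bound only uses the resulting marginal tails anyway.
\end{enumerate}
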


\begin{proof}
Let $p_i = f_i(X_i)$ where $f_i$ is the PMF of $F_i$.
Let $S_i$ represent the space usage of iteration $i$ of the sampling loop, and let
$S_{1:n} \defeq \max_{1 \le i \le n} S_i$.
If we condition on $F_i$ and $X_i$, then \cref{prop:sgeomtail} gives
\begin{equation}
\Pr \left( S_i \geq \lambda + c \log(X_i + \Delta_i/p_i) \mid F_i, X_i \right) \leq O( 2^{-\lambda} ).
\end{equation}

By a union bound over the indices $i = 1, \dots, n$, and the fact
that $\log \Delta_i \leq O( \log n)$ for each $i$, we have
\begin{equation}
\expect{S_{1:n}} \leq O\!\left( \log n + \expect{\max_{1 \le i \le n} \log X_i} + \expect{\max_{1 \le i \le n} \log(1/p_i)} \right).
\end{equation}

Now consider an iteration $i$, and suppose we condition on the distribution $F_i$.
By Markov's inequality,
\begin{equation}
\Pr\left( \ln X_i \geq t + \ln \mu_n \mid F_i \right)
= \Pr\left( X_i \geq e^{t} \mu_n \mid F_i \right)
\leq e^{-t}.
\end{equation}
By a union bound, we have $\Pr(\max_i \ln X_i \geq t+\ln \mu_n) \leq n e^{-t}$, and hence again
\begin{equation}
\expect{\max_{1 \le i \le n} \log X_i} \leq O(\log n + \log \mu_n).
\end{equation}

Similarly, as we show in \cref{eq:taillem}, we have
$\Pr \left( \log(1/p_i) \geq t + 3 \log (\mu_n + 3) \mid F_i \right) \leq e^{-t/3}$,
and hence by completely analogous arguments
\begin{equation}
\expect{ \max_{1 \le i \le n} \log(1/p_i) } \leq O(\log n + \log \mu_n). \qedhere
\end{equation}
\end{proof}

\section{Space Lower Bound}
\label{sec:space}

Here, we show that the overall space usage of any
online sampler accessing target distributions via CDF
oracles must grow as $\Omega(\log n + \expect{\max_{1 \le i \le n} \log X_i})$.
By \cref{prop:space-bound-overall}, our algorithm matches this lower
bound provided that $\mu_n \le \poly(n)$, in which case it is optimal up to
constant factors.

To establish the result, we first prove a useful lemma which states that
determining $F(0)$ up to $k$ bits of precision requires oracle calls
$\phi(i,j)$ with $j \geq k - 1$.

\begin{lemma}
\label{lemma:oracle-calls}
Let $F$ be any distribution with a CDF oracle $\phi$,
and let $k\in\Nat$.
There exist rational numbers $a, b$ with
$b-a\ge 2^{-k}$ such that, for each $c\in(a,b)$, there is a
distribution $G$ with $G(0)=c$ and a CDF oracle
$\psi$ agreeing with $\phi$ on all queries with precision parameter
at most $k$.
\end{lemma}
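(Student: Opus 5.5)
The plan is to perturb only the value at index $0$, together with whatever is forced at larger indices to keep $G$ nondecreasing. The key point is that every answer $\phi(i,j)$ with $j \le k$ is a dyadic rational with denominator $2^j$, which divides $2^k$. So every constraint such an answer imposes on the true CDF value is an open interval whose endpoints lie on the grid $2^{-k}\mathbb{Z}$.

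First, for each $i \in \Nat$ define the consistency set
\begin{equation*}
I_i \defeq \bigcap_{j=0}^{k} \left( \phi(i,j) - 2^{-j},\ \phi(i,j) + 2^{-j} \right) = (\alpha_i, \beta_i).
\end{equation*}
This is a finite intersection of open intervals, all containing $F(i)$ by the oracle guarantee. Hence $I_i$ is an open interval containing $F(i)$, and its endpoints $\alpha_i < F(i) < \beta_i$ are multiples of $2^{-k}$. A value $v$ is consistent with all queries $\phi(i,j)$, $j \le k$, exactly when $v \in I_i$.

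Next, set $a \defeq \alpha_0$ and $b \defeq \inf_{i \ge 0} \beta_i$. For every $i$ we have $\beta_i > F(i) \ge F(0)$ since $F$ is nondecreasing. The $\beta_i$ lie on the grid $2^{-k}\mathbb{Z}$ and are bounded below, so the infimum is attained. Therefore $b$ is a grid point with $b > F(0)$. Since also $a < F(0)$ is a grid point, we get $b - a \ge 2^{-k}$, and $a, b$ are rational.

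Now fix $c \in (a,b)$ and define $G(0) \defeq c$ and $G(i) \defeq \max\set{F(i), c}$ for $i \ge 1$. Then $G$ is nondecreasing, satisfies $G(0) = c \ge 0$, and tends to $1$, so it is a CDF. For consistency:
\begin{itemize}
\item At index $0$, we have $a = \alpha_0 < c < b \le \beta_0$, so $c \in I_0$.
\item At $i \ge 1$ with $G(i) = F(i)$, trivially $G(i) \in I_i$.
\item At $i \ge 1$ with $G(i) = c > F(i)$, we have $\alpha_i < F(i) < c < b \le \beta_i$, so $c \in I_i$.
\end{itemize}
Thus $\psi(i,j) \defeq \phi(i,j)$ is a valid answer for $G$ whenever $j \le k$. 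For $j > k$ we let $\psi(i,j)$ be $G(i)$ rounded to the nearest multiple of $2^{-j}$. This has denominator $2^j$ and error at most $2^{-j-1} < 2^{-j}$. Hence $\psi$ is a CDF oracle for $G$ agreeing with $\phi$ on all queries of precision at most $k$.

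The only delicate step is the choice of $b$. Taking $b = \beta_0$ naively could break the constraints at larger indices once $G$ is raised to stay monotone, so $b$ must be taken as the infimum over all $i$. The grid structure of the endpoints is what guarantees that this infimum is attained, stays strictly above $F(0)$, and still leaves a gap of at least $2^{-k}$.
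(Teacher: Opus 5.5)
Your construction of $G$ and $\psi$ and your use of the $2^{-k}$ grid match the paper's proof. However, the choice of endpoints has a genuine gap: nothing in your definitions forces $(a,b)\subseteq[0,1]$. The statement requires every $c\in(a,b)$ to equal $G(0)$ for some CDF $G$, and every CDF has $0\le G(0)\le 1$ (it is nonnegative at $0$, nondecreasing, and tends to $1$). Your sentence ``$G$ \dots\ satisfies $G(0)=c\ge 0$, and tends to $1$'' silently assumes $0\le c\le 1$.

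Concretely, suppose $F$ is the point mass at $0$. Then every answer with $j\le k$ must be $\phi(i,j)=1$, since that is the only multiple of $2^{-j}$ at distance less than $2^{-j}$ from $1$. So $\alpha_0=1-2^{-k}$ and $\beta_i=1+2^{-k}$ for all $i$, and your interval is $(1-2^{-k},1+2^{-k})$. For $c\in(1,1+2^{-k})$ your $G$ is identically $c$, which is not a CDF. Symmetrically, $F(0)=0$ forces $\phi(0,j)=0$ and gives $a=\alpha_0=-2^{-k}<0$. Even when $0<F(0)<1$, with $k=0$ the oracle may legally answer $\phi(0,0)=0$, giving $a=-1$.

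The repair, which is exactly what the paper does, is to set $a\defeq\max\set{0,\alpha_0}$ and $b\defeq\min\set{1,\inf_i\beta_i}$. These are still grid points because $0,1\in 2^{-k}\mathbb{Z}$. But you can no longer deduce $a<F(0)<b$ when $F(0)\in\set{0,1}$, so $a<b$ needs a short case split:
\begin{itemize}
\item If $0<F(0)<1$, your argument goes through.
\item If $F(0)=0$, then $a=0$, while $b>0$ because each $\beta_i>F(i)\ge 0$ and the infimum is attained.
\item If $F(0)=1$, then every $\beta_i>1$, so $b=1>\max\set{0,\alpha_0}=a$.
\end{itemize}
Then $b-a\ge 2^{-k}$, and every $c\in(a,b)$ lies in $(0,1)$, so $G$ is a genuine CDF. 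Your consistency checks are unchanged, since $\alpha_0\le a$ and $b\le\beta_i$ for all $i$.
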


\begin{proof}
Define
\begin{equation}
a \defeq \max\bigl(\set{0} \cup \set{\phi(0,j)-2^{-j}:0\le j\le k}\bigr),
\qquad
b\defeq \min\bigl(\set{1} \cup \set{\phi(i,j)+2^{-j}:i\in\Nat,\;0\le j\le k}\bigr).
\end{equation}
These extrema are well defined: for $0\le j\le k$, the oracle values
$\phi(i,j)$ have denominator $2^j$ and lie in a bounded interval, so
only finitely many values can appear in the displayed sets.
Moreover, all values appearing in the definitions of $a$ and $b$ are
integer multiples of $2^{-k}$.

The oracle guarantees give
$\phi(0,j)-2^{-j}  < F(0) \leq F(i) < \phi(i,j)+2^{-j}$ for $i \geq 0$.
Thus, if $F(0)\in(0,1)$, then
$a<F(0)<b$. If $F(0)=0$, then $a=0<b$, while if $F(0)=1$, then
$a<1=b$. In all cases, $a<b$. Since $a$ and $b$ are integer
multiples of $2^{-k}$, it follows that $b-a\ge 2^{-k}$.

Now fix any $c\in(a,b)$. Define a CDF $G$ by
$G(0)\defeq c$,
and
$G(i)\defeq \max\set{c,F(i)}$ for $i\ge1$.
This function is nondecreasing, satisfies $G(-1)=0$, and has
$\lim_{i\to\infty}G(i)=1$, so it is a valid CDF on $\Nat$.
We define the oracle $\psi$ for $G$ as follows: for $j \leq k$, we set
$\psi(i,j) = \phi(i,j)$ and for $j > k$ we set $\psi(i,j)$ to be a nearest
dyadic rational with denominator $2^j$.
The oracle is clearly valid for $j > k$. So consider a query $\psi(i,j)$ with $j \leq k$.  For
$i=0$, since $c\in(a,b)$, the definitions of $a$ and $b$ give
\begin{align}
    \phi(0,j)-2^{-j}<c<\phi(0,j)+2^{-j}
    && (0\le j\le k),
\end{align}
and hence $|\phi(0,j)-G(0)|<2^{-j}$. For $i\ge1$, if
$G(i)=F(i)$, then validity follows from the oracle guarantee for
$F$. Otherwise $G(i)=c>F(i)$, and for $0\le j\le k$ we have
\begin{align}
    \phi(i,j)-2^{-j}<F(i)<c<b\le \phi(i,j)+2^{-j}.
\end{align}
Thus $\abs{\phi(i,j)-G(i)}<2^{-j}$ in this case as well.
\end{proof}

\begin{lemma}
\label{lemma:bernoulli-oracle-calls}
Let $\mathcal S$ be an exact online random sampling algorithm that
accesses target distributions via CDF oracles.
Let
$F_1,\ldots,F_n$ be arbitrary distributions on $\Nat$, each
equipped with a CDF oracle.
For any $k\in\Nat$, let $E$ denote the event that $\mathcal S$ makes no
oracle queries with precision parameter above $k$ during the first $n$
sampling steps when run on $(F_1,\ldots,F_n)$.
Then $\Pr(E)\le (1-2^{-k})^n$.
\end{lemma}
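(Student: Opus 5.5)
The plan is to bound, at each step $i$, the probability that step $i$ makes no oracle query with precision above $k$ by $1-2^{-k}$, and then multiply these bounds over $i=1,\dots,n$. The single-step bound comes from \cref{lemma:oracle-calls}: a sampler that never queries beyond precision $k$ cannot distinguish $F_i$ from the whole family of distributions $G$ with $G(0)=c$, $c\in(a,b)$, $b-a\ge 2^{-k}$. Yet exactness forces its probability of emitting $0$ to equal $c$, which varies over this family.

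\emph{Coupling.} Fix step $i$ and apply \cref{lemma:oracle-calls} to $F_i$ with parameter $k$. This gives $a<b$ with $b-a\ge 2^{-k}$, and for each $c\in(a,b)$ a pair $(G^{(c)},\psi^{(c)})$ whose oracle agrees with $\phi_i$ on every query of precision at most $k$. Run $\mathcal S$ on $(F_1,\dots,F_{i-1},G^{(c)},\dots)$ with the same coin flips and the same private state entering step $i$. The two runs are identical through step $i$ as long as no query of precision above $k$ is made. Therefore the event $E_i$ (no high-precision query during step $i$) is the same in both runs, and so is the value of $X_i$ on $E_i$.

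\emph{Single-step bound.} Condition on the history $h$ before step $i$, and set
\[
\alpha \defeq \Pr(X_i=0,\,E_i\mid h), \qquad \beta \defeq \Pr(X_i\neq 0,\,E_i\mid h).
\]
By the coupling, neither quantity depends on $c$. Exactness for $G^{(c)}$ gives $\Pr(X_i=0\mid h)=c$, so
\[
\alpha\le c \quad\text{and}\quad \beta\le 1-c \qquad\text{for every } c\in(a,b).
\]
Letting $c\downarrow a$ and $c\uparrow b$ yields $\alpha\le a$ and $\beta\le 1-b$. Hence
\[
\Pr(E_i\mid h)=\alpha+\beta\le 1-(b-a)\le 1-2^{-k}.
\]
Then $\Pr(E)=\Pr(E_1\cap\dots\cap E_n)$ follows from the chain rule, $\Pr(E)=\prod_i \Pr(E_i\mid E_1,\dots,E_{i-1})$, by averaging the single-step bound over histories lying in $E_1\cap\dots\cap E_{i-1}$.

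\emph{Main obstacle.} The difficulty is justifying exactness conditional on the full history $h$, which includes private state and the events $E_{<i}$. The model only guarantees $X_i\sim F_i$ conditional on the past outputs $X_{<i}$. If I only condition on $X_{<i}=x$, the same computation, summed over histories in $E_{<i}$, gives only
\[
\Pr(E_{\le i}\mid x)\le 1-2^{-k},
\]
without the factor $\Pr(E_{<i}\mid x)$ needed for the product.

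The first fix I would try uses the fact that $G^{(c)}$ at step $i$ may be chosen as a function of $x_{<i}$, and that exogenous randomness is allowed. Let the adversary flip an independent coin: with probability $1/2$ use $F_i$, otherwise use some $G^{(c)}$, so as to preserve the conditional law of $X_i$ given $x$ while reweighting. This lets me isolate the contribution of histories in $E_{<i}$.

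Failing that, I would strengthen the coupling to an inductive statement. Claim: on $E_{<i}$, the joint law of the state and $X_{<i}$ is the same under every choice of $c$ at all earlier steps. Then define $\alpha$ and $\beta$ as probabilities jointly with the event $E_{<i}$, normalized by $\Pr(E_{<i}\mid x)$. Each step then contributes the factor $1-2^{-k}$.
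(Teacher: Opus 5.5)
Your coupling is correct, and so are the bounds $\alpha\le a$, $\beta\le 1-b$ when the conditioning is on something exactness controls (e.g.\ $n=1$). You are also right that conditioning on the full history $h$ is illegitimate. The real problem is worse than a missing justification. For a legitimate exact sampler, the per-step factor $\Pr(E_i\mid E_{<i},X_{<i})$ can exceed $1-2^{-k}$, so no version of fix~1 or fix~2 can work.

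Take $n=2$ and the following sampler:
\begin{itemize}
\item In step~1, draw two fresh bits forming $J\in\{0,1,2,3\}$. If $J\in\{0,3\}$, make a dummy query of precision $K>k$. Then sample $X_1$ by the stateless interval method with fresh bits, and keep $J$ as private state.
\item In step~2, run the interval method using $J$ as the first two bits of $Q_2$.
\end{itemize}
Since $J$ is uniform and independent of $X_1$ (and of $F_2$ given $X_1$), this sampler is exact.

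Let $F_2(0)=1/8$ and $F_2(i)=1$ for $i\ge 1$. On $E_1$ we have $J\in\{1,2\}$, so $Q_2\in[1/4,3/4]$ lies at distance at least $1/8$ from every value of $F_2$. Incidentally, $\Pr(X_2=0\mid J)=0$ here, which confirms that $\Pr(X_i=0\mid h)=c$ fails. The precision is raised from level $r$ only when $\lvert F_2(\varmid)-Q_2\rvert<2^{1-r}$, so every query in step~2 has precision at most $5$. Hence for $5\le k<K$ we get $\Pr(E_2\mid E_1,X_1)=1>1-2^{-k}$. The deficit is paid in step~1, where $\Pr(E_1)\le 1/2$.

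Your fix~2, which normalizes by $\Pr(E_{<i}\mid x)$, produces exactly this conditional factor. Fix~1 cannot help either: exogenous randomness must be independent of the private state given the outputs, so no adversarial mixture can single out $E_{<i}$.

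The missing idea is to drop conditioning on $E_{<i}$ altogether and bound the joint probability of each output sign pattern with a single global coupling.
\begin{itemize}
\item Fix $z\in\{0,1\}^n$. Apply \cref{lemma:oracle-calls} to every $i$ at once to get one fixed sequence $G_1,\dots,G_n$, with $G_i(0)=a_i+\delta$ if $z_i=0$ and $G_i(0)=b_i-\delta$ if $z_i=1$.
\item Couple the run on $(F_1,\dots,F_n)$ with the run on $(G_1,\dots,G_n)$. The transcripts agree on $E$, so $\Pr(E\wedge\tilde X=z)\le\Pr(\tilde Y=z)$, where $\tilde X_i=\mathbf{1}\{X_i>0\}$ and $\tilde Y_i=\mathbf{1}\{Y_i>0\}$.
\item Because the $G_i$ are fixed in advance, exactness conditional on past outputs already forces the $\tilde Y_i$ to be independent with $\Pr(\tilde Y_i=0)=G_i(0)$. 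So $\Pr(\tilde Y=z)$ is a product.
\item Letting $\delta\to 0$ gives $\Pr(E\wedge\tilde X=z)\le\prod_{i:z_i=0}a_i\prod_{i:z_i=1}(1-b_i)$.
\item Summing over $z$ yields $\Pr(E)\le\prod_i(a_i+1-b_i)\le(1-2^{-k})^n$.
\end{itemize}
Your single-step bound is the case $n=1$ of this argument. The product over steps comes from summing over patterns, not from a chain rule over $E_1,\dots,E_n$.
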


\begin{proof}
For each $i$, let $a_i,b_i$ be the values from
\cref{lemma:oracle-calls} applied to $F_i$. Let
$X\defeq (X_1,\ldots,X_n)$ be the outputs of $\mathcal S$ when run on
$(F_1,\ldots,F_n)$, and define the associated Bernoulli variables
$\widetilde X_i \defeq  \mathbf 1_{\set{X_i>0}} $.

Fix any binary sequence $z\defeq (z_1,\ldots,z_n)\in\set{0,1}^n$. Let
$\delta\in(0,2^{-k-1})$, and define
\begin{equation}
    c_i \defeq
    \begin{cases}
        a_i+\delta, & z_i=0,\\
        b_i-\delta, & z_i=1.
    \end{cases}
\end{equation}
Here each $c_i\in(a_i,b_i)$. Therefore,
by \cref{lemma:oracle-calls}, there is a distribution
$G_i$ with $G_i(0)=c_i$ whose CDF oracle agrees with that of
$F_i$ on all queries with precision parameter at most $k$.

Now consider coupled executions of $\mathcal S$ on the oracles
$(F_1,\ldots,F_n)$ and $(G_1,\ldots,G_n)$; let $Y\defeq (Y_1, \dots, Y_n)$
be the outputs in the run on $(G_1,\ldots,G_n)$, and set
$\widetilde Y_i \defeq  \mathbf 1_{\set{Y_i>0}}$.
On the event $E$,  the
oracles for $F_i$ and $G_i$ agree on all queries made by $\mathcal S$. So the two
coupled executions have identical transcripts; in particular,
$E \implies X=Y \implies \widetilde X = \widetilde Y$.
Therefore,
\begin{equation}
\Pr\left(E \wedge \left(\widetilde X=z\right)\right)
    \le \Pr\left(\widetilde Y=z\right)
    =
    \prod_{i:z_i=0} c_i
    \prod_{i:z_i=1} (1-c_i)
    =
    \prod_{i:z_i=0} (a_i+\delta)
    \prod_{i:z_i=1} (1-b_i+\delta).
\end{equation}
Since this inequality holds for every $\delta\in(0,2^{-k-1})$, letting
$\delta \rightarrow 0$ gives
\begin{equation}
\Pr\left(E \wedge \left(\widetilde X = z\right)\right)
    \le
    \prod_{i:z_i=0} a_i
    \prod_{i:z_i=1} (1-b_i).
\end{equation}
Summing over $z\in\set{0,1}^n$, we obtain
\begin{align}
\Pr(E)
    =
    \sum_{z\in\set{0,1}^n} \Pr\left(E\wedge\left(\widetilde X = z\right)\right)
    \le
    \sum_{z\in\set{0,1}^n}
    \prod_{i:z_i=0} a_i
    \prod_{i:z_i=1} (1-b_i)
    = \prod_{i=1}^n \bigl(a_i+1-b_i\bigr) \leq \left(1-2^{-k}\right)^n,
\end{align}
where the final inequality uses $b_i-a_i\ge 2^{-k}$ for every $i$.
\end{proof}

We state our main lower bound result:
\begin{theorem}
\label{thm:space-lower-bound}
The expected overall space usage of any online sampling algorithm
accessing the target distributions via CDF oracles
grows over $n$ iterations as
\begin{equation}
\Omega\left(\log n + \expect{\max_{1 \leq i \leq n} \log X_i} \right).
\end{equation}
\end{theorem}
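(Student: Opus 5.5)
The plan is to prove the two terms of the lower bound separately and then combine them. If the overall space is at least $\Omega(\log n)$ and at least $\Omega(\expect{\max_i \log X_i})$, it is at least half their sum.

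\emph{The $\log X_i$ term.} I would argue this one directly. Each output $X_i$ must be written down (or held in working memory before it is emitted), which takes $\Omega(\log X_i)$ bits. The overall space is the maximum over iterations, so in every run it is at least $\Omega(\max_{1\le i\le n}\log X_i)$. Taking expectations gives $\Omega(\expect{\max_i \log X_i})$. The same reasoning applies to the index at which the algorithm queries the oracle, but the output alone is enough.

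\emph{The $\log n$ term.} Here I would use \cref{lemma:bernoulli-oracle-calls}. Take $F_1=\dots=F_n$ to be any fixed distribution with a CDF oracle; for instance, the Bernoulli distribution with $F(0)=1/2$ works, and the lemma holds for arbitrary distributions anyway. Set $k \defeq \floor{\lg n}-1$, so that $2^{-k}\ge 2/n$. Let $E$ be the event that the algorithm never queries with precision parameter above $k$. The lemma gives
\begin{equation*}
\Pr(E)\le (1-2^{-k})^n \le (1-2/n)^n\le e^{-2}.
\end{equation*}
So with probability at least $1-e^{-2}$, some query has precision parameter $j>k$. Writing down a query $\phi(i,j)$, or reading its answer, which has denominator $2^j$, requires $\Omega(j)$ bits. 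I take this to be part of the space model: an oracle answer with denominator $2^j$ must be stored in $\Omega(j)$ bits. Hence on that event the overall space is $\Omega(\log n)$, and the expected space is at least $(1-e^{-2})\,\Omega(\log n)=\Omega(\log n)$.

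\emph{Main obstacle.} The step I expect to be most delicate is justifying, within the computational model, that a query with precision $j$ costs $\Omega(j)$ space. The precision parameter could in principle be encoded succinctly, for example as $\lg j$ bits specifying $j$. The answer $\phi(i,j)$, however, is a dyadic rational with denominator $2^j$, and in general its numerator has $\Theta(j)$ significant bits. I would handle this by choosing the fixed $F$ so that $F(0)$ has a binary expansion with no long runs of zeros, such as $F(0)=1/3$. Then every oracle answer $\phi(0,j)$ is forced to have numerator of bit length $\Omega(j)$, and the algorithm must hold that value in temporary space. With this choice, both terms follow.
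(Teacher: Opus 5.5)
Your proposal is essentially the paper's proof. The $\log X_i$ term comes from the bit length of the outputs. The $\log n$ term comes from \cref{lemma:bernoulli-oracle-calls} together with the model convention that an oracle answer with denominator $2^j$ occupies $\Omega(j)$ bits. The only cosmetic difference is that you use the single threshold $k=\floor{\lg n}-1$ with a constant-probability bound, whereas the paper sums $\Pr(S\ge k)\ge 1-(1-2^{-k})^n$ over all $k$.

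The extra $F(0)=1/3$ device is unnecessary under that convention, and on its own it would not suffice as written. \Cref{lemma:bernoulli-oracle-calls} only guarantees a high-precision query at \emph{some} index, possibly an $i\ge1$ with $F(i)=1$, whose answer is compressible. To localize the forced query at index $0$, you would need to observe that for a Bernoulli $F$ the modified oracle $\psi$ in \cref{lemma:oracle-calls} agrees with $\phi$ at every $i\ge1$.
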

\begin{proof}
For the dependence on $X_i$, note that the (random) bit length of the output
$X_i$ is $\Theta(\log X_i)$, and so the space usage over $n$ iterations
is at least $\Omega(\expect{\max_{1 \leq i \leq n} \log X_i})$.

For the dependence on $n$, note that by \cref{lemma:bernoulli-oracle-calls},
any such sampling algorithm must make an oracle call with precision above $k$ bits
with probability at least $1 - (1-2^{-k})^n$.
In our model, such a query requires at least $k$ bits of space to represent the answer.
So the maximum space usage $S$ over the $n$ iterations satisfies
\begin{equation}
\Pr(S \geq k) \geq 1 - \left(1-2^{-k}\right)^n
\end{equation}
for each $k \in \Nat$, and
\begin{equation}
\expect{S}
\geq \sum_{k=1}^{\infty} \left( 1 - \left(1-2^{-k}\right)^n \right)
\geq \Omega(\log n). \qedhere
\end{equation}
\end{proof}

\appendix

\section{Algorithm ApproxDivision: Proof of \titlezcref{prop:approx-div}}
\label{sec:approx-div}

\begin{algorithm}[H]
\caption{Approximately dividing two arbitrary-precision integers.}
\label{alg:approx-div}
\begin{algorithmic}[1]
\Require Rational $q = a/b \in [0,1]$ and integer $k \geq 1$
\Ensure Rational number $q' = a'/b'$ such that $\abs{q - q'} < 2^{-k}$ and $0 \le a' \le b' < 2^{k+1}$
\Procedure{ApproxDivision}{$q,k$}
\If{$b < 2^{k+1}$}
    \State \Return $a/b$
\Else
    \State Write $b = (b_{n-1} \dots b_0)_2$, with $b_{n-1} = 1$
    \State Write $a = (a_{n-1} \dots a_0)_2$, padded with leading zeros if necessary
    \State $b' \gets (b_{n-1} \dots b_{n-1-k})_2$
    \State $a' \gets (a_{n-1} \dots a_{n-1-k})_2$
    \State \Return $a'/b'$
\EndIf
\EndProcedure
\end{algorithmic}
\end{algorithm}

The case $b < 2^{k+1}$ is clear.
So suppose $b \geq 2^{k+1}$ and let $s \defeq n-(k+1)$.
Observe that $b' = \floor{b/2^s} \ge 2^k$ and $a' = \floor{a/2^s}$.
Now write
\begin{align}
a = 2^s(a' + \alpha), \quad b = 2^s(b' + \beta),
\end{align}
for some $0 \le \alpha, \beta < 1$.
The bound $\abs{a/b-a'/b'} < 2^{-k}$ follows immediately:
\begin{equation}
\abs*{\frac{a}{b} - \frac{a'}{b'}}
= \abs*{\frac{a' + \alpha}{b' + \beta} - \frac{a'}{b'}}
= \frac{\abs*{b'\alpha - a'\beta}}{b'(b'+\beta)}
< \frac{b'}{b'(b'+\beta)}
< \frac{1}{b'+\beta}
< \frac{1}{b'}
\le \frac{1}{2^k},
\end{equation}
where the first inequality follows from the fact
that $0 \le a' \le b'$ and $0 \le \alpha, \beta < 1$,
so that $\abs*{b'\alpha - a'\beta} < b'$.

By our assumption, the leading $k+1$ bits of $b$ and $a$ lie in a
consecutive block of at most $\ceil{(k+1)/w}+1$ limbs, so the algorithm's
computational cost is $O(\ceil{(k+1)/w}+1) = O(k)$.

\section{Algorithm EstimatePMF: Proof of \titlezcref{prop:estimate-pdf}}
\label{sec:pmf}

\begin{algorithm}[H]
\caption{Estimating the PMF $f$ given an oracle for a CDF $F$.}
\label{alg:estimate-pdf}
\begin{algorithmic}[1]
\Require Oracle for a CDF $F$; integer $x \in \Nat$; integer $\nu \geq 1$
\Ensure Estimate $\hat p$ of the PMF
    $f(x)$ such that
    $\hat p\le f(x) \le (1+1/\nu) \hat p$
\Procedure{EstimatePMF}{$F, x, \nu$}
\For{$i = 0, 1, 2, \dots$}                       \label{algline:estimate-pdf-eps}
\State $t \gets \Fl{x}{i} - \Fu{x-1}{i}$    \label{algline:estimate-pdf-fhat}
\If{$2^{-i+1} \nu \leq t$}
    \State \Return $t$                                \label{algline:estimate-pdf-retval}
\EndIf
\EndFor
\EndProcedure
\end{algorithmic}
\end{algorithm}

Let $\hat p = t$ be the return value (\cref{algline:estimate-pdf-retval})
at the final iteration $i$.
We first show that $\hat p \le p \le (1+1/\nu) \hat p$.
From the oracle bound, each iteration $i$ satisfies
\begin{align}
p = F(x) - F(x-1) \ge \Fl{x}{i} - \Fu{x-1}{i} = t = \hat p,
\end{align}
which confirms the lower bound. Also, at the final iteration we have
\begin{align}
p &
\le \Fu{x}{i} - \Fl{x-1}{i}
= t + 2^{-i+1} \label{eq:ti-eq} \le (1 + 1/\nu) t,
\end{align}
where the last inequality follows from the halting condition $t \geq 2^{-i+1} \nu$.

We next establish the runtime.
At iteration
$i = i_{*} \defeq  \ceil{2 + \lg(\nu/p)}$, we have
\begin{align}
t \cdot 2^{i}
&=  2^{i} \left( \Fl{x}{i} - \Fu{x-1}{i} \right)
  \geq 2^{i} (p - 2^{-i+1})= 2^{i} p - 2
  \ge 4 \nu - 2
  \geq 2 \nu
\end{align}
and so the algorithm terminates at this iteration or earlier.
In each iteration, the oracle is invoked with inputs
$(x, i+1)$ and $(x-1, i+1)$, where $i \leq i_* \leq \lg(8 \nu/p)$.
These use oracle inputs $k \leq \lceil \lg(8 \nu/p) \rceil + 1$, as
required by the definition of $\CF{x}{8 \nu/p}$.
The arithmetic costs are also less than a constant multiple of  $\mathcal C_F$.

\section{Technical Lemmas}
\label{sec:lemmas}

\begin{lemma}
\label{lemma:ff}
Let $h: \Real_{\geq 0}^2 \rightarrow \Real_{\geq 0}$
be any function which is nondecreasing
in both arguments and for which there exist values $C, k \geq 1$ with
\begin{align}
\label{eq:ff33}
\frac{h(x,y)}{h(x',y')} \leq C \left( \frac{\log x \log y}{\log x' \log y'} \right)^k
&& (0 \leq x' \leq x, 0 \le y' \leq y)
\end{align}

Let $a_0, a_1, \dots$ be nonnegative real numbers with
$\sum_i a_i = 1$. Then
\begin{equation}
\sum_{i=0}^{\infty} a_i h ( i, 1/a_i )
\leq c  \sum_{i=0}^{\infty} a_i h( i, i)
\end{equation}
for some factor $c$ that depends on $k$ and $C$.
\end{lemma}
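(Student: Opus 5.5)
Split the sum by comparing $1/a_i$ with a polynomial in $i$. Put $y_i \defeq \max\set{i,e}^{2}$ and say index $i$ is \emph{good} if $1/a_i \le y_i$, and \emph{bad} otherwise.

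For a good index, monotonicity in the second argument gives $h(i,1/a_i) \le h(i,y_i)$. Condition~\cref{eq:ff33} with $x=x'=i$ gives
\[
h(i,y_i) \le C\,\bigl(\log y_i/\log i\bigr)^k h(i,i),
\]
and $\log y_i \le 2\log i + O(1) = O(\log i)$ because $\log \ge 1$. So good indices contribute at most $O(C\,2^k)\sum_i a_i h(i,i)$.

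For a bad index we have $a_i < 1/y_i$, so $a_i$ is tiny and we must exploit the decay $a_i h(i,1/a_i)$ as $a_i\to 0$. Apply \cref{eq:ff33} with $x'=x=i$, $y'=i$, $y=1/a_i$:
\[
h(i,1/a_i) \le C\bigl(\log(1/a_i)/\log i\bigr)^k h(i,i) \le C\,\log^k(1/a_i)\,h(i,i).
\]
This does not yet compare with $\sum a_j h(j,j)$, since bad terms may carry little mass. The plan is to compare each bad term against a fixed reference term instead. Let $i_0$ be an index with $a_{i_0}$ bounded below, e.g.\ maximal $a_{i_0}$; mass concentrates somewhere, so $\sum_j a_j h(j,j) \ge a_{i_0}h(i_0,i_0)$, and one can also use $\sum_j a_j h(j,j)\ge h(0,0)\cdot 1$ by monotonicity (assuming $h(0,0)>0$; otherwise treat the degenerate case via \cref{eq:ff33}). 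Then bound $h(i,i)\le C(\log^2 i)^k h(0,0)$, using \cref{eq:ff33} with $x'=y'=0$; recall $\log 0 = 1$. This gives
\[
a_i h(i,1/a_i) \le C^2 a_i \log^k(1/a_i)\log^{2k}(i)\,h(0,0).
\]
Since $a_i<1/\max\set{i,e}^2$ and $t\mapsto t\log^k(1/t)$ is increasing for small $t$ (for $t\le e^{-k}$; finitely many $i$ violate this and are absorbed into constants), we get $a_i\log^k(1/a_i)\le i^{-2}\log^k(i^2)$ for large $i$. So the bad contribution is at most
\[
C^2 h(0,0)\sum_i i^{-2}\,2^k\log^{3k} i = O_{C,k}\bigl(h(0,0)\bigr)= O\Bigl(\sum_i a_i h(i,i)\Bigr).
\]

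The main obstacle is exactly this bad-index step: the per-term factor $\log^k(1/a_i)$ is unbounded, so no termwise comparison works, and we need the global summability from $a_i<i^{-2}$ plus a lower bound on $\sum a_i h(i,i)$ by $h(0,0)$. I would also check the small-$i$ and $t>e^{-k}$ edge cases carefully, handling them by raising the threshold (e.g.\ $y_i=\max\set{i,e^{k}}^2$) so that monotonicity of $t\log^k(1/t)$ applies to every bad index. Adding the good and bad contributions yields the claim, with $c$ depending only on $C$ and $k$.
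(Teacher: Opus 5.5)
Your proposal is correct and is essentially the paper's proof. You split indices by whether $a_i$ exceeds an inverse-square threshold in $i$. You handle the heavy indices by monotonicity plus the growth hypothesis at fixed first argument $x=x'=i$. You handle the light ones by comparing with $h(0,0)$, using that $t\log^k(1/t)$ is increasing for small $t$ and that $i^{-2}\log^{O(k)} i$ is summable, and you finish with $h(0,0)\le\sum_i a_i h(i,i)$. Your threshold $\max\{i,e^k\}^2$ simply absorbs the paper's separate small-index set $A_1$. The caveat about $h(0,0)>0$ is unnecessary, since that final inequality holds regardless.
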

\begin{proof}
Throughout this proof, we regard $C$ and $k$ as constants; all asymptotic
notations may depend on these parameters.
Thus,  applying \cref{eq:ff33}
with $x' = y' = 0$ and any values $x, y \geq 0$ gives
\begin{equation}
\label{eq:ff34}
h(x,y) \leq O\!\left(h(0,0)\log^k x \log^k y\right).
\end{equation}

Partition $\Nat = A_1 \cup A_2 \cup A_3$, where
\begin{align}
A_1 = \set*{ i : i \leq e^{k} }, \qquad A_2  = \set*{i: i > e^{k}, a_i \geq (i+2)^{-2} }, \qquad A_3 = \set*{ i: i > e^{k}, a_i < (i+2)^{-2} }.
\end{align}

Clearly, $\abs{A_1} \leq \ceil{e^{k}} = O(1)$ and $i \leq e^k \leq O(1)$ for $i \in A_1$.
Using \cref{eq:ff34}, the contribution of $A_1$ is
\begin{align}
\label{eq:A1-bound}
\sum_{i \in A_1} a_i h(i, 1/a_i)
\leq O \Bigl( h(0,0)\sum_{i \in A_1} a_i \log^k i \log^{k}(1/a_i) \Bigr)
\leq O \Bigl( h(0,0)\sum_{i \in A_1} a_i \log^{k}(1/a_i) \Bigr).
\end{align}

The function $x \mapsto x \log^k(1/x)$ is bounded above
by $\max\set{1,(k/e)^k}=O(1)$ on $[0,1]$.
Hence, \cref{eq:A1-bound} is at most
\begin{equation}
O(h(0,0)) \sum_{i \in A_1}1 = O( h(0,0)) |A_1|
=O\!\left(h(0,0)\right).
\end{equation}

Next consider the contribution of $A_2$. Here, since $h$ is nondecreasing
in its second argument, we calculate
\begin{align}
\label{eq:nom1}
\sum_{i \in A_2} a_i h(i, 1/a_i)
\leq \sum_{i \in A_2} a_i h\left(i, (i+2)^2\right)
\end{align}

Applying \cref{eq:ff33} with $x = x' = i$, $y = (i+2)^2$, and $y' = i-2$, the
right-hand side of \cref{eq:nom1} is at most
\begin{equation}
O \Bigl( \sum_{i \in A_2} a_i h(i, i) \Bigr)
\leq O \Bigl( \sum_{i=0}^{\infty} a_i h(i, i) \Bigr).
\end{equation}
Note here that $i - 2> e^k - 2 > 0$ since $i \in A_2$.

Finally, for the contribution to $A_3$, we calculate
\begin{align}
\sum_{i \in A_3} a_i h(i, 1/a_i)
\leq O \Bigl(
    h(0,0)\sum_{i \in A_3}
    a_i \log^k i\log^{k}(1/a_i)
    \Bigr).
\end{align}

Consider again the function $x \mapsto  x \log^k(1/x)$. For $x < e^{-k}$, it is increasing in $x$. Here, $a_i < e^{-k}$ and $a_i < (i+2)^{-2} \leq (e^k + 2)^{-2}$ by definition of $A_3$, so
\begin{equation}
a_i \log^k(1/a_i) \leq (i+2)^{-2} \log^k\left((i+2)^2\right),
\end{equation}
and hence
\begin{align}
\sum_{i \in A_3} a_i h(i, 1/a_i)
\leq O \Bigl(
    h(0,0)\sum_{i \in A_3}
    (i+2)^{-2}\log^{2k}(i+2)
    \Bigr)
=O\!\left(h(0,0)\right).
\end{align}

Because $h$ is nondecreasing and $\sum_i a_i=1$, we have
$h(0,0) \leq \sum_{i=0}^{\infty}a_i h(i,i)$. So putting the three cases together, we have
\begin{equation}
\sum_{i=0}^{\infty} a_i h(i, 1/a_i)
\leq
O \Bigl(
    \sum_{i=0}^{\infty}a_i h(i,i)
    \Bigr).
\qedhere
\end{equation}
\end{proof}

\begin{corollary}
\label{lemma:entropy-leq-ElogX}
For any random variable $X$ over $\Nat$ with PMF $f$, the Shannon entropy satisfies
\begin{equation}
H(X) \defeq \expect{\lg(1/f(X))} \leq O(\expect{\log X}).
\qedhere
\end{equation}
\end{corollary}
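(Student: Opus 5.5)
We derive the corollary from \cref{lemma:ff}, choosing $h(x,y) \defeq \log y$. Under this choice the left-hand side $\sum_i a_i h(i,1/a_i)$ with $a_i \defeq f(i)$ equals $\sum_i f(i)\log(1/f(i))$. Since $\log y = \ln\max\set{y,e}$, we have $\lg(1/f(i)) \le \lg e \cdot \log(1/f(i))$, so this sum is at least a constant multiple of the Shannon entropy $H(X)$. The right-hand side becomes $\sum_i f(i)\, h(i,i) = \expect{\log X}$, which is exactly the bound we want.

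The main work is checking the hypotheses of \cref{lemma:ff} for $h(x,y) = \log y$. The function is nonnegative, constant in $x$, and nondecreasing in $y$. For the ratio condition \cref{eq:ff33}, we need, for $x' \le x$ and $y' \le y$,
\[
\frac{\log y}{\log y'} \le C\left(\frac{\log x \log y}{\log x' \log y'}\right)^k .
\]
We take $C = k = 1$. The inequality then reduces to $\log x / \log x' \ge 1$, which holds because $\log$ is nondecreasing and $x' \le x$. Here the convention $\log(\cdot) \ge 1$ matters: every logarithm is at least $1$, so no ratio involves division by zero and no term vanishes.

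The lemma therefore gives $\sum_i f(i)\log(1/f(i)) \le c\sum_i f(i)\log i = c\,\expect{\log X}$ with $c$ an absolute constant. Multiplying by $\lg e$ gives $H(X) \le O(\expect{\log X})$.

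I expect the only delicate point to be terms with $f(i) = 0$, where $h(i,1/a_i)$ is formally $\log \infty$. Under the convention $0 \cdot \log(1/0) = 0$ these terms contribute nothing to either side, and the proof of \cref{lemma:ff} handles them in this way: in $A_3$, the bound on $x\log^k(1/x)$ extends continuously to $x = 0$. Alternatively, we can restrict all sums to the support of $f$ before applying the lemma. The inequality also holds trivially when $\expect{\log X} = \infty$.
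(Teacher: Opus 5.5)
Your proposal is correct and is the paper's proof: the paper also applies \cref{lemma:ff} with $h(x,y)=\log y$ and $k=1$ to get $\expect{\log(1/f(X))} \le O(\expect{\log X})$. Your check of the ratio condition with $C=k=1$, the conversion $\lg(1/f(i)) \le \lg e \cdot \log(1/f(i))$, and the convention $0\cdot\log(1/0)=0$ for zero-probability terms only make explicit details the paper leaves implicit.
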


\begin{proof}
Apply \cref{lemma:ff} with function $h(x,y) = \log y$ and $k = 1$;
here $\expect{\log(1/f(X))} = \sum_i a_i h(i, 1/a_i)$ for $a_i = f(i)$.
So $\expect{\log(1/f(X))} \leq O( \expect{ h(X, X) }) = O(\expect{\log X})$.
\end{proof}

\begin{lemma}
\label{eq:taillem}
For any random variable $X$ supported over $\Nat$ with PMF $f$ and mean $\mu = \expect{X}$, there holds
\begin{equation}
\Pr \left(  \ln(1/f(X)) \geq t + 3 \ln (\mu + 3) \right) \leq e^{-t/3}.
\end{equation}
\end{lemma}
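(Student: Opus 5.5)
We bound the probability mass $P \defeq \Pr(\ln(1/f(X)) \geq \tau)$ for the threshold $\tau \defeq t + 3\ln(\mu+3)$. The event is $\set{f(X) \leq e^{-\tau}}$. The plan is to split according to whether $X$ is small or large relative to a cutoff $N$: small values with tiny mass contribute little because there are few of them, and large values are rare by Markov's inequality.

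\textbf{Step 1 (split).} Fix a cutoff $N \geq 1$, to be chosen. Then
\begin{equation}
P \leq \sum_{x < N,\ f(x) \leq e^{-\tau}} f(x) + \Pr(X \geq N) \leq N e^{-\tau} + \frac{\mu}{N}.
\end{equation}
The first term uses that there are at most $N$ indices below $N$, each with mass at most $e^{-\tau}$. The second term is Markov's inequality. (If $\mu = 0$ then $X = 0$ almost surely, $f(0) = 1$, and the event is empty for $\tau > 0$, so we may assume $\mu > 0$ where convenient.)

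\textbf{Step 2 (choose $N$).} Balancing the two terms suggests taking $N \approx \sqrt{\mu}\, e^{\tau/2}$, which gives $P \lesssim 2\sqrt{\mu}\, e^{-\tau/2}$. Substituting $\tau = t + 3\ln(\mu+3)$ yields
\begin{equation}
P \lesssim 2\sqrt{\mu}\,(\mu+3)^{-3/2} e^{-t/2} \leq 2(\mu+3)^{-1} e^{-t/2} \leq \tfrac{2}{3} e^{-t/2} \leq e^{-t/3}.
\end{equation}
So the bound has room to spare, and the constant $3$ in front of $\ln(\mu+3)$ absorbs all slack.

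\textbf{Main obstacle (integrality and edge cases).} The only delicate point is that $N$ must be a positive integer. We take $N \defeq \ceil{\sqrt{\mu}\, e^{\tau/2}}$, which costs at most a factor of $2$ in the first term, since $N \leq 2\sqrt{\mu}\,e^{\tau/2}$ whenever $\sqrt{\mu}\,e^{\tau/2} \geq 1$. This condition fails only when $\mu$ is tiny. In that regime we instead take $N = 1$, which gives $P \leq e^{-\tau} + \mu \leq 27^{-1} e^{-t} + \mu$. This is not obviously at most $e^{-t/3}$ for large $t$ and tiny $\mu > 0$, so it needs a separate argument.

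\textbf{Small-$\mu$ case.} Write $f(0) = 1 - q$, where $q = \Pr(X \geq 1) \leq \mu$. First, $0$ belongs to the event only if $1 - q \leq e^{-\tau}$, which forces $q \geq 1 - e^{-\tau}$ and hence $\mu \geq 1 - e^{-\tau}$. For $\tau \geq 1$ this means $\mu$ is not small, so in the small-$\mu$ regime we may use $N = 1$ without charging the index $0$ at all. Second, the remaining contribution is $\sum_{x \geq 1,\ f(x) \leq e^{-\tau}} f(x) \leq \sum_{x \geq 1} \min\set{f(x), e^{-\tau}}$. We bound this by splitting at a cutoff $N'$ and using Markov's inequality for the tail, exactly as in Step 1, which gives $N' e^{-\tau} + \mu/N'$ with $N' = \ceil{\sqrt{\mu}\,e^{\tau/2}}$. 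This reduces to the same calculation as Step 2. Finally, for $\tau < 1$ (i.e., $t$ very negative or small), the claim is trivial because $e^{-t/3} \geq 1$ when $t \leq 0$, and the remaining range of $t$ is handled by the computation above.
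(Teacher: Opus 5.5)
Your route differs from the paper's and its main case is sound, but the small-$\mu$ paragraph does not close the case it was written for. In that regime ($0<\sqrt{\mu}\,e^{\tau/2}<1$) your cutoff is $N'=\lceil\sqrt{\mu}\,e^{\tau/2}\rceil=1$. So $N'e^{-\tau}+\mu/N'$ is again $e^{-\tau}+\mu$, which is exactly the bound you had just called insufficient. The Step~2 computation does not apply either, because it relies on $N'\le 2\sqrt{\mu}\,e^{\tau/2}$, which fails here. Excluding the index $0$ removes only the harmless $e^{-\tau}$ term, not the $\mu$ term you were worried about.

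The fix is immediate and makes the whole paragraph unnecessary. The regime is \emph{defined} by $\mu<e^{-\tau}$, so the $N=1$ bound already gives $P\le e^{-\tau}+\mu<2e^{-\tau}=2(\mu+3)^{-3}e^{-t}\le\tfrac{2}{27}e^{-t}\le e^{-t/3}$ for $t\ge 0$. A smaller slip: in the main case the ceiling yields $Ne^{-\tau}+\mu/N\le 3\sqrt{\mu}\,e^{-\tau/2}$, not $2\sqrt{\mu}\,e^{-\tau/2}$. Hence $P\le 3(\mu+3)^{-1}e^{-t/2}\le e^{-t/2}$, which still suffices for $t\ge 0$, and $t<0$ is trivial.

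The paper instead applies Markov's inequality to $f(X)^{-1/3}$. It bounds $\expect{f(X)^{-1/3}}=\sum_i f(i)^{2/3}$ by splitting indices according to whether $f(i)<(i+1)^{-3}$ or not. The first group contributes at most $\sum_i (i+1)^{-2}$ and the second at most $\sum_i f(i)(i+1)=1+\mu$, so $\expect{f(X)^{-1/3}}\le \pi^2/6+1+\mu\le \mu+3$. That single moment bound covers all $t$ and $\mu$ at once, with no integrality or small-mean cases, and the exponent $1/3$ and the shift $3\ln(\mu+3)$ come straight from it.

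Your argument bounds the tail directly. Below the cutoff there are at most $N$ atoms of mass at most $\delta=e^{-\tau}$, and above it you use Markov on $X$. After optimizing $N$ this gives $\Pr(f(X)\le\delta)\le 3\sqrt{\mu\delta}+2\delta$, i.e.\ decay $e^{-t/2}$ instead of $e^{-t/3}$. So your route proves something stronger than the lemma, at the cost of the cutoff bookkeeping.
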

\begin{proof}
Let $Y \defeq 1/f(X)$.
Exponentiating and applying Markov's inequality gives
\begin{equation}
\Pr \bigl( \ln Y \geq t + 3 \ln (\mu + 3) \bigr)
= \Pr \bigl( Y^{1/3} \geq e^{t/3} (\mu + 3) \bigr)
\leq e^{-t/3} (\mu+3)^{-1} \expect{Y^{1/3}}.
\end{equation}

Let $a_i \defeq f(i)$. We bound the expectation of $Y^{1/3}$ as
\begin{align}
\expect{Y^{1/3}} &= \sum_{i=0}^{\infty} a_i \cdot a_i^{-1/3}
= \sum_{i: a_i < (i+1)^{-3}} a_i \cdot a_i^{-1/3} + \sum_{i: a_i \geq (i+1)^{-3}} a_i \cdot a_i^{-1/3} \\
&\leq \sum_{i=0}^{\infty} (i+1)^{-2} + \sum_{i=1}^{\infty} a_i \cdot (i+1) \\
&\leq \pi^2/6 + 1 + \mu.
\end{align}

Plugging in the estimate gives
\begin{equation}
\Pr\left( \ln Y \geq t + 3 \ln (\mu + 3) \right)
\leq e^{-t/3} (\mu+3)^{-1} (\pi^2/6 + 1 + \mu)
\leq e^{-t/3}. \qedhere
\end{equation}
\end{proof}

\section*{Acknowledgments}
This material is based upon work supported by the National Science Foundation
under Award No.~2311983.
Any opinions, findings and conclusions or recommendations expressed in this
material are those of the authors and do not necessarily reflect the views
of the National Science Foundation.

\printbibliography

\end{document}